\documentclass[11pt]{amsart}

\usepackage[bookmarks=false]{hyperref}
\usepackage{float}
\usepackage[T1]{fontenc}
\usepackage[utf8]{inputenc}
\usepackage{amsmath,amssymb,stmaryrd,xspace,pdfsync,mathrsfs,graphicx,refcount,verbatim}
\usepackage{microtype}
\usepackage{tikz}
\usepackage[mathscr]{euscript}
\usetikzlibrary{decorations.text,arrows,snakes,shapes,fit,shadows,calc,patterns,intersections}
\usepackage{makecell}

\usepackage{fullpage}

\usepackage{tikz}
\usepackage{enumitem}
\usepackage{marvosym}

\usetikzlibrary{decorations.text,arrows,snakes,shapes,fit,shadows,calc,patterns,intersections}

\setlist[1]{itemsep=.5ex,leftmargin=*,topsep=0.5ex,itemsep=.3ex}
\setlist[1,itemize]{label=--}

\newcommand{\efgame}{Ehrenfeucht-Fra\"iss\'e\xspace}
\newcommand\nat{\ensuremath{\mathbb{N}}\xspace}

\newcommand\Cs{\ensuremath{\mathcal{C}}\xspace}

\newcommand\Cslev[1]{\ensuremath{\Cs_{#1}}\xspace}

\newcommand\Ss{\ensuremath{\mathcal{S}}\xspace}
\newcommand\Fs{\ensuremath{\mathcal{F}}\xspace}

\newcommand\Ts{\ensuremath{\mathcal{T}}\xspace}

\newcommand\Ms{\ensuremath{\mathcal{M}}\xspace}

\newcommand\Sat{\ensuremath{\mathord{\mathrm{Calc}}}}

\newcommand{\sic}[1]{\ensuremath{\Sigma_{#1}}\xspace}

\newcommand{\pic}[1]{\ensuremath{\Pi_{#1}}\xspace}

\newcommand{\bsc}[1]{\ensuremath{\mathcal{B}\Sigma_{#1}}\xspace}

\newcommand{\sicu}{\ensuremath{\Sigma_{1}}\xspace}

\newcommand{\picu}{\ensuremath{\Pi_{1}}\xspace}

\newcommand{\bscu}{\ensuremath{\mathcal{B}\Sigma_{1}}\xspace}

\newcommand{\sicd}{\ensuremath{\Sigma_{2}}\xspace}

\newcommand{\siwsd}{\ensuremath{\Sigma_{2}(<,+1)}\xspace}

\newcommand{\picd}{\ensuremath{\Pi_{2}}\xspace}

\newcommand{\bscd}{\ensuremath{\mathcal{B}\Sigma_{2}}\xspace}

\newcommand{\bswsd}{\ensuremath{\mathcal{B}\Sigma_{2}(<,+1)}\xspace}

\newcommand{\sict}{\ensuremath{\Sigma_{3}}\xspace}

\newcommand{\siwst}{\ensuremath{\Sigma_{3}(<,+1)}\xspace}

\newcommand{\pict}{\ensuremath{\Pi_{3}}\xspace}

\newcommand{\bsct}{\ensuremath{\mathcal{B}\Sigma_{3}}\xspace}

\newcommand{\sici}{\ensuremath{\Sigma_{\lowercase{i}}}\xspace}

\newcommand{\pici}{\ensuremath{\Pi_{\lowercase{i}}}\xspace}

\newcommand{\bsci}{\ensuremath{\mathcal{B}\Sigma_{\lowercase{i}}}\xspace}

\newcommand{\fo}{\ensuremath{\textup{FO}}\xspace}

\newcommand\sieq[2]{\ensuremath{\lesssim^{#1}_{#2}}\xspace}
\newcommand\ksieq[1]{\sieq{k}{#1}}

\newcommand\bceq[2]{\ensuremath{\cong^{#1}_{#2}}\xspace}
\newcommand\kbceq[1]{\bceq{k}{#1}}

\let\leq\leqslant

\let\geq\geqslant

\newcommand\content[1]{\ensuremath{\contentmorphism(#1)}}

\newcommand\contentmorphism{\ensuremath{\textsf{alph}}}

\newcommand\iword{\ensuremath{\omega}-word\xspace}

\newcommand\ilang{\ensuremath{\omega}-language\xspace}
\newcommand\iLang{\ensuremath{\omega}-Language\xspace}
\newcommand\iwords{\ensuremath{\omega}-words\xspace}
\newcommand\iWords{\ensuremath{\omega}-Words\xspace}
\newcommand\ilangs{\ensuremath{\omega}-languages\xspace}
\newcommand\iLangs{\ensuremath{\omega}-Languages\xspace}

\newcommand\iSemi{\ensuremath{\omega}-Semigroup\xspace}
\newcommand\iSemis{\ensuremath{\omega}-Semigroups\xspace}
\newcommand\isemi{\ensuremath{\omega}-semigroup\xspace}
\newcommand\isemis{\ensuremath{\omega}-semigroups\xspace}

\newcommand\chain{chain\xspace}
\newcommand\chains{chains\xspace}

\newcommand\qchain[1]{\ensuremath{\sic{#1}}-chain\xspace}
\newcommand\qchains[1]{\ensuremath{\sic{#1}}-chains\xspace}

\newcommand\Chains{Chains\xspace}
\newcommand\qChains[1]{\ensuremath{\sic{#1}}-Chains\xspace}

\newcommand\ichain{\qchain{\lowercase{i}}}

\newcommand\dchain{\qchain{2}}

\newcommand\ichains{\qchains{\lowercase{i}}}

\newcommand\dchains{\qchains{2}}
\newcommand\tchains{\qchains{3}}

\newcommand\iChains{\qChains{\lowercase{i}}}

\newcommand\qjuns[1]{\ensuremath{\sic{#1}}-junctures\xspace}

\newcommand\djuns{\qjuns{2}}

\renewcommand\bar\overline

\tikzstyle{nor}=[minimum size=0.35cm,draw,rounded rectangle,inner sep=2pt]
\tikzstyle{nod}=[minimum size=0.35cm,draw,circle,inner sep=2pt]
\tikzstyle{nok}=[minimum size=0.45cm,draw,circle,inner sep=0pt]
\tikzstyle{nof}=[minimum size=0.35cm,draw,circle,double,double
distance=1pt]
\tikzstyle{port}=[minimum size=0.35cm,draw,thick,rectangle,inner sep=2pt]
\tikzstyle{nop}=[minimum size=0.35cm,draw,thick,rectangle,inner sep=1pt,rotate=90]

\tikzstyle{nol}=[minimum size=0.35cm,draw,rounded rectangle,inner sep=1pt,rotate=90]

\tikzstyle{ar}=[line width=0.5pt,->,double]
\tikzstyle{siar}=[line width=1.5pt,->]
\tikzstyle{sars}=[line width=1pt,->,double]

\newtheorem{theorem}{Theorem}[section]
\newtheorem{lemma}[theorem]{Lemma}
\newtheorem{corollary}[theorem]{Corollary}

\newtheorem{proposition}[theorem]{Proposition}
\newtheorem{fact}[theorem]{Fact}
\newtheorem{remark}[theorem]{Remark}

\title{Quantifier Alternation for Infinite Words}

\author{Théo~Pierron, Thomas~Place and Marc~Zeitoun}
\address{Univ. Bordeaux, LaBRI, UMR 5800, F-33400 Talence, France}

\begin{document}

\begin{abstract}
  We investigate the expressive power of quantifier alternation hierarchy of
  first-order logic over words. This hierarchy includes the classes \sici
  (sentences having at most $i$ blocks of quantifiers starting with an
  $\exists$) and \bsci (Boolean combinations of \sici sentences). So far, this
  expressive power has been effectively characterized  for the lower levels only.
  Recently, a breakthrough was made over finite words, and decidable
  characterizations were obtained for \bscd and \sict, by relying on a
  decision problem called separation, and solving it for \sicd.

  The contribution of this paper is a generalization of these results to the
  setting of infinite words: we solve separation for \sicd  and \sict, and
  obtain decidable characterizations of \bscd and \sict as consequences.
\end{abstract}

\maketitle

\section{Introduction}
\label{sec:intro}
Regular word languages form a robust class, as they can be defined either by
operational, algebraic, or logical means: they are exactly those that can be
defined equivalently by finite state machines (operational view), morphisms
into finite algebras (algebraic view) and monadic second order (MSO)
sentences~\cite{BuchiMSO,TrakhMSO,ElgotMSO,BuchiMSOInf} (logical view). To
understand in depth the structure of this class, it is natural to classify
its languages according to their descriptive complexity. The problem is to
determine how complicated has to be a sentence to describe a given input
language. This is a decision problem parametrized by a fragment
of MSO: given an input language, can it be expressed in the fragment? This
problem is called \emph{membership} (is the language a \emph{member}
of the class defined by the fragment?).

The seminal result in this field is the membership algorithm for
first-order logic (\fo) over finite words, which is arguably the
most prominent fragment of~MSO. This algorithm was obtained in two
steps. McNaughton and Papert~\cite{mnpfo} observed that the languages
definable in \fo are exactly the \emph{star-free languages}: those
that may be expressed by a regular expression in which complement is
allowed while the Kleene star is disallowed. Furthermore, an earlier
result of Schützenberger~\cite{sfo} shows that star-free languages
are exactly the ones whose syntactic monoid is aperiodic. The syntactic
monoid is a finite algebra that can be computed from any input regular
language, and aperiodicity can be formulated as an equation that has to
be satisfied by all elements of this algebra. Therefore,
Schützenberger's result makes it possible to decide whether a regular
language is star-free (and therefore definable in \fo by
McNaughton-Papert's result).

Following this first result, the attention turned to a deeper
question: given an \fo-definable language, find the ``simplest''
\fo-sentences that define it. The standard complexity measure for \fo
sentences is their quantifier alternation, which counts the number of
switches between blocks of $\exists$ and $\forall$~quantifiers. This
measure is justified not only because it is intuitively difficult to
understand a sentence with many alternations, but also because the
nonelementary complexity of standard problems for
\fo~\cite{StockmeyerMeyer} (\emph{e.g}, satisfiability) is tied to
quantifier alternation. In~summary, we classify \fo definable
languages by counting the number of quantifier alternations needed to
define them and we want to be able to decide the level of a given
language (which amounts to solving membership for each~level).

This leads to define the following fragments of \fo: an \fo sentence is \sici
if its prenex normal form has at most $i$ blocks of $\exists$ or $\forall$ quantifiers and
starts with a block of existential ones. Note that \sici is not closed
under complement (the negation of a \sici sentence is called a \pici sentence).
A sentence is \bsci if it is a Boolean combination of \sici sentences. Clearly,
we have $\sici\subseteq\bsci\subseteq\sic{i+1}$, and these inclusions are
known to be strict~\cite{BroKnaStrict,ThomStrict}:
$\sici\subsetneq\bsci\subsetneq\sic{i+1}$. See the figure.

\tikzstyle{non}=[inner sep=1pt]
\tikzstyle{tag}=[draw,fill=white,circle,inner sep=1pt]
\begin{figure}[H]
  \centering
  \begin{tikzpicture}[scale=.9]
    \node[non] (s1) at (0.0,-0.6) {\sicu};
    \node[non] (p1) at (0.0,0.6) {\picu};
    \node[non] (b1) at (2.0,0.0) {\bscu};

    \node[non] (s2) at ($(b1)+(2.0,-0.6)$) {\sicd};
    \node[non] (p2) at ($(b1)+(2.0,0.6)$) {\picd};
    \node[non] (b2) at ($(b1)+(4.0,0.0)$) {\bscd};

    \node[non] (s3) at ($(b2)+(2.0,-0.6)$) {\sict};
    \node[non] (p3) at ($(b2)+(2.0,0.6)$) {\pict};
    \node[non] (b3) at ($(b2)+(4.0,0.0)$) {\bsct};

    \node[non] (s4) at ($(b3)+(2.0,-0.6)$) {\sic{4}};
    \node[non] (p4) at ($(b3)+(2.0,0.6)$) {\pic{4}};

    \draw[thick] (s1.east) to [out=0,in=180] node[tag] {\scriptsize
      $\subsetneq$} (b1.190);
    \draw[thick] (p1.east) to [out=0,in=180] node[tag] {\scriptsize
      $\subsetneq$} (b1.170);

    \draw[thick] (b1.-10) to [out=0,in=180] node[tag] {\scriptsize
      $\subsetneq$} (s2.west);
    \draw[thick] (b1.10) to [out=0,in=180] node[tag] {\scriptsize
      $\subsetneq$} (p2.west);
    \draw[thick] (s2.east) to [out=0,in=180] node[tag] {\scriptsize
      $\subsetneq$} (b2.190);
    \draw[thick] (p2.east) to [out=0,in=180] node[tag] {\scriptsize
      $\subsetneq$} (b2.170);

    \draw[thick] (b2.-10) to [out=0,in=180] node[tag] {\scriptsize
      $\subsetneq$} (s3.west);
    \draw[thick] (b2.10) to [out=0,in=180] node[tag] {\scriptsize
      $\subsetneq$} (p3.west);
    \draw[thick] (s3.east) to [out=0,in=180] node[tag] {\scriptsize
      $\subsetneq$} (b3.190);
    \draw[thick] (p3.east) to [out=0,in=180] node[tag] {\scriptsize
      $\subsetneq$} (b3.170);

    \draw[thick] (b3.-10) to [out=0,in=180] node[tag] {\scriptsize
      $\subsetneq$} (s4.west);
    \draw[thick] (b3.10) to [out=0,in=-180] node[tag] {\scriptsize
      $\subsetneq$} (p4.west);

    \draw[thick,dotted] ($(s4.east)+(0.1,0.0)$) to
    ($(s4.east)+(0.6,0.0)$);
    \draw[thick,dotted] ($(p4.east)+(0.1,0.0)$) to
    ($(p4.east)+(0.6,0.0)$);
  \end{tikzpicture}
\end{figure}

Solving membership for levels of this hierarchy is a
longstanding open problem. Following Schützenberger's approach, it was first
investigated for languages of finite words. However, the question also
makes sense for more complex structures, in particular for the most natural
extension, infinite words. Schützenberger's result was first generalized to
infinite words by Perrin~\cite{pfo}, and a suitable algebraic framework for
languages of infinite words was set up by Wilke~\cite{womega}. Since a
regular language of infinite words is determined by regular languages
of finite words, finding a membership algorithm for languages of infinite
words does not usually require to start over. Instead these algorithms
are obtained by building on top of the algorithms for finite words, adding new
arguments, specific to infinite~words.

Regarding the hierarchy, membership is easily seen to be decidable for
$\sic1$. For \bsc1, the classical result of Simon~\cite{simon75} was
generalized from finite to infinite words by Perrin and
Pin~\cite{Perrin&Pin:Infinite-Words:2004:a}. For finite words,
membership to \sicd is  known to be decidable~\cite{arfi87,pwdelta},
a result lifted to infinite words in~\cite{dksig2,bojsig2infinite}. Following these
results, the understanding of the hierarchy remained stuck for years
until the framework was extended to new and more general problems than
membership.

Rather than asking whether a language is definable in a fragment \Fs, these
problems ask what is the best \Fs-definable ``approximation'' of this language
(with respect to specific criteria). The simplest example is
\emph{\Fs-separation}, which takes \emph{two} regular languages as input and
asks whether there exists a third language \emph{definable in \Fs} that
contains the first language and is disjoint from the second. Separation is
more general than membership: asking whether a regular language is definable
in \Fs is the same as asking whether it can be \Fs-separated from its (also
regular) complement. A consequence is that deciding these more general
problems is usually more challenging than deciding membership. However, their
investigation in the setting of finite words has also been very rewarding. A
good illustration is the transfer result of~\cite{pzqalt}, which states that
for all $i$, decidability of separation for \sici entails decidability of
membership for $\sic{i+1}$. Combined with an algorithm for
\sicd-separation~\cite{pzqalt}, this proved that \sict has decidable
membership. This result was strengthened in~\cite{pseps3}, which shows that
\sict-separation is decidable as well, thus obtaining decidability of
membership for~$\sic4$. Finally, in~\cite{pzqalt}, it was shown that \bscd has
decidable membership by using a generalization of separation for \sicd and
analyzing an algorithm solving this generalization.

It remained open to know whether it was possible to generalize with the same
success this new approach to the setting of infinite words. This is the
investigation that we carry out in the paper. More precisely, we rely on the
crucial notion of \ichains, designed in~\cite{pzqalt} for presenting and
proving membership and separation algorithms for finite words. We generalize
this concept to infinite words and successfully use it to prove that the
following problems are decidable: $\sicd$-separation, $\sict$-separation, and
$\bscd$ membership. This demonstrates that \ichains remain a suitable
framework for presenting arguments in the setting of infinite words. On the
other hand, new issues specific to infinite words arise, for example, we were
not able to generalize the transfer result from \sici-separation to
\sic{i+1}-membership (as a consequence, membership for \sic4 remains open).
Note that a different proof for deciding \bscd-membership has been obtained
independently in \cite{DBLP:journals/corr/KufleitnerW15a}. It gives a
decidable characterization based on topology and algebra.

Note that, for each problem, we pre-compute some information
by using the corresponding algorithm designed
in~\cite{pzqalt,pseps3} for finite words. This means that the involved
algorithms from~\cite{pzqalt,pseps3} are used as subroutines of our
algorithms.

We now present the problems in depth in Section~\ref{sec:preliminaries}, and we
solve them in the rest of the paper. A detailed outline is provided at the
end of Section~\ref{sec:preliminaries}.

\section{Presentation of the Problem}
\label{sec:preliminaries}
In this section, we first define the quantifier alternation hierarchy of
first-order logic. Then, we present the membership problem and the separation problem.

\subsection{The Quantifier Alternation  Hierarchy of First-Order Logic}

\noindent
{\bf Words and \iWords.} For the whole paper, we assume that a finite alphabet~$A$ is fixed.
We denote by $A^+$ the set of all finite nonempty words, and
by~$A^\infty$ the set of all infinite words over~$A$. In the paper, we
use the following terminology: the term ``word'' means a
finite word, and the term ``\iword'' means an infinite~word.

If $u$ is a word and $v$ is a word (resp. an \iword), we denote by
$uv$ the word (resp. \iword) obtained by concatenating $u$ to the left
of $v$. If $u \in A^+$ is a word, we denote by $u^\infty$ the
\iword $uuuu\cdots$ obtained as the infinite concatenation of $u$ with
itself.
If $u \in A^+ \cup A^\infty$ is a word or an \iword, we denote by \content{u}
the \emph{alphabet} of $u$, \emph{i.e.}, the set of letters of $u$.
We call \emph{language} (resp.\ \emph{\ilang}) a subset of $A^+$ (resp.\ of
$A^\infty$), \emph{i.e.}, a language of finite words (resp. of \iwords).

In the paper we are interested in \emph{regular} languages and
\ilangs. Regular \ilangs are those that can be equivalently defined by
monadic second-order logic, finite Büchi automata or finite \isemis.
We work with the definition of regular \ilangs
in terms of \isemis, recalled in
Section~\ref{sec:semi}.

\smallskip
\noindent
{\bf First-Order Logic.} Any word or \iword can be viewed as a logical
structure made of a linearly ordered sequence of positions labeled over
the alphabet $A$ (finite for words and infinite for \iwords). In
first-order logic (\fo), one can quantify over these positions and use
the following predicates.
\begin{itemize}
\item for each $a \in A$, a unary predicate $P_a$ selecting all
  positions labeled with an $a$.
\item a binary predicate '$<$' interpreted as the (strict)
  linear order over the positions.
\end{itemize}
Since any \fo sentence may be interpreted both on words and \iwords,
each sentence $\varphi$ defines two objects: a language $L_+ = \{w \in
A^+ \mid w \models \varphi\}$ and an \ilang $L_\infty = \{w \in
A^\infty \mid w \models \varphi\}$. For example, the sentence $\exists
x \exists y\ (x < y \wedge P_a(y))$ defines the language $A^+a\cup A^+aA^+$ and the \ilang $A^+aA^\infty$.

Thus, we may associate two classes of objects with \fo: a
class of languages~(we speak of \fo over words) and a class of \ilangs
(we speak of \fo over~\iwords).

\smallskip
\noindent
{\bf Quantifier Alternation.} It is usual to classify \fo sentences by
counting the quantifier alternations inside their prenex normal form.
Set $i \in \nat$, a sentence is said to be \sic{i} (resp. \pic{i}) if
its prenex normal form has either
\begin{itemize}
\item \emph{exactly} $i -1$ quantifier
  alternations (\emph{i.e.}, exactly $i$ blocks of
  quantifiers) starting with an $\exists$
  (resp.\ $\forall$), or
\item \emph{strictly less} than $i -1$ quantifier
  alternations (\emph{i.e.}, strictly less than $i$ blocks).
\end{itemize}
For example, the sentence $\exists x_1 \forall x_2 \forall x_3 \exists
x_4 \ \varphi(x_1,x_2,x_3,x_4)$, with $\varphi$ quantifier-free, is
\sict. Note that in general, the negation of a \sici sentence is not a
\sici sentence (it is called a \pici sentence). Hence, it is also usual to
define \bsci sentences as those that are Boolean combinations of \sici
and \pici sentences.

As for full first-order logic, each level \sici, \pici or \bsci
defines two classes of objects: a class of languages and a class of
\ilangs. Therefore, we obtain two hierarchies: a hierarchy of classes
of languages and a hierarchy of classes of \ilangs.
Both hierarchies are strict (refer to the figure in the introduction).

\subsection{Decision Problems}

Our objective is to investigate the quantifier alternation hierarchy of
first-order logic over \iwords. We rely on two decision problems in order to
carry out this investigation: the membership problem and the separation
problem.
Both problems are parametrized by a level in the hierarchy and come
in two versions: a `language' one and an `\ilang' one. Given a level \Fs
in the hierarchy, the \emph{membership problem} for \Fs is as follows:

\medskip

\begin{center}
  \begin{tikzpicture}

    \coordinate (c1) at (2.0,1.0);

    \node[anchor=mid west] (p1) at (0.0,0.5) {\bf IN};
    \node[anchor=mid west] (p2) at (0.0,0.0) {\bf OUT};

    \node[anchor=mid west] (p3) at (1.05,0.5) {A regular language $L$\qquad\qquad};
    \node[anchor=mid west] (p4) at (1.05,0.0) {Is $L$ \Fs-definable ?};

    \node[draw,thick,rounded
    corners=3pt,rectangle,fit=(p1) (p2) (p3) (p4) (c1)] (m1) {};

    \node[draw,thick,rounded
    corners=3pt,fill=white,rectangle] at (m1.north) {Language Membership Problem};

    \begin{scope}[xshift=8cm]
      \coordinate (c1) at (2.0,1.0);

      \node[anchor=mid west] (p1) at (0.0,0.5) {\bf IN};
      \node[anchor=mid west] (p2) at (0.0,0.0) {\bf OUT};

      \node[anchor=mid west] (p3) at (1.05,0.5) {A regular \ilang $L$\qquad\qquad};
      \node[anchor=mid west] (p4) at (1.05,0.0) {Is $L$ \Fs-definable ?};

      \node[draw,thick,rounded
      corners=3pt,rectangle,fit=(p1) (p2) (p3) (p4) (c1)] (m1) {};

      \node[draw,thick,rounded
      corners=3pt,fill=white,rectangle] at (m1.north) {\iLang Membership
        Problem};
    \end{scope}
  \end{tikzpicture}
\end{center}

\medskip

The separation problem is more general. Given three languages or three \ilangs
$K,L_1,L_2$, we say that $K$ \emph{separates} $L_1$ from~$L_2$ if
$L_1 \subseteq K \text{ and } L_2 \cap K = \emptyset$. For $\Fs$ a level in
the hierarchy, we say that $L_1$ is \emph{$\Fs$-separable} from~$L_2$ if there
exists a language or \ilang that is definable in \Fs and separates $L_1$
from~$L_2$. Note that when $\Fs$ is not closed under complement (\emph{e.g.},
when $\Fs = \sici$ or $\Fs = \pici$), the definition is not symmetrical: $L_1$
may be $\Fs$-separable from $L_2$ while $L_2$ is not $\Fs$-separable from
$L_1$. The separation problem for $\Fs$ is as follows:

\medskip

\begin{center}
  \begin{tikzpicture}

    \coordinate (c1) at (2.0,1.0);

    \node[anchor=mid west,inner sep=0pt] (p1) at (0.0,0.5) {\bf IN};
    \node[anchor=mid west,inner sep=0pt] (p2) at (0.0,0.0) {\bf OUT};

    \node[anchor=mid west,inner sep=0pt] (p3) at (1.05,0.5) {Two regular languages $L_1,L_2$};
    \node[anchor=mid west,inner sep=0pt] (p4) at (1.05,0.0) {Is $L_1$ \Fs-separable from
      $L_2$ ?};

    \node[draw,thick,rounded
    corners=3pt,rectangle,fit=(p1) (p2) (p3) (p4) (c1)] (m1) {};

    \node[draw,thick,rounded
    corners=3pt,fill=white,rectangle] at (m1.north) {Language Separation
      Problem};

    \begin{scope}[xshift=8cm]
      \coordinate (c1) at (2.0,1.0);

      \node[anchor=mid west,inner sep=0pt] (p1) at (0.0,0.5) {\bf IN};
      \node[anchor=mid west,inner sep=0pt] (p2) at (0.0,0.0) {\bf OUT};

      \node[anchor=mid west,inner sep=0pt] (p3) at (1.05,0.5) {Two regular \ilangs $L_1,L_2$};
      \node[anchor=mid west,inner sep=0pt] (p4) at (1.05,0.0) {Is $L_1$ \Fs-separable from
        $L_2$ ?};

      \node[draw,thick,rounded
      corners=3pt,rectangle,fit=(p1) (p2) (p3) (p4) (c1)] (m1) {};

      \node[draw,thick,rounded
      corners=3pt,fill=white,rectangle] at (m1.north) {\iLang Separation
        Problem};
    \end{scope}
  \end{tikzpicture}
\end{center}
\medskip

An important remark is that membership  reduces to
separation: a regular language or \ilang is definable in \Fs
iff it is \Fs-separable from its (also regular) complement. This makes
separation a more general problem
than membership.

Both problems have been extensively studied in the literature. Indeed, it has
been observed that obtaining an algorithm for the membership or separation
problem associated to a particular level \Fs usually yields a deep insight on
\Fs. This is well illustrated by the most famous result of this kind,
Schützenberger's Theorem~\cite{sfo,mnpfo}, which yields a (language)
membership algorithm for \fo. The result was later generalized to \ilangs by
Perrin~\cite{pfo}. These results and the techniques used to obtain them
provide not only a way to decide whether a regular ($\omega$-)language is
\fo-definable, but also a generic method for constructing an \fo sentence when
the ($\omega$-)language is definable. Since these first results, many efforts have been
devoted for obtaining membership and separation algorithms for each level in
the hierarchy. An overview of the results is presented in the following
table (omitted levels are open in all cases).

\begin{figure}[h]
  \begin{minipage}[c]{.48\linewidth}
    \begin{center}
      {\bf \large Membership Problem}\\
      \begin{tabular}{|c|c|c|}
        \hline
        & Language & \iLang \\
        \hline
        \fo    & Solved~\cite{sfo,mnpfo} & Solved~\cite{pfo} \\
        \hline
        \sicu  & Solved~(Folklore) & Solved~(Folklore)\\
        \hline
        \bscu  & Solved~\cite{simon75} & Solved~\cite{Perrin&Pin:Infinite-Words:2004:a}\\
        \hline
        \sicd  & Solved~\cite{arfi87,pwdelta} & Solved~\cite{dksig2}\\
        \hline
        \bscd  & Solved~\cite{pzqalt} & {\bf Open}\\
        \hline
        \sict  & Solved~\cite{pzqalt} & {\bf Open}\\
        \hline
        \sic{4} & Solved~\cite{pseps3}  &{\bf Open}\\
        \hline
      \end{tabular}
    \end{center}
  \end{minipage}
  \begin{minipage}[c]{.48\linewidth}
    \begin{center}
      {\bf \large Separation Problem}\\
      \begin{tabular}{|c|c|c|}
        \hline
        & Language & \iLang \\
        \hline
        \fo    & Solved~\cite{pzfo} & Solved~\cite{pzfo} \\
        \hline
        \sicu  & Solved~(Folklore) & Solved~(Folklore)\\
        \hline
        \bscu  & Solved~\cite{martens,pvzmfcs13} & Solved~\cite{bsc1omega}\\
        \hline
        \sicd  & Solved~\cite{pzqalt} & {\bf Open} \\
        \hline
        \bscd  &  {\bf Open} & {\bf Open}\\
        \hline
        \sict  & Solved~\cite{pseps3} & {\bf Open}\\
        \hline
        \sic{4} &  {\bf Open}  &{\bf Open}\\
        \hline
      \end{tabular}
    \end{center}
  \end{minipage}
\end{figure}

Our main objective is to bridge the gap between what is
known for languages and what is known for \ilangs. More precisely, we
want to extend the recent results of~\cite{pzqalt} and~\cite{pseps3}
to the setting of \iwords, \emph{i.e.}, to obtain membership algorithms for
\bscd, \sict and \sic{4} as well as separation algorithms for \sicd
and \sict. We were able to obtain these algorithms for \sicd, \sict
and \bscd as we state in the following theorem (we leave the case of
\sic{4}-membership for \ilangs open, we will come back to this point
in the conclusion).

\begin{theorem} \label{thm:main}
  The following properties hold:
  \begin{enumerate}[label=$\alph*)$]
  \item the \ilang separation problem is decidable for \sicd.  \item the \ilang membership problem is decidable for \bscd.
  \item the \ilang separation problem is decidable for \sict.
  \end{enumerate}
\end{theorem}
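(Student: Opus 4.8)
The plan is to reduce the three $\omega$-language problems to their already-solved finite-word counterparts from \cite{pzqalt,pseps3}, using the algebraic framework of \isemis as the bridge. Recall that a regular \ilang $L$ is recognized by a finite \isemi $(S,S^\omega)$ together with a morphism, and that the behaviour of $L$ is completely determined by finitely many \emph{linked pairs} $(s,e)$ with $e$ idempotent, each describing a set of \iwords of the form $uv^\infty$. The first step is therefore to set up, for each level $\Fs\in\{\sicd,\sict\}$, a characterization of \emph{\Fs-separability of \iwords} purely in terms of a combinatorial object attached to the recognizing \isemi. For this we generalize the notion of \ichain from \cite{pzqalt}: an \ichain over a monoid morphism records which tuples of elements can be ``simultaneously witnessed'' by a single \sic{i}-indistinguishability class, and we define an \emph{\iclass} variant that additionally tracks the idempotent power governing the infinite suffix. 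The key technical lemma will state that $L_1$ is \Fs-separable from $L_2$ (as \ilangs) if and only if no forbidden \ichain/\iclass pattern links an accepting linked pair of $L_1$ with an accepting linked pair of $L_2$; and crucially, that the relevant \ichains over the \isemi can be \emph{computed} from the \ichains over its finite part, the latter being computable by the finite-word algorithms of \cite{pzqalt,pseps3}.

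For part $a)$ (\sicd-separation), I would proceed as follows. First, reduce to the case where $L_1,L_2$ are given by accepting sets of linked pairs of a common recognizing \isemi. Second, observe that membership of an \iword $uv^\infty$ in a \sicd-definable \ilang depends only on the \emph{set of letters occurring infinitely often} together with a \sicd-type of a finite prefix — this is the classical "\dchain'' phenomenon lifted to \iwords. Third, show that \sicd-separability fails exactly when there is a pair of linked pairs $(s_1,e_1)$ accepting for $L_1$ and $(s_2,e_2)$ accepting for $L_2$ together with a common \dkchain (for all $k$) relating $s_1,e_1$ to $s_2,e_2$ in a way compatible with the infinite iteration of $e_1,e_2$; the compatibility condition on the idempotent parts is where the new \iword-specific argument enters, and it is handled by a pumping/Ramsey argument on the \EF game for \sicd. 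Finally, since \dkchains over the finite part stabilize and are computable by \cite{pzqalt}, the whole criterion is decidable.

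Parts $b)$ and $c)$ follow the same template but invoke heavier inputs. For $c)$ (\sict-separation) the finite-word input is the \sict-separation algorithm of \cite{pseps3} (used as a subroutine), plus the \sicd-separation algorithm since \sict-chains are built by alternating \sicd-information; the \iclass machinery must now track a \sict-type of the prefix together with the \sicd-behaviour of the infinite part, and the main new lemma is that this data is finitely generated and computable from the finite-word \sict- and \sicd-chains. For $b)$ (\bscd-membership) we use the fact, already exploited for finite words in \cite{pzqalt}, that \bscd-membership is equivalent to a \emph{double \sicd-separation} condition (separability in both directions, at the level of certain "pointed'' languages derived from the syntactic \isemi); lifting this to \iwords requires checking the condition on each linked pair and additionally on a "tail'' language, which reduces to part $a)$ together with a \sicd-separation of finite-word languages.

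The main obstacle — and the place where the \iword setting genuinely departs from the finite case — is the interaction between the idempotent power $e^\omega$ governing the infinite suffix and the chain relation: in the finite-word proofs one freely pumps factors to boost quantifier rank, but over an \iword the infinite suffix is already "maximally pumped'', so one must show that the \sic{i}-type of $v^\infty$ is determined by the \sic{i}-types of bounded powers $v^k$ and, more delicately, that a separating chain can be \emph{closed up under this infinite iteration} without increasing the alternation depth. I expect this closure property to require a careful Ehrenfeucht-Fra\"iss\'e argument specific to each of \sicd and \sict, and it is the reason the analogous transfer from \sici-separation to \sic{i+1}-membership (and hence \sic4-membership) does not go through — as already flagged in the introduction.
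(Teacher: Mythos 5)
Your plan for part $a)$ matches the paper's in outline: the paper proves (Proposition~\ref{prop:sig2main}) that $\Cs_{2,2}[\alpha_\infty]$ equals an explicit set $\Sat_{\sicd}(\alpha)$ built from $\Cs_{2,2}[\alpha_+]$ and an alphabetic condition on the infinite tail, via exactly the Ramsey decomposition and Ehrenfeucht-Fra\"iss\'e closure argument you sketch. The real gaps are in parts $b)$ and $c)$. For $b)$, the claim that \bscd-membership is ``equivalent to a double \sicd-separation condition'' is not a correct characterization. Separability of $L$ from its complement in \sicd in both directions characterizes \decd-definability, which is strictly weaker than \bscd-definability; and at the level of the syntactic image, having both $(s,t)$ and $(t,s)$ in $\Cs_{2,2}$ does not produce a \sicd-alternating pair, because the two length-$2$ witnesses cannot be spliced into a single $\ksieq{2}$-chain (the two middle words mapping to $t$ are unrelated, and quantifier ranks degrade under composition). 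The correct criterion (Corollary~\ref{cor:bcmemb}) is \emph{bounded \sicd-alternation}, i.e., the absence of \chains $(s,t)^n$ for arbitrarily large $n$, and transferring its decidability from $\alpha_+$ to $\alpha_\infty$ is the hardest part of the paper: it requires the new equation~\eqref{eq:bcs} on the \isemi, a description of \dchains of \emph{every} length $n$ (Proposition~\ref{prop:sig2gen}, not just the $n=2$ case used for separation), and a cycle argument in an auxiliary graph. Moreover the finite-word subroutine is the bounded-alternation test of~\cite{pzqalt} (Theorem~\ref{thm:fibscd}), i.e., the finite-word \bscd-\emph{membership} algorithm, not a separation algorithm; none of this reduces to ``part $a)$ plus finite-word \sicd-separation''.

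For $c)$, the finite-word inputs you invoke (the \sict- and \sicd-separation algorithms used as black boxes) are insufficient. The paper's description of $\Cs_{3,2}[\alpha_\infty]$ is parametrized by the set $\Ms[\alpha_+]$ of \emph{mixed \chains}: triples $(s_1,s_2,s_3)$ witnessed by words with $w_1\ksieq{2}w_2\ksieq{3}w_3$, a single chain intertwining the two preorders. Computing $\Ms[\alpha_+]$ is a separate and difficult result of~\cite{pseps3} (obtained via the more general $\Sigma_{2,3}$-trees) and does not follow from computing $\Cs_{2,2}[\alpha_+]$ and $\Cs_{3,2}[\alpha_+]$ independently. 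Without this object your construction cannot produce the $\Sigma_2$-linked element $s_1(t_1)^\infty$ that heads the infinite tail on the $\Pi_3$ side of the pair, so the key subroutine of your plan for $c)$ is missing.
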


Our proof of Theorem~\ref{thm:main} consists in three algorithms, one for each
item in the theorem. An important remark is that each of these three
algorithms depends upon an algorithm of~\cite{pzqalt} or~\cite{pseps3}
solving the corresponding problem for languages:
\begin{itemize}
\item We present all algorithms in a specific framework which is
  adapted from the one used in~\cite{pzqalt}. In particular, we
  reuse the key notion of ``\ichain'' (generalized to \iwords in
  a straightforward way).
\item We actually reuse the language algorithms of~\cite{pzqalt}
  and~\cite{pseps3} as subprocedures in our algorithms for \ilangs.
\end{itemize}

The remainder of the paper is devoted to proving
Theorem~\ref{thm:main}. In Section~\ref{sec:semi},
we recall classical notions required for our definitions
and proofs: the \isemi definition of regular \ilangs  and logical
preorders. In Section~\ref{sec:chains}, we present the general
framework used in the paper. In particular, we introduce a
notion that will be at the core of all our algorithms: ``\ichains''
(which are adapted and reused from~\cite{pzqalt}). We then devote a
section to each algorithm: Section~\ref{sec:sig2} to
\sicd-separation, Section~\ref{sec:bsig2} to \bscd-membership and
Section~\ref{sec:sig3} to \sict-separation.

\def\texorpdfstring#1#2{#1}

\section{\texorpdfstring{Preliminaries}{Preliminaries}}
\label{sec:semi}
In this section, we recall some classical notions that we will need.
First, we present the definition of regular \ilangs in terms of
\isemis. Then, we define the logical preorders that one may associate
to each level \sici in the hierarchy.

\subsection{Semigroups and \iSemis}

We briefly recall the definition of regular languages and \ilangs in
terms of semigroups and \isemis. For details,
see~\cite{Perrin&Pin:Infinite-Words:2004:a}.

\smallskip
\noindent {\bf Semigroups.} A semigroup is a set $S$ equipped with an
associative operation $s \cdot t$ (often written $st$). In particular, $A^+$
equipped with concatenation is a~semigroup. Given a \emph{finite} semigroup
$S$, it is easy to see that there is an integer $\omega(S)$ (denoted by
$\omega$ when $S$ is understood) such that for all $s$ of $S$, $s^\omega$ is
idempotent:~$s^\omega = s^\omega s^\omega$.

Given a language $L$ and a morphism $\alpha: A^+ \to S$, we say that
$L$ is \emph{recognized} by $\alpha$ if and only if there exists $F
\subseteq S$ such that $L = \alpha^{-1}(F)$. It is well-known that a
language is regular if and only if it may be recognized by a
\emph{finite} semigroup.

\smallskip
\noindent
{\bf\iSemis.} An \emph{\isemi} is a pair
$(S_+,S_\infty)$, where $S_+$ is a semigroup and $S_\infty$ is a set.
Moreover, $(S_+,S_\infty)$ is equipped with two additional products: a
\emph{mixed product} $S_+ \times S_\infty \rightarrow S_\infty$
mapping $s,t \in S_+,S_\infty$ to an element $st$ of $S_\infty$, and
an \emph{infinite product} $(S_+)^\infty \rightarrow S_\infty$ mapping
an infinite sequence $s_1,s_2,\dots \in (S_+)^\infty$ to an element
$s_1s_2\cdots$ of $S_\infty$. We require these products to satisfy all
possible forms of associativity. For $s\in S_+$, we let $s^\infty$ be
the infinite product $sss\cdots\in S_\infty$. Note that
$(A^+,A^\infty)$ is an \isemi. See~\cite{Perrin&Pin:Infinite-Words:2004:a} for
further details.

We say that $(S_+,S_\infty)$ is \emph{finite} if both $S_+$ and
$S_\infty$ are. Note that even if an \isemi is finite, it is not clear
how to represent the infinite product, since the set of infinite
sequences of $S_+$ is uncountable. However, it has been shown by
Wilke~\cite{womega}  that the infinite product is fully determined by
the mapping $s \mapsto s^\infty$. This makes it possible to finitely
represent any finite \isemi.

Morphisms of \isemis are defined in the natural way. In particular, observe
that any \isemi morphism $\alpha: (A^+,A^\infty) \to (S_+,S_\infty)$ defines
two maps: a semigroup morphism $\alpha_+: A^+ \to S_+$ and a map
$\alpha_\infty: A^\infty \to S_\infty$ (when there is no ambiguity, we shall
write $\alpha(w)$ to mean $\alpha_+(w)$ if $w\in A^+$ or $\alpha_\infty(w)$ if
$w\in A^\infty$). Therefore, a morphism recognizes both languages (the
languages $\alpha_+^{-1}(F_+)$ for $F_+ \subseteq S_+$) and \ilangs (the
\ilangs $\alpha_\infty^{-1}(F_\infty)$ for $F_\infty \subseteq S_\infty$). An
\ilang is regular iff it may be recognized by a morphism into a \emph{finite}
\isemi.

\smallskip
\noindent
{\bf Syntactic Morphisms.} It is known that given any regular
language (resp. \ilang) $L$ there exists a canonical morphism
$\alpha_L: A^+ \to S$ (resp. $\alpha_L: (A^+,A^\infty) \to
(S_+,S_\infty)$) recognizing $L$. This object is called the
syntactic morphism of $L$. We refer
the reader to~\cite{Perrin&Pin:Infinite-Words:2004:a} for the
detailed definition of this object. In the paper we only use
two properties of the syntactic morphism. The first is simply that
given any regular \ilang $L$, one may compute its syntactic
morphism from any representation of $L$. We state the second
one below.
\begin{fact} \label{fct:synta}
  Let $i \geq 1$ and let $L$ be a regular \ilang. Then $L$ is definable
  in \bsci iff so are all languages and \ilangs recognized by its syntactic
  morphism.
\end{fact}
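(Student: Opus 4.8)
The easy direction is immediate, so I would dispatch it first: $L$ is recognized by its own syntactic morphism $\alpha_L$, so if every language and \ilang recognized by $\alpha_L$ is \bsci-definable, then in particular $L$ is. The content is the converse. Assume $L$ is \bsci-definable and write $\alpha_L : (A^+,A^\infty)\to(S_+,S_\infty)$. Since \bsci is closed under finite unions, it suffices to show that each class $\alpha_{L,+}^{-1}(s)$ (a language, $s\in S_+$) and each class $\alpha_{L,\infty}^{-1}(t)$ (an \ilang, $t\in S_\infty$) is \bsci-definable.

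The plan is to combine two standard facts. First, a \emph{closure} statement: \bsci, as a class of languages and as a class of \ilangs, is closed under Boolean operations (by definition) and under the ``derivative'' operations occurring in the description of the syntactic \isemi --- quotients by a fixed finite word, the map sending an \ilang $K$ and a fixed eventually periodic \iword $w$ to the language $\{v\in A^+ : vw\in K\}$, and the map sending $K$ and fixed finite words $x,z$ to $\{v\in A^+ : x(vz)^\infty\in K\}$. Second, an \emph{algebraic} statement: because the syntactic congruence $\cong_L$ has finite index, the \ilang classes $\alpha_{L,\infty}^{-1}(t)$ are finite Boolean combinations of left quotients $x^{-1}L$, while the finite-word classes $\alpha_{L,+}^{-1}(s)$ are finite Boolean combinations of derivatives of $L$ of the three kinds above (finitely many context-tests separate the finitely many classes); this is the classical analysis of the syntactic \isemi of a regular \ilang, for which I would cite~\cite{Perrin&Pin:Infinite-Words:2004:a}. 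Then the two facts give the result. (One may also argue more conceptually: \bsci is a lattice of languages/\ilangs closed under quotients and under inverse images of length-multiplying morphisms, hence corresponds to a variety of \isemis closed under division, and the syntactic \isemi of anything recognized by $\alpha_L$ divides that of $L$; but this merely repackages the same work.)

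The routine part of the closure statement is quotients by finite words: from a \bsci-sentence for $K$ one builds one for $\{v : xvy\in K\}$ by hard-coding the (bounded) prefix and suffix and relativizing all quantifiers to the positions strictly in between, which adds no quantifier alternations. The hard part --- and the main obstacle I expect --- is closure under the two \emph{sort-crossing} derivatives, which have no finite-word analogue and which are exactly what is needed for the finite-word classes $\alpha_{L,+}^{-1}(s)$. Here one must show, via an \efgame (composition) analysis of \bsci over \iwords, that the ``\bsci-type'' of a finite word $v$ --- seen either as a middle factor or as the repeated block of an $\omega$-power --- is itself definable over $v$ by a \bsci-sentence of bounded rank, so that ``$puq\in K$'' and ``$x(vz)^\infty\in K$'' become \bsci-conditions on $v$. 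Establishing this composition property (and checking that only finitely many contexts are needed) is where the genuinely infinite-word content of the Fact lies; the rest is bookkeeping.
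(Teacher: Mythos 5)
Your outline is correct, but be aware that the paper does not prove this Fact at all: it is delegated to \cite{Perrin&Pin:Infinite-Words:2004:a}, with the remark that it holds for any ``variety'' of \ilangs --- which is essentially the conceptual repackaging you relegate to your parenthetical. So there is no in-paper proof to match; your derivative-based reconstruction is one correct way to fill the gap. Two comments on it. First, the ``main obstacle'' you isolate --- that the \bsci-type of rank $k$ of a finite word determines, uniformly, the rank-$k$ type of any concatenation, mixed product, or $\omega$-power in which that word occurs --- is exactly the paper's Composition Lemma (Lemma~\ref{lem:EFmult}), which the authors also state without proof as a standard \efgame argument; so your deferral sits at the same level of rigour as the source, and nothing is genuinely missing. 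Second, once Lemma~\ref{lem:EFmult} is granted, you can bypass the derivative bookkeeping (and the ``finitely many contexts suffice'' refinement argument) entirely: if $k$ is the rank of a \bsci sentence defining $L$, then $\bceq{k}{i}$ is a finite-index equivalence on $(A^+,A^\infty)$ that saturates $L$ and is compatible with concatenation, the mixed product and $\omega$-powers, hence (by finiteness and the fact that the infinite product is determined by $s\mapsto s^\infty$~\cite{womega}) induces an \isemi congruence; by the universal property of the syntactic morphism, every class of $\alpha_{L,+}$ and of $\alpha_{L,\infty}$ is then a finite union of $\bceq{k}{i}$-classes, and each such class is \bsci-definable by definition (the conjunction of the finitely many rank-$k$ \sici sentences it satisfies and of the negations of the others). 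Either route ultimately leans on \cite{Perrin&Pin:Infinite-Words:2004:a} for the fact that Arnold's syntactic congruence of a regular \ilang, defined through ultimately periodic contexts only, really is the coarsest saturating congruence --- that is the one genuinely infinite-word ingredient that neither your sketch nor the paper reproves.
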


The proof of Fact~\ref{fct:synta} may be found
in~\cite{Perrin&Pin:Infinite-Words:2004:a} (in fact, this holds for any class
of \ilangs which is a ``variety'' of \ilangs, not just for \bsci). In view~of~this,  the syntactic morphism is central for membership questions:
deciding if~a language is definable in \bsci amounts to deciding a property of
its syntactic morphism. This is the approach used in our \bscd-membership
algorithm (see Section~\ref{sec:bsig2}).

\medskip
\noindent
{\bf Morphisms and Separation.}  When working on separation, we are
given two input languages or \ilangs. It is convenient to consider a
single recognizing object for both inputs rather than two separate objects.
This is not restrictive: given two languages (resp. two \ilangs) and
two associated recognizing morphisms, one can define and compute a
single morphism that recognizes them both. For example, if $L_0 \subseteq A^\infty$ is recognized by $\alpha_0:
(A^+,A^\infty) \to (S_+,S_\infty)$ and $L_1 \subseteq A^\infty$  by $\alpha_1: (A^+,A^\infty) \to (T_+,T_\infty)$, then
$L_0$ and $L_1$ are both recognized by
$\alpha: (A^+,A^\infty) \to     (S_+\times T_+,S_\infty \times    T_\infty)$
with
$\alpha(w) =(\alpha_0(w),\alpha_1(w))$.

\medskip
\noindent
{\bf Alphabet Compatible Morphisms.} It
will be convenient to work with morphisms that satisfy an additional
property. A morphism $\alpha: (A^+,A^\infty) \rightarrow
(S_+,S_\infty)$ is said to be \emph{alphabet compatible} if for all
$u,v \in A^+ \cup A^\infty$, $\alpha(u) = \alpha(v) $ implies
$\content{u} = \content{v}$. Note that when $\alpha$ is alphabet
compatible, for all $s \in S_+ \cup S_\infty$, $\content{s}$ is
well defined as the unique $B \subseteq A$ such that for all $u
\in \alpha^{-1}(s)$, we have $\content{u} = B$ (if $s$ has no preimage
then we simply set $\content{s} = \emptyset$).

To any morphism $\alpha: (A^+,A^\infty) \rightarrow
(S_+,S_\infty)$, we associate a morphism $\beta$, called the
\emph{alphabet completion} of $\alpha$. The morphism $\beta$
recognizes all \ilangs recognized by $\alpha$ and is alphabet
compatible. If $\alpha$ is already alphabet compatible, then
$\beta = \alpha$. Otherwise, observe that $2^A$ is a semigroup
with union as the multiplication and $(2^A,2^A)$ is therefore
an \isemi. Hence, we can define $\beta$ as the morphism:
$\beta: (A^+,A^\infty)  \to      (S_+ \times 2^A,S_\infty \times 2^A)$ with $\beta(w)   =  (\alpha(w),\content{w})$.

\subsection{Logical Preorders}

To each level \sici in the hierarchy, one may associate preorders on
the sets of words and \iwords. The definition is based on the notion
of quantifier rank. The \emph{quantifier rank} of a first-order
formula is the length of the longest sequence of nested quantifiers
inside the formula. For example, the following formula,
\[
  \exists x\ P_b(x) \wedge \neg (\exists y\ (y < x \wedge P_c(y)) \wedge
  (\forall y\exists z\ x < y < z \wedge P_b(y)))
\]
\noindent
has quantifier rank $3$. It is well-known (and easy to show) that for a fixed $k$, there
is a finite number of non-equivalent first-order formulas of rank less
than $k$.

We may now define the preorders. Note that while we  define
two preorders for each level \sici (one on $A^+$ and one on
$A^\infty$), we actually use the same notation for both. Set $i \geq
1$ as a level in the hierarchy and $k \geq 1$ as a quantifier rank.
Given two words $w,w' \in A^+$ (resp two \iwords $w,w' \in A^\infty$),
we write $w \ksieq{i} w'$ if and only if \emph{any} \sici formula of
rank at most $k$ that is satisfied by $w$ is satisfied by $w'$ as
well. One may verify that $\ksieq{i}$ is preorder. Moreover, it is
immediate from the definition that the preorders get refined
when $k$ increases: $w \sieq{k+1}{i} w' \Rightarrow w \sieq{k}{i} w'$.

Denote by $\kbceq{i}$ the equivalence generated by $\ksieq{i}$:
$w \kbceq{i} w'$ when $w \ksieq{i} w'$ and $w' \ksieq{i} w$. That is, $w \kbceq{i} w'$ if and only if $w,w'$ satisfy the same \sici
sentences (or equivalently the same \bsci sentences, which are just
Boolean combinations of \sici sentences). The following fact may be
verified from the~definition.
\begin{fact} \label{fct:refine}
  Let $k,i \geq 1$ and let $u,v$ be two words or two \iwords, then
  \[
    (1)\ u \sieq{k+1}{i} v  \Rightarrow  u \sieq{k}{i} v,
    \qquad(2)\ u \bceq{k+1}{i} v \Rightarrow  u \bceq{k}{i} v
    \qquad(3)\ u \ksieq{i+1} v    \Rightarrow  u \kbceq{i} v.
  \]
\end{fact}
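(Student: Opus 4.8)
The statement to prove is Fact~\ref{fct:refine}, which consists of three implications about the logical preorders $\sieq{k}{i}$ and their associated equivalences $\bceq{k}{i}$. All three are, as the paper says, straightforward consequences of the definitions, so the ``proof proposal'' here is really a sketch of the bookkeeping.

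\medskip

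The plan is to prove each of the three implications directly from the definition of the preorder $\ksieq{i}$ given just above: $w \sieq{k}{i} w'$ holds iff every \sici sentence of quantifier rank at most $k$ satisfied by $w$ is also satisfied by $w'$.

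\smallskip
\noindent\emph{Part (1).} Suppose $u \sieq{k+1}{i} v$, and let $\varphi$ be any \sici sentence of quantifier rank at most $k$ with $u \models \varphi$. Since $k \leq k+1$, the sentence $\varphi$ also has quantifier rank at most $k+1$; hence by hypothesis $v \models \varphi$. As $\varphi$ was arbitrary among \sici sentences of rank $\leq k$, this shows $u \sieq{k}{i} v$. The only point to note is that the class of \sici sentences of rank at most $k$ is contained in that of rank at most $k+1$, which is immediate.

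\smallskip
\noindent\emph{Part (2).} Recall $u \bceq{k}{i} v$ means $u \sieq{k}{i} v$ \emph{and} $v \sieq{k}{i} u$ (equivalently, $u$ and $v$ satisfy exactly the same \sici sentences, hence the same \bsci sentences, of rank $\leq k$). If $u \bceq{k+1}{i} v$ then $u \sieq{k+1}{i} v$ and $v \sieq{k+1}{i} u$; applying Part~(1) to each of these gives $u \sieq{k}{i} v$ and $v \sieq{k}{i} u$, i.e. $u \bceq{k}{i} v$.

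\smallskip
\noindent\emph{Part (3).} Suppose $u \ksieq{i+1} v$, i.e. every \sic{i+1} sentence of rank $\leq k$ satisfied by $u$ is satisfied by $v$. We must show $u \kbceq{i} v$, i.e. $u$ and $v$ satisfy the same \sici sentences of rank $\leq k$. Let $\varphi$ be a \sici sentence of rank $\leq k$. First, every \sici sentence is in particular a \sic{i+1} sentence (a prenex sentence with at most $i-1$ quantifier alternations starting with $\exists$ has in particular at most $i$ alternations starting with $\exists$), so from $u \models \varphi$ we get $v \models \varphi$. Second, the negation $\neg\varphi$ is a \pici sentence, and since $i \geq 1$ one rewrites $\neg\varphi$ in prenex normal form with at most $i-1$ alternations starting with a $\forall$; prepending a dummy existential block shows it is equivalent to a \sic{i+1} sentence of rank $\leq k$ (the rank is unchanged, or increased by one and then the dummy variable is unused—either way it can be taken $\leq k$ since $\varphi$ has a quantifier, as a sentence of rank $0$ is trivially handled). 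Hence from $u \models \neg\varphi$ (i.e. $u \not\models \varphi$) we get $v \models \neg\varphi$, i.e. $v \not\models\varphi$. Combining the two directions, $u \models \varphi \iff v \models \varphi$, so $u \kbceq{i} v$.

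\medskip

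The only genuinely delicate point is in Part~(3), in checking that a \pici sentence is equivalent, without increasing quantifier rank, to a \sic{i+1} sentence of the same rank; the standard trick of adding an unused outermost existential quantifier handles this, with the trivial edge cases (quantifier-free sentences) treated separately. Everything else is a direct unfolding of definitions and the monotonicity of rank bounds, so no further subtlety arises.
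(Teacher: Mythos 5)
Your proof is correct, and since the paper states this fact without proof (``may be verified from the definition''), your direct unfolding of the definitions is exactly the intended argument: parts (1) and (2) are pure monotonicity of the rank bound, and part (3) reduces to the syntactic inclusions $\sici \subseteq \sic{i+1}$ and $\pici \subseteq \sic{i+1}$.

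One remark on the point you flag as ``genuinely delicate'' in part (3): under the paper's definition it is not delicate at all, and your workaround is the one step that does not quite hold up as written. The paper defines a sentence to be \sic{i+1} if its prenex normal form has either exactly $i+1$ blocks starting with $\exists$, \emph{or strictly fewer than $i+1$ blocks} --- with no constraint on the leading quantifier in the second case. A \pici sentence has at most $i$ blocks, hence strictly fewer than $i+1$, so $\neg\varphi$ is \emph{verbatim} a \sic{i+1} sentence of the same quantifier rank; no dummy existential block is needed. By contrast, prepending a dummy outermost $\exists x$ does increase the quantifier rank by one (the rank is the length of the longest chain of nested quantifiers, unused variables included), so your parenthetical claim that the rank ``can be taken $\leq k$'' after this maneuver would need a genuine argument about eliminating the dummy variable up to logical equivalence. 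Since the direct inclusion makes the whole issue vanish, this is a detour rather than a gap, but you should drop the dummy-quantifier step and invoke the ``strictly fewer blocks'' clause instead.
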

We finish the section with a few properties about the preorders
\ksieq{i}. The proofs are easy and omitted (they are obtained
with standard \efgame arguments).  We start with decomposition and
composition lemmas.

\begin{lemma}[Decomposition Lemma]
  \label{lem:EFdecomp}
  Let $i,k \geq 1$ and let $u,v$ be two words or two \iwords such that
  $u \ksieq{i} v$. Then for any decomposition $u=u_1u_2$ of $u$, there
  exist $v_1,v_2$ such that $v=v_1v_2$, $u_1\sieq{k-1}{i} v_1$ and
  $u_2 \sieq{k-1}{i} v_2$ .
\end{lemma}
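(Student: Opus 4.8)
The plan is to prove the Decomposition Lemma by a routine Ehrenfeucht–Fraïssé game argument, exactly as the paper hints. Recall that $u \ksieq{i} v$ means: every $\sici$ formula of rank $\le k$ satisfied by $u$ is also satisfied by $v$. It is standard that this preorder is characterized by a ``$\sici$ EF-game'' of $k$ rounds: Spoiler may switch which structure he plays in at most $i-1$ times (so there are $i$ ``blocks''), starting by playing in $u$ (the $\exists$-side); Duplicator must maintain a partial order-and-label isomorphism between the chosen positions; and Duplicator wins iff $u \ksieq{i} v$. I would first recall (or cite as a standard fact obtained by ``standard \efgame arguments'') this game characterization, so that the rest of the argument can be phrased purely combinatorially.

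Given the hypothesis, fix a winning strategy $\sigma$ for Duplicator in the $k$-round $\sici$-game from $u$ to $v$. Now fix a decomposition $u = u_1 u_2$; write $p_0$ for the last position of $u_1$ (or, if we prefer, think of the cut point). I would have Spoiler make one ``free'' probe move: in the structure $u$, play the position $p_0$. Duplicator's strategy $\sigma$ responds with some position $q_0$ in $v$; let $v_1$ be the prefix of $v$ ending at $q_0$ and $v_2$ the corresponding suffix, so $v = v_1 v_2$. The claim is then that $u_1 \sieq{k-1}{i} v_1$ and $u_2 \sieq{k-1}{i} v_2$. To see the first of these, I would describe how any Duplicator strategy for the $(k-1)$-round $\sici$-game on $(u_1, v_1)$ is obtained by restricting $\sigma$: given a Spoiler move inside $u_1$ (resp.\ $v_1$), feed it to $\sigma$ in the context of $u$ (resp.\ $v$) together with the already-fixed pebble pair $(p_0, q_0)$; since $\sigma$ respects the order, any position $\le p_0$ is answered by a position $\le q_0$, so the induced play stays within $(u_1,v_1)$, and it respects order and labels there because $\sigma$ does globally. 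The block-switching discipline is inherited because we never consumed a switch. After $k-1$ rounds this yields a win for Duplicator, hence $u_1 \sieq{k-1}{i} v_1$; the argument for $(u_2,v_2)$ is symmetric, using positions $> p_0$ answered by positions $> q_0$. One should note that if $u_1$ or $u_2$ is empty (or if $u$ is an \iword and $u_2$ is the infinite part) the statement is trivial or only the nontrivial side needs attention, and the infinite case costs nothing extra since order and labels are all the game inspects.

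The only subtlety — and the step I would be most careful about — is the bookkeeping of \emph{which side} Spoiler is allowed to start on and \emph{when} switches are counted, because the $\sici$-game is asymmetric: Spoiler must begin in $u$ (the existential structure). When Duplicator plays the split move $p_0 \mapsto q_0$ as above, this is a move of \emph{Spoiler in $u$}, which is consistent with the game's opening convention, and it does not consume any of the $i-1$ allowed switches; so the residual games on $(u_1,v_1)$ and $(u_2,v_2)$ each still permit the full $i$ blocks starting in the $u$-side, which is exactly what $\sieq{k-1}{i}$ requires. I would state this carefully but not belabor it, since it is precisely the ``standard \efgame argument'' the paper defers. This completes the proof of Lemma~\ref{lem:EFdecomp}.
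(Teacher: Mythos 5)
Your proof is correct and is exactly the standard \efgame argument that the paper invokes: the paper omits the proof of Lemma~\ref{lem:EFdecomp} entirely, stating only that it follows from ``standard \efgame arguments'', and your probe-move-plus-strategy-restriction argument is that standard argument. The one point worth double-checking --- that the residual $(k-1)$-round games on the two factors retain the full block budget, because the probe sits inside the first ($u$-side) block and may be followed immediately by a switch so as to also cover $\Sigma_i$ formulas with fewer than $i$ blocks --- is the point you single out and handle correctly in your last paragraph.
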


\begin{lemma}[Composition Lemma]
  \label{lem:EFmult}
  Let $i,k \geq 1$, let $u_1,v_1$ be two words such that $u_1\ksieq{i}
  v_1$, and $u_2,v_2$ be either two words or two \iwords such that $u_2
  \ksieq{i} v_2$. Then  $u_1u_2\ksieq{i} v_1v_2$ and $u_1^\infty
  \ksieq{i} v_1^\infty$.
\end{lemma}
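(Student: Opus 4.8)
The plan is to use standard \efgame arguments, via the game characterization of the preorders $\ksieq{i}$. Recall that $w \ksieq{i} w'$ holds precisely when Duplicator has a winning strategy in the $k$-round $\Sigma_i$-game played on $(w,w')$: there is a total budget of $k$ moves, which Spoiler splits into at most $i$ consecutive blocks (possibly empty, to account for $\Sigma_i$ formulas with fewer alternations); in every odd-numbered block Spoiler plays positions of $w$ and in every even-numbered block Spoiler plays positions of $w'$; Duplicator answers each move by a position of the other word; and Duplicator wins iff the correspondence between the chosen positions at the end of the play is a partial isomorphism, i.e.\ it preserves equality, the order $<$, and every label $P_a$. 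The two assertions of the lemma will follow by merging Duplicator strategies.

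For the concatenation $u_1u_2 \ksieq{i} v_1v_2$, fix by hypothesis Duplicator winning strategies $\sigma_1$ for the $k$-round $\Sigma_i$-game on $(u_1,v_1)$ and $\sigma_2$ for the one on $(u_2,v_2)$. Call the $u_1$-factor and the $u_2$-factor of $u_1u_2$, and symmetrically the $v_1$- and $v_2$-factors of $v_1v_2$, the two \emph{components}. Duplicator plays on $(u_1u_2,v_1v_2)$ as follows: when Spoiler picks a position in component $j$ (of whichever of the two words Spoiler is currently required to play in), Duplicator replies according to $\sigma_j$, regarding the moves played so far inside component $j$ as a partial play of the $(u_j,v_j)$-game. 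This makes sense because Spoiler's side is dictated by the current block alone, independently of the component in which the move falls; hence, for each $j$, the subsequence of moves landing in component $j$ forms a legitimate play of a $k$-round $\Sigma_i$-game on $(u_j,v_j)$, to which $\sigma_j$ applies and which it keeps winning. At the end, the whole correspondence is a partial isomorphism: within a component this is guaranteed by $\sigma_j$; and for two positions in different components it holds automatically, since the $u_1$-factor lies entirely to the left of the $u_2$-factor in $u_1u_2$ and likewise for $v_1v_2$, so the order and (in)equalities between the components match by construction. Therefore Duplicator wins, which is $u_1u_2 \ksieq{i} v_1v_2$.

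The infinite power $u_1^\infty \ksieq{i} v_1^\infty$ is handled by the same merging, now with countably many components: write $u_1^\infty$ as the concatenation of copies $u_1^{(1)}u_1^{(2)}u_1^{(3)}\cdots$ and $v_1^\infty$ as $v_1^{(1)}v_1^{(2)}v_1^{(3)}\cdots$, let component $m$ be the pair $(u_1^{(m)},v_1^{(m)})$, and let Duplicator use a separate copy of a winning strategy for $(u_1,v_1)$ in each component, always replying inside the copy of the same index. Any $k$-round play touches only finitely many components, each governed by a winning strategy, and copy $m$ precedes copy $m'$ on both sides whenever $m<m'$; so, exactly as before, the final correspondence is a partial isomorphism and Duplicator wins. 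The one point that requires care in this proof is precisely the observation used in the middle paragraph --- that carving a play into components does not disturb the alternation (block) structure, because in the $\Sigma_i$-game the side Spoiler must play on depends only on the index of the current block and not on where the move lands --- while everything else is a routine verification of the partial-isomorphism conditions.
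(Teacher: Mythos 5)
Your proof is correct; the paper omits the proof of this lemma, stating only that it follows from standard \efgame arguments, and your strategy-composition argument (merging Duplicator strategies component by component, for two components in the concatenation case and countably many in the infinite-power case) is exactly that standard argument. The one delicate point --- that restricting a play to a component yields a legitimate play of the $\Sigma_i$-game on that component because the side Spoiler must play on depends only on the block index and not on where the move lands, with empty blocks permitted --- is correctly identified and handled.
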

The last composition that we state is specific to \iwords.

\begin{lemma}
  \label{lem:EF}
  Let $i,k \geq 1$, $u \in A^+$ be a word and $v \in A^\infty$ be an
  \iword such that $v \ksieq{i} u^\infty$. Then for any $\ell \geqslant
  2^k$, we have $u^\infty \ksieq{i+1} u^\ell v$.
\end{lemma}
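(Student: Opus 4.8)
The plan is to use the standard \efgame characterization of \ksieq{i+1}: it suffices to show that every \sic{i+1} sentence of quantifier rank at most $k$ that holds in $u^\infty$ also holds in $u^\ell v$. So fix such a sentence and put it in prenex normal form $\varphi = \exists x_1\cdots\exists x_m\ \psi$, where $\psi$ is a \pic{i} formula of quantifier rank at most $k-m$ (hence $m\le k$). Since $u^\infty\models\varphi$, there are positions $a_1<\cdots<a_m$ of $u^\infty$ with $(u^\infty;a_1,\dots,a_m)\models\psi$. My goal is to produce positions $b_1<\cdots<b_m$ of $u^\ell v$ with $(u^\ell v;b_1,\dots,b_m)\models\psi$.

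View $u^\infty$ as the concatenation of consecutive copies of $u$ (``blocks''). The positions $a_1,\dots,a_m$ touch at most $m$ blocks, and before the first touched block, and between two consecutive touched blocks, there is a run of untouched full blocks, say of respective lengths $c_0,c_1,\dots,c_{m-1}$. I would define $b_1,\dots,b_m$ inside the prefix $u^\ell$ of $u^\ell v$ by \emph{copying the touched blocks verbatim} — so that $b_j$ occupies the same position inside its block as $a_j$ does inside its block, which keeps the correspondence order- and label-preserving — and \emph{shrinking each untouched run} from $c_t$ copies down to $d_t:=\min(c_t,\,2^{k-m}-1)$ copies. This uses at most $m+\sum_{t<m}d_t\le m+m(2^{k-m}-1)=m\cdot 2^{k-m}\le 2^k\le\ell$ blocks (using $m\le 2^m$), so it does fit inside the prefix $u^\ell$; write $u^r$ (with $r\ge 0$) for the leftover block suffix of $u^\ell$ after $b_m$, so that the part of $u^\ell v$ to the right of $b_m$ is $su^r v$ while the part of $u^\infty$ to the right of $a_m$ is $su^\infty$, where $s$ is the common $u$-suffix determined by the in-block position of $a_m$ (and likewise of $b_m$).

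To conclude it is enough to check that $(u^\ell v;b_1,\dots,b_m)\sieq{k-m}{i}(u^\infty;a_1,\dots,a_m)$: if this holds and $(u^\ell v;\bar b)$ failed $\psi$, it would satisfy the \sic{i} formula $\neg\psi$ of rank at most $k-m$, hence so would $(u^\infty;\bar a)$, contradicting $(u^\infty;\bar a)\models\psi$. I would establish this comparison by cutting both marked \iwords along their marked positions into $2m+1$ pieces and comparing piecewise: the single-letter pieces at marked positions coincide; each finite ``gap'' piece is of the form (fixed word)$\cdot u^{c_t}\cdot$(fixed word) on the left and (same fixed words)$\cdot u^{d_t}$ on the right, and these are indistinguishable by first-order sentences of rank at most $k-m$ by the classical fact that $u^c$ and $u^d$ satisfy the same such sentences whenever $c=d$ or $c,d\ge 2^{k-m}-1$; and the infinite tails satisfy $su^r v\sieq{k-m}{i}su^\infty$, which follows from the hypothesis $v\ksieq{i}u^\infty$ (valid at rank $k-m$ by Fact~\ref{fct:refine}) and the Composition Lemma (Lemma~\ref{lem:EFmult}) applied with the word $su^r$, since $(su^r)u^\infty=su^\infty$. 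Reassembling these piecewise relations via the Composition Lemma — applied, as usual, with the marked positions treated as single-letter factors — yields $(u^\ell v;\bar b)\sieq{k-m}{i}(u^\infty;\bar a)$, hence $(u^\ell v;\bar b)\models\psi$ and $u^\ell v\models\varphi$, as required.

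The main obstacle is purely the bookkeeping in the second paragraph: one must simultaneously choose the shrinking threshold small enough that the compressed marked word fits inside $u^\ell$ (this is exactly where the hypothesis $\ell\ge 2^k$ is consumed, together with the inequality $m\cdot 2^{k-m}\le 2^k$) and large enough that $u^{d_t}$ remains first-order indistinguishable from $u^{c_t}$ at the relevant rank $k-m$. Everything else is a routine composition of \efgame strategies, and the hypothesis $v\ksieq{i}u^\infty$ is used exactly once, on the infinite tails.
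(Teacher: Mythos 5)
Your proof is correct: the paper omits the proof of this lemma entirely (it is listed among the statements ``obtained with standard \efgame arguments''), and your argument is a sound instantiation of exactly that approach — strip the leading existential block, copy the $m$ witnesses into the prefix $u^\ell$ while compressing each untouched run of $u$-blocks down to $2^{k-m}-1$ copies, and spend the hypothesis $v \ksieq{i} u^\infty$ only on the infinite tail via Lemma~\ref{lem:EFmult}. The two quantitative points both check out: $m\cdot 2^{k-m}\le 2^k\le\ell$ guarantees the compressed configuration fits inside $u^\ell$, and $\min(c,d)\ge 2^{k-m}-1$ is the correct threshold for rank-$(k-m)$ first-order indistinguishability of $u^c$ and $u^d$.
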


In particular we will use the special case of Lemma~\ref{lem:EF} in
which $i=1$. In this case, one can verify that given $u \in A^+$ and
$v \in A^\infty$, when $\content{u} = \content{v}$, we have $v \ksieq{1}
u^\infty$ for any $k \geq 1$. Hence we have the following corollary of
Lemma~\ref{lem:EF}.

\begin{corollary}
  \label{cor:EF}
  Let $k \geq 1$, $u \in A^+$ be a word and let $v \in A^\infty$ be an
  \iword such that $\content{u} = \content{v}$. Then for any $\ell
  \geqslant 2^k$, we have $u^\infty \ksieq{2} u^\ell v$.
\end{corollary}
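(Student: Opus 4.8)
The plan is to derive the corollary from Lemma~\ref{lem:EF} applied with $i = 1$. The only thing I need to check is the hypothesis of Lemma~\ref{lem:EF} in this instance, namely that $v \ksieq{1} u^\infty$ whenever $\content{u} = \content{v}$; once that is in hand, Lemma~\ref{lem:EF} with $i = 1$ gives $u^\infty \ksieq{2} u^\ell v$ for all $\ell \geqslant 2^k$, which is exactly the statement. So the entire content is the claim announced just before the corollary: for $u \in A^+$, $v \in A^\infty$ with $\content{u} = \content{v}$, we have $v \ksieq{1} u^\infty$ for every $k \geq 1$.

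To prove $v \ksieq{1} u^\infty$, I would argue directly that any \sicu sentence of rank at most $k$ satisfied by $v$ is also satisfied by $u^\infty$. A \sicu sentence is (equivalent to) a purely existential sentence $\exists x_1 \cdots \exists x_m\ \psi(x_1,\dots,x_m)$ with $\psi$ quantifier-free and $m \le k$; over linearly ordered labelled structures such a sentence simply asserts the existence of $m$ positions whose relative order and labels realize one of finitely many prescribed patterns. Since $\content{v} = \content{u}$, every letter occurring in $v$ occurs in $u$, hence occurs infinitely often in $u^\infty$. Therefore, given any finite configuration of $m$ labelled positions witnessed in $v$ — say the chosen positions carry labels $a_1,\dots,a_m$ reading left to right (with possible coincidences) — one can reproduce the same pattern in $u^\infty$: pick occurrences of $a_1, a_2, \dots$ far enough to the right in successive blocks of $u^\infty$ so that their mutual order matches the order demanded in $v$. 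Equivalently, this is the standard \efgame observation that Duplicator wins the $k$-round \sicu-game from $v$ to $u^\infty$, because Spoiler only ever plays on $v$ (the \sicu restriction forbids Spoiler from switching sides), and each of Spoiler's at most $k$ moves, labelled by a letter of $\content{v} = \content{u}$, can be answered in $u^\infty$ while preserving order — there is always ``enough room'' to the right, as each letter recurs infinitely often.

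I do not anticipate any real obstacle here: this is a short ``glue'' corollary, and the substantive work is packaged inside Lemma~\ref{lem:EF}, which we are entitled to assume. The one point deserving a sentence of care is the asymmetry of $\ksieq{1}$: we only get $v \ksieq{1} u^\infty$, not the reverse, which is fine since Lemma~\ref{lem:EF} is stated with exactly that one-directional hypothesis. I would therefore keep the proof to a couple of lines: invoke the pre-stated fact that $\content{u} = \content{v}$ implies $v \ksieq{1} u^\infty$ (with the \efgame justification above), and then apply Lemma~\ref{lem:EF} with $i = 1$ to conclude $u^\infty \ksieq{2} u^\ell v$ for all $\ell \geqslant 2^k$.
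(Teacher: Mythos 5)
Your proposal is correct and follows exactly the route the paper takes: the paper also derives the corollary by observing that $\content{u}=\content{v}$ implies $v \ksieq{1} u^\infty$ for all $k$ (left as ``one can verify'') and then invoking Lemma~\ref{lem:EF} with $i=1$. Your \efgame justification of the auxiliary claim is sound and simply fills in the detail the paper omits.
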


\section{\texorpdfstring{\iChains for \iLangs}{\Chains for Omega-Languages}}
\label{sec:chains}
As explained, all algorithms for \ilangs of this paper are
strongly related to the algorithms for languages of~\cite{pzqalt}
and~\cite{pseps3}. In particular, we adapt and reuse the key notion of
``\ichain'' which was introduced in~\cite{pzqalt}. The section
is devoted to the presentation of this notion. First, we define
\ichains. We then detail the link between \ichains and our
decision problems, first for \sici, then for \bsci.

\subsection{\iChains}

\iChains were initially introduced in~\cite{pzqalt} as a tool designed
to investigate the (language) separation problem for the logics \sici
and \bsci. A ``set of \ichains'' can be associated to any morphism
$\alpha: A^+ \to S$ into a finite semigroup $S$. Intuitively, this set
captures information about what \sici and \bsci can express about the
languages recognized by $\alpha$ (including which ones are separable
with \sici and \bsci). The definition is based on the following lemma.

\begin{lemma} \label{lem:preo}
  Let $i,k \geq 1$ and let $L_1,L_2$ be two languages or two \ilangs.
  Then
  $L_1$ is {\bf not} \sici-separable (resp.\ {\bf not} \bsci-separable) from $L_2$ iff for all $k
  \geq 1$, there exist $w_1 \in L_1$ and $w_2 \in L_2$ such that
  $w_1 \ksieq{i} w_2$ (resp.\ such that
  $w_1 \kbceq{i} w_2$).
\end{lemma}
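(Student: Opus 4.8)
The plan is to prove both directions of the equivalence by reducing $\Fs$-separability to a purely combinatorial statement about the preorders $\ksieq{i}$ (or the equivalences $\kbceq{i}$). I will only treat the \sici case explicitly; the \bsci case is identical after replacing the preorder $\ksieq{i}$ by the equivalence $\kbceq{i}$ everywhere, using the fact that \bsci sentences are Boolean combinations of \sici sentences and that the $\kbceq{i}$-classes are exactly the classes of words (or \iwords) satisfying the same \bsci sentences.

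First I would observe the two structural facts that make the argument go through. On the one hand, for a fixed quantifier rank $k$ and a fixed level $i$, there are only finitely many inequivalent \sici sentences of rank at most $k$, so the preorder $\ksieq{i}$ has finite index; consequently every $\ksieq{i}$-upward-closed set (equivalently, every union of $\ksieq{i}$-classes that is upward closed) is defined by a \sici sentence of rank $k$, namely a finite disjunction over the relevant classes of the ``characteristic'' \sici sentence of each class. On the other hand, every \sici-definable language or \ilang is, by definition, a union of $\ksieq{i}$-upward-closed sets of this form for a suitable rank $k$ (the rank of the defining sentence). These two observations together say: a language/\ilang is \sici-definable iff it is a finite union of $\ksieq{i}$-upward-closures of points, iff it equals a $\ksieq{i}$-upward-closed set of some finite rank.

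Next I would prove the contrapositive of the ``if'' direction. Suppose that for every $k$ there exist $w_1 \in L_1$ and $w_2 \in L_2$ with $w_1 \ksieq{i} w_2$; I want to show $L_1$ is not \sici-separable from $L_2$. Assume toward a contradiction that some \sici-definable $K$ separates them, say $K$ is defined by a \sici sentence $\varphi$ of quantifier rank $k$. Pick the corresponding $w_1 \in L_1 \subseteq K$ and $w_2 \in L_2$ with $w_1 \ksieq{i} w_2$. Then $w_1 \models \varphi$, and since $\varphi$ is \sici of rank at most $k$ and $w_1 \ksieq{i} w_2$, we get $w_2 \models \varphi$, i.e. $w_2 \in K$; but $w_2 \in L_2$ and $K \cap L_2 = \emptyset$, contradiction. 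For the ``only if'' direction, suppose that there is some $k$ for which no such pair exists, i.e. for all $w_1 \in L_1$ and $w_2 \in L_2$ we have $w_1 \not\ksieq{i} w_2$. Define $K$ to be the $\ksieq{i}$-upward-closure of $L_1$ at rank $k$, that is $K = \{\, w \mid \exists w_1 \in L_1,\ w_1 \ksieq{i} w \,\}$. By the finite-index observation above, $K$ is \sici-definable (it is a finite union of upward-closures of points, each definable by a \sici sentence of rank $k$). Clearly $L_1 \subseteq K$. And $K \cap L_2 = \emptyset$: if some $w_2 \in L_2$ were in $K$, then $w_1 \ksieq{i} w_2$ for some $w_1 \in L_1$, contradicting our assumption. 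Hence $K$ is a \sici-separator.

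The only genuinely delicate point — and the step I expect to require care rather than being routine — is the claim that every $\ksieq{i}$-upward-closed set of finite rank $k$ is actually \emph{definable} by a \sici sentence, which hinges on the classical \efgame characterization: one must know that for each $\ksieq{i}$-class at rank $k$ there is a \sici sentence $\varphi_w$ of rank $k$ with $w' \models \varphi_w \iff w \ksieq{i} w'$, and that this holds uniformly for words and for \iwords. This is standard (it is the \efgame-style ``Hintikka sentence'' construction adapted to the \sici fragment, and the preservation is exactly the content of the preorder $\ksieq{i}$), so I would simply invoke it, but it is the one place where one should be precise about the fact that finitely many such classes exist and each is \sici-definable. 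Everything else — the direction swaps, the use of $L_1 \subseteq K$ and $K \cap L_2 = \emptyset$ — is immediate from the definitions in the excerpt.
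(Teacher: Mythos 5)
Your proposal is correct and follows essentially the same route as the paper: both directions hinge on the separator candidate $K = \{w \mid \exists v \in L_1,\ v \ksieq{i} w\}$, whose \sici-definability at rank $k$ follows from the finiteness of the number of nonequivalent \sici sentences of rank $k$ (the "delicate point" you flag is exactly the one the paper dispatches with the same parenthetical remark). The only cosmetic difference is that you phrase both directions contrapositively, whereas the paper argues one of them directly.
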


\begin{proof}
  We prove the first item, the second one is obtained similarly. Assume
  first that $L_1$ is {\bf not} \sici-separable from $L_2$. Set $k \geq
  1$ and consider the language or \ilang $K = \{w \mid \exists v \in L_1
  \text{   s.t. } v \ksieq{i} w\}$. By definition, $L_1 \subseteq K$ and
  $K$ may be defined by a \sici sentence of rank $k$ (this is because
  there are finitely many nonequivalent sentences of rank $k$). Therefore, there
  exists $w_2 \in L_2 \cap K$ (otherwise $K$ would separate $L_1$ from
  $L_2$, which is impossible by hypothesis). By definition of $K$, we get some
  $w_1 \in L_1$ such that $w_1 \ksieq{i} w_2$ which terminates the proof
  of this direction.

  For the other direction, assume that for all $k \geq 1$, there exist
  $w_1 \in L_1$ and $w_2 \in L_2$ such that $w_1 \ksieq{i} w_2$ and let
  $K \supseteq L_1$ be a language or \ilang that is defined by a \sici sentence
  $\varphi$. We prove that $K \cap L_2 \neq \emptyset$, \emph{i.e.}, that $K$
  cannot be a separator. Let $k$ be the rank of $\varphi$, we obtain
  $w_1 \in L_1$ and $w_2 \in L_2$ such that $w_1 \ksieq{i} w_2$ from our
  hypothesis. Since $L_1\subseteq K$, we have $w_1\in K$, and by definition of
  $K$, $w_1 \models \varphi$. Since $\varphi$ is of rank~$k$, by definition of
  $\ksieq{i}$ we must have
  $w_2 \models \varphi$, \emph{i.e.}, $w_2 \in L_2 \cap K$ which terminates the
  proof.
\end{proof}

Lemma~\ref{lem:preo} states simple criteria equivalent to \sici-
and \bsci-separability. However, both criteria involve a
quantification over all natural numbers. Therefore, it is not immediate that
they can be decided. Indeed, since both $A^+$ and $A^\infty$ are
infinite sets, $\ksieq{i}$ and $\kbceq{i}$ are endlessly refined as
$k$ gets larger.

\iChains are designed to deal with this issue. The separation problem takes
two \emph{regular} languages or \ilangs as input. Therefore, we have a single
morphism that recognizes them both. For example, in the \ilang case, we have
$\alpha: (A^+,A^\infty) \to (S_+,S_\infty)$ (with $(S_+,S_\infty)$ a finite
\isemi) that recognizes both inputs. Intuitively, $S_+$ and $S_\infty$ are
finite abstractions of $A^+$ and~$A^\infty$. Therefore, we may abstract the
preorders $\ksieq{i}$ on these two finite sets: this is what \ichains are. For
example, we say that $(s,t) \in (S_\infty)^2$ is a \ichain (of length $2$) for
$\alpha$ iff for all $k$, there exist $u,v \in A^\infty$ such that
$\alpha(u) = s$, $\alpha(v) = t$ and $u \ksieq{i} v$. For \ilangs recognized
by $\alpha$, it is then easy to adapt the two criteria of Lemma~\ref{lem:preo}
to work directly with the \ichains associated to $\alpha$. In other words, we
reduce separation to the (still difficult) problem of computing the set of
\ichains associated to a given input morphism.

\medskip
\noindent {\bf \Chains.} Let us now define \chains. Given a finite set $S$, a
\emph{\chain over $S$} is simply a finite word over $S$ (\emph{i.e.}, an
element of $S^+$). We shall only consider \chains over $S_+$ and over
$S_\infty$, where $S_+$ and $S_\infty$ are the two components of some \isemi
$(S_+,S_\infty)$. A remark about notation is in order: a word is usually
denoted as the concatenation of its letters. However, since $S_+$ is a
semigroup, this would be ambiguous: when $st \in (S_+)^+$, $st$ could
either mean a word with 2 letters $s$ and $t$, or the product of $s$ and $t$
in $S_+$. To avoid confusion, we will write $(s_1,\dots,s_n)$ a \chain of
length $n$. We denote \chains by $\bar{s},\bar{t},\dots$ and
sets of \chains by $\Ss,\Ts,\dots$.

If $(S_+,S_\infty)$ is an \isemi, then for all $n \in \nat$,
$(S_+)^n$ is a semigroup when equipped with the componentwise
multiplication $(s_1,\dots,s_n)(t_1,\dots,t_n)=(s_1t_1,\dots,s_nt_n)$. Moreover, the pair $((S_+)^n,(S_\infty)^n)$ is an
\isemi.

\medskip
\noindent
{\bf \iChains.} Fix $i \geq 1$ and $x \in \{+,\infty\}$. We associate
a set of \ichains to any map $\beta: A^x \to S$ where $S$ is a finite
set. The set $\Cs_i[\beta] \subseteq S^+$ of \ichains for $\beta$ is
defined as follows. Let $\bar{s} = (s_1,\dots,s_n) \in S^+$ be a
\chain. We have $\bar{s} \in \Cs_i[\beta]$ if and only if for all $k
\in \nat$, there exist $w_1,\dots,w_n \in A^x$ such that:
\[
  w_1 \ksieq{i} w_2 \ksieq{i} \cdots \ksieq{i} w_n  \text{ and for all } j,\
  \beta(w_j) = s_j.
\]
Moreover, we denote by $\Cs_{i,n}[\beta]$ the restriction of this
set to \chains of length $n$ only (\emph{i.e.}, $\Cs_{i,n}[\beta] =
\Cs_{i}[\beta] \cap S^n$).

\medskip
\noindent
{\bf \iChains for an \iSemi Morphism.} It follows from the
definition of \ichains that one may associate a set  $\Cs_i[\alpha]$
to any semigroup morphism $\alpha: A^+ \to S$. This set is exactly
the set of \ichains associated to $\alpha$ as~defined~in~\cite{pzqalt}.

Moreover, given a morphism $\alpha: (A^+,A^\infty) \to (S_+,S_\infty)$
into a finite \isemi $(S_+,S_\infty)$, one may associate two sets of
\ichains to $\alpha$: one to the morphism $\alpha_+: A^+ \to S_+$
($\Cs_{i}[\alpha_+] \subseteq (S_+)^+$) and one to the map
$\alpha_\infty: A^\infty \to S_\infty$ ($\Cs_{i}[\alpha_\infty]
\subseteq (S_\infty)^+$). We may now link \ichains to the separation
problem.

\subsection{\iChains and Separation for \sici}

We now connect \ichains to the separation problem. We begin with
the simplest connection, which is between \ichains of length $2$ and
separation for~\sici.

\begin{theorem} \label{thm:chainsep}
  Let $i \geq 1$, $x \in \{+,\infty\}$ and $\beta: A^x \to S$ a map into
  a finite set $S$. Given $F_1,F_2 \subseteq S$, $L_1 = \beta^{-1}(F_1)$
  and $L_2 = \beta^{-1}(F_2)$, the following are equivalent
  \begin{enumerate}
  \item $L_1$ is {\bf not} \sici-separable from $L_2$.
  \item there exist $s_1 \in F_1$ and $s_2 \in F_2$
    such that $(s_1,s_2) \in \Cs_{i,2}[\beta]$.
  \end{enumerate}
\end{theorem}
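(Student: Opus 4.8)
The plan is to prove Theorem~\ref{thm:chainsep} by unwinding the definitions of \ichains of length $2$ and of \sici-separability, and showing that they match exactly via Lemma~\ref{lem:preo}. The key observation is that $L_1 = \beta^{-1}(F_1)$ is a finite union $\bigcup_{s \in F_1} \beta^{-1}(s)$, and similarly for $L_2$, so non-separability of $L_1$ from $L_2$ should reduce to non-separability of some $\beta^{-1}(s_1)$ from some $\beta^{-1}(s_2)$.

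First I would establish the reduction from languages recognized by $\beta$ to singletons in $S$. For the direction $(2)\Rightarrow(1)$: if $(s_1,s_2)\in\Cs_{i,2}[\beta]$ with $s_1\in F_1$, $s_2\in F_2$, then by definition of \ichains of length $2$, for every $k$ there exist $w_1,w_2\in A^x$ with $\beta(w_1)=s_1$, $\beta(w_2)=s_2$ and $w_1\ksieq{i}w_2$. Since $s_1\in F_1$ and $s_2\in F_2$, we have $w_1\in L_1$ and $w_2\in L_2$; so the criterion of Lemma~\ref{lem:preo} (applied to \sici-separability) is met, hence $L_1$ is not \sici-separable from $L_2$.

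For the converse $(1)\Rightarrow(2)$: suppose $L_1$ is not \sici-separable from $L_2$. By Lemma~\ref{lem:preo}, for every $k$ there are $w_1^{(k)}\in L_1$ and $w_2^{(k)}\in L_2$ with $w_1^{(k)}\ksieq{i}w_2^{(k)}$. Writing $s_1^{(k)}=\beta(w_1^{(k)})\in F_1$ and $s_2^{(k)}=\beta(w_2^{(k)})\in F_2$, we obtain for each $k$ a pair $(s_1^{(k)},s_2^{(k)})\in F_1\times F_2$ witnessed at level $k$. The main obstacle is the quantifier ordering: we need a \emph{single} pair $(s_1,s_2)$ that works for \emph{all} $k$, whereas a priori the witnessing pair could depend on $k$. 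This is resolved by a pigeonhole/compactness argument: since $F_1\times F_2$ is finite, some pair $(s_1,s_2)$ occurs as $(s_1^{(k)},s_2^{(k)})$ for infinitely many $k$; combined with Fact~\ref{fct:refine}(1), which says the preorders $\ksieq{i}$ get finer as $k$ grows, a witness at some large level $k'\geq k$ is also a witness at level $k$. Hence for every $k$ there exist $w_1,w_2\in A^x$ with $\beta(w_1)=s_1$, $\beta(w_2)=s_2$ and $w_1\ksieq{i}w_2$, which is precisely the statement that $(s_1,s_2)\in\Cs_{i,2}[\beta]$. This completes the equivalence.
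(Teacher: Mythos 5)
Your proof is correct and matches the paper's intent exactly: the paper dismisses Theorem~\ref{thm:chainsep} as a ``straightforward consequence'' of Lemma~\ref{lem:preo}, and the two steps you supply --- unwinding the chain definition for $(2)\Rightarrow(1)$, and the pigeonhole-on-$F_1\times F_2$ plus Fact~\ref{fct:refine} argument for $(1)\Rightarrow(2)$ --- are precisely the argument the authors spell out for the analogous Theorem~\ref{thm:bcsep}. Nothing to add.
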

Theorem~\ref{thm:chainsep} is a straightforward consequence
Lemma~\ref{lem:preo} (statement for \sici). In view of the theorem, our
approach for the \sici-separation problem is as follows:
\begin{itemize}
\item for languages, we look for an algorithm computing
  $\Cs_{i,2}[\alpha]$ from an input morphism $\alpha: A^+ \to S$
  into a finite semigroup $S$.
\item for \ilangs, we look for an algorithm computing
  $\Cs_{i,2}[\alpha_\infty]$ from an input morphism $\alpha:
  (A^+,A^\infty) \to (S_+,S_\infty)$  into a finite \isemi
  $(S_+,S_\infty)$. In particular, this algorithm typically
  involves computing $\Cs_{i,2}[\alpha_+]$ first, which can be
  achieved by reusing first item, \emph{i.e.}, the  algorithm for word languages.
\end{itemize}
This approach is exactly the one from~\cite{pzqalt,pseps3} to solve separation for \sicd and
\sict over finite words: the following theorems are proven in these papers.

\begin{theorem}[\cite{pzqalt}] \label{thm:fisicd}
  Given as input a morphism $\alpha: A^+ \to S$ into a finite semigroup
  $S$, one can compute the set $\Cs_{2,2}[\alpha]$ of \dchains of length
  $2$ for $\alpha$.
\end{theorem}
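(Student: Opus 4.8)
This is the result imported as a black box from \cite{pzqalt}, so the proof is not reconstructed here; instead I describe how one would establish it. The plan is to proceed by a saturation argument: one starts with an obvious under-approximation of $\Cs_{2,2}[\alpha]$ and closes it under a finite list of sound operations until a fixpoint is reached. The soundness of each operation is witnessed by an \efgame (or equivalently by the composition and decomposition lemmas for $\ksieq{2}$), and the completeness — that the fixpoint is all of $\Cs_{2,2}[\alpha]$ — is the delicate half. Concretely, one would first observe that $\ksieq{2}$ can be described via a two-round game: the first ($\forall$-controlled) block picks positions, then the second ($\exists$-controlled) block picks more, and agreement of \sic1-types of suffixes/factors is what matters; this reduces $\ksieq{2}$-information to $\ksieq{1}$-information, which is just alphabet information in the relevant regime (cf.\ the remark before Corollary~\ref{cor:EF}).

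The key steps, in order, are: (i) set up the generating procedure, whose seed is the set of \dchains of length $2$ that arise trivially — e.g.\ $(s,s)$ for every $s\in S$, and $(s,t)$ whenever $t$ is obtained from $s$ by substituting a factor by a \uchain-equivalent one, i.e.\ a factor with the same content; (ii) close under the componentwise semigroup product on $(S)^2$ (sound by the Composition Lemma, Lemma~\ref{lem:EFmult}), which in particular gives closure under $\bar s \mapsto \bar s^\omega$; (iii) close under the crucial ``\sicd-specific'' operation that reflects the one extra quantifier block of \sicd over \sic1, typically an operation allowing, around an idempotent $e$, the insertion of any element whose content is contained in that of $e$ — this is the algebraic shadow of the game move ``the \exists-player can pad with arbitrarily many copies of a loop'' and is where Lemma~\ref{lem:EF} (with $i=1$) and Corollary~\ref{cor:EF} do their work; (iv) iterate (ii)–(iii) to a fixpoint, which terminates because $S$, hence $(S)^2$, is finite; (v) prove soundness, that every \dchain so produced lies in $\Cs_{2,2}[\alpha]$, by exhibiting for each $k$ the required witnesses $w_1\ksieq{2}w_2$ — this is a routine induction on the derivation, one \efgame argument per operation; and (vi) prove completeness, that an arbitrary $(s_1,s_2)\in\Cs_{2,2}[\alpha]$ is produced, by taking, for $k$ large (say $k$ exceeding the product of $|S|$ with a bound forcing many repeated idempotent blocks), witnesses $w_1\ksieq{2}w_2$, applying a Ramsey/factorization argument to $w_1$ and $w_2$ over $S$ to expose $\omega$-powers, and then tracing how the \sicd-type of $w_2$ must have been built from that of $w_1$ — showing that each step is one of the allowed operations.

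The main obstacle is step (vi), the completeness direction: one must show that the finitely many closure rules are \emph{exhaustive}, i.e.\ that every way \sicd can fail to distinguish two regular inputs is captured algebraically. This requires a careful bidirectional analysis of the rank-$k$ \efgame for \sicd: given witnesses $w_1\ksieq{2}w_2$ for every $k$, one uses a pumping/Simon-type factorization to write both words as products of blocks over $S$, identifies idempotent blocks, and argues that the ``difference'' between the $S$-factorizations of $w_1$ and $w_2$ is, block by block, one of the permitted local moves — content-preserving substitution or content-bounded insertion around an idempotent. Making the block boundaries of $w_1$ and $w_2$ compatible (so that the Decomposition Lemma, Lemma~\ref{lem:EFdecomp}, can be applied recursively without the rank budget running out too fast) is the technically demanding point, and it is precisely what forces the lower bound on $k$ in the hypotheses of Lemma~\ref{lem:EF} and Corollary~\ref{cor:EF}.
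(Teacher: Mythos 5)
This theorem is not proved in the paper you are working from: it is imported verbatim from \cite{pzqalt} and used strictly as a black box. The authors even warn, right after Corollary~\ref{cor:sig2sep} and again in the remark following Theorem~\ref{thm:fisict}, that the proof in \cite{pzqalt} ``requires to use a framework that is more general than \dchains (that of ``\djuns'')'' and relies on arguments independent of everything in the present paper. So there is no in-paper proof to compare yours against, and your choice to give a plan rather than a reconstruction is reasonable in itself.

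That said, your plan has a genuine gap exactly where you locate the difficulty, namely step (vi). A least fixpoint computed on pairs in $S^2$ under the operations you list (componentwise product, content-preserving substitution, content-bounded insertion around an idempotent) is not known to be complete, and the completeness induction does not close at the level of pairs: when you apply Lemma~\ref{lem:EFdecomp} to witnesses $w_1 \ksieq{2} w_2$ and argue block by block, the inductive invariant you need about each block is strictly stronger than membership of a pair in the computed set. This is precisely why \cite{pzqalt} replaces \dchains by the richer \djuns, runs the fixpoint on those, and only extracts $\Cs_{2,2}[\alpha]$ at the end; a sketch that never leaves $S^2$ is missing the key idea of the actual argument. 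Two smaller issues: your seed in step (i) (``$(s,t)$ whenever $t$ is obtained from $s$ by substituting a factor with the same content'') is an operation on words, not on elements of $S$, so it is neither obviously sound nor finitely checkable as stated; and the claim that $\ksieq{2}$-information reduces to alphabet information ``in the relevant regime'' is only true one level down, for $\ksieq{1}$, and cannot be invoked to discharge the $\Sigma_2$ game analysis itself. None of this says the enterprise is hopeless --- it is the known route --- but as a proof it is incomplete at the one step that carries all the weight.
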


\begin{theorem}[\cite{pseps3}] \label{thm:fisict}
  Given as input a morphism $\alpha: A^+ \to S$ into a finite semigroup
  $S$, one can compute the set $\Cs_{3,2}[\alpha]$ of \tchains of length
  $2$ for $\alpha$.
\end{theorem}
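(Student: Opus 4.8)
The plan is to obtain $\Cs_{3,2}[\alpha]$ by a saturation (least‑fixpoint) algorithm that first pre‑computes the relevant $\Sigma_2$‑information and then runs a few closure rules implementing the $\Sigma_2\to\Sigma_3$ step. Concretely, one first invokes the $\Sigma_2$‑machinery of~\cite{pzqalt} --- an extension of Theorem~\ref{thm:fisicd} --- to compute the set of \dchains of $\alpha$ of length at most some $N=N(|S|)$ (such a bound exists; this finite set is the only input from the sub‑level that the rest of the algorithm needs). One then computes a set $X\subseteq S\times S$ as the closure of the obviously valid \tchains $(s,s)$ under two rules: componentwise multiplication, $(s,t),(s',t')\in X\Rightarrow(ss',tt')\in X$, and an ``$\omega$‑power'' (or \emph{juncture}) rule that, from a \dchain $(e,s_2,\dots,s_{m-1},f)$ of the pre‑computed set whose endpoints are idempotent and which is ``absorbed'' by a chain already in $X$, outputs a new \tchain; this second rule is the one that buys an extra quantifier block. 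The output is $X=\Cs_{3,2}[\alpha]$.

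Soundness (every element of $X$ really lies in $\Cs_{3,2}[\alpha]$) and termination are the easy half. Termination is immediate, since $X$ grows monotonically inside the finite set $S\times S$, so only boundedly many rule applications occur. For soundness, each rule is shown to preserve chain‑hood by exhibiting, for every rank $k$, words $w_1\ksieq{3}w_2$ with the prescribed $\alpha$‑images: multiplication uses Lemma~\ref{lem:EFmult}, and the juncture rule uses the finite‑word analogue of Lemma~\ref{lem:EF} (case $i=2$) to turn the $\Sigma_2$‑ladder supplied by the input \dchain into a $\Sigma_3$‑comparison, the Decomposition Lemma~\ref{lem:EFdecomp} being used to glue this into the ambient pair $(s,t)$. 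Because only a bounded number of steps is involved, the quantifier rank stays under control, which is exactly what makes the ``for all $k$'' quantification in the definition of a \tchain go through; this is routine \efgame bookkeeping.

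Completeness --- that \emph{every} \tchain of length $2$ already belongs to $X$ --- is the hard part. By Theorem~\ref{thm:chainsep}, it is equivalent to showing that if $(s_1,s_2)\notin X$ then $\alpha^{-1}(s_1)$ is \sict‑separable from $\alpha^{-1}(s_2)$, and (tracing the argument) by a sentence of rank bounded in $|S|$. The intended proof builds such a \sict separator by induction along a Ramsey‑type factorisation forest of an arbitrary word of $\alpha^{-1}(s_1)$: at the leaves and at the ``idempotent'' internal nodes one uses the \emph{completeness} side of the pre‑computed $\Sigma_2$‑step --- the $\Sigma_2$‑behaviours excluded from the set of \dchains of length $\le N$ are refutable by \sicd formulas at bounded rank --- to get local \sicd constraints, and one then assembles these into a single \sict sentence by adding one outermost block of existential quantifiers pinning the factor boundaries, i.e.\ by running the juncture rule in reverse. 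The two delicate points --- that closure of $X$ under the two rules is \emph{precisely} the invariant that sustains this induction, and that the rank of the assembled sentence does not blow up --- form the technical core of the proof and are the step I expect to be the main obstacle; this is essentially the content of~\cite{pseps3}, and it is where the combinatorics specific to \sict, as opposed to \sicd, must be dealt with.
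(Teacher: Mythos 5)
First, a point of reference: this paper does not prove Theorem~\ref{thm:fisict} at all. It is imported from~\cite{pseps3} and used strictly as a black box (see the remark following Theorems~\ref{thm:fisicd} and~\ref{thm:fisict}), so there is no in-paper argument to compare you against; what follows measures your outline against the actual structure of the result it would have to reproduce.

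Your proposal is a plan rather than a proof: the completeness direction, which you correctly identify as the hard part, is deferred wholesale (``this is essentially the content of~\cite{pseps3}''), so nothing is established beyond the routine soundness/termination half. More importantly, there is a concrete structural gap in the architecture you propose. You pre-compute the \sicd-level information separately (as \dchains of bounded length --- already more than Theorem~\ref{thm:fisicd} literally provides, though obtainable from the juncture machinery of~\cite{pzqalt}) and then saturate a set $X \subseteq S \times S$ of candidate \tchains under two rules. This invariant is too weak. The algorithm of~\cite{pseps3} does not compute pairs closed under rules fed by an independently computed \sicd-layer; it computes ``$\Sigma_{2,3}$-trees'', objects in which \sicd- and \sict-comparisons are interleaved in a single witness structure, and the fixpoint runs over those. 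The present paper makes the necessity of this visible in Section~\ref{sec:sig3}: even to \emph{state} the $\omega$-word extension one needs the mixed \chains $\Ms[\alpha_+]$, i.e.\ triples $(s_1,s_2,s_3)$ witnessed for every $k$ by a single choice of words with $w_1 \ksieq{2} w_2 \ksieq{3} w_3$, and this set is not recoverable from the \dchains and \tchains of $\alpha$ taken separately --- knowing that $(s_1,s_2)$ is a \dchain and $(s_2,s_3)$ is a \tchain gives you two possibly different middle witnesses, not one $w_2$ serving both comparisons. The same obstruction hits your completeness induction: when you ``run the juncture rule in reverse'' along a factorization forest, the induction hypothesis must track how the \sicd-witnesses inside an idempotent block align with the \sict-witnesses of the ambient pair, and a set of pairs plus a separately computed list of \dchains cannot express that alignment. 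So the ``delicate point'' you flag is not merely delicate: with the invariant as you have set it up, the induction does not close, and repairing it requires redesigning the computed object itself, which is precisely the content of~\cite{pseps3}.
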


We generalize these two theorems in Section~\ref{sec:sig2} (for \sicd) and
Section~\ref{sec:sig3} (for \sict) for \iwords by presenting two
new algorithms. These algorithms both take a morphism
$\alpha: (A^+,A^\infty) \to (S_+,S_\infty)$ as input and compute the sets
$\Cs_{2,2}[\alpha_\infty]$ and $\Cs_{3,2}[\alpha_\infty]$ respectively. Note
that the algorithms of Theorem~\ref{thm:fisicd} and Theorem~\ref{thm:fisict}
are reused as sub-procedures in these new algorithms for \ilangs: computing
$\Cs_{2,2}[\alpha_\infty]$ and $\Cs_{3,2}[\alpha_\infty]$ requires to first
compute $\Cs_{2,2}[\alpha_+]$ and $\Cs_{3,2}[\alpha_+]$ respectively.

\begin{remark}
  The algorithms of
  Theorems~\ref{thm:fisicd} and~\ref{thm:fisict} both work with
  objects that are actually more general than \ichains: the \sicd
  algorithm works with ``\djuns'' and the \sict algorithm with an even
  more general notion: ``$\Sigma_{2,3}$-trees''.
  We do not present these more general notions because we do
  not need them outside of the algorithms of
  Theorems~\ref{thm:fisicd} and~\ref{thm:fisict}, which we use as black boxes.
\end{remark}

\subsection{\iChains and Separation for \bsci}

We finish by presenting the connection between the separation problem
for \bsci and \ichains. This time, the connection depends on the whole
set of \ichains. More precisely, it depends on yet another notion called
\emph{alternation}.

Set $x \in \{+,\infty\}$ and $\beta: A^x \to S$ as a map into a
finite set $S$. We say that a pair $(s,t) \in S^2$ is
$\sici$-alternating for $\beta$ iff for all $n \geq 1$,
$(s,t)^n \in \Cs_i[\beta]$ (where by $(s,t)^n$, we mean the \chain
$(s,t,s,t,\dots,s,t)$ of length $2n$).

\begin{theorem} \label{thm:bcsep}
  Let $i \geq 1$, $x \in \{+,\infty\}$ and $\beta: A^x \to S$ a map into
  a finite set $S$. Given $F_1,F_2 \subseteq S$, $L_1 = \beta^{-1}(F_1)$
  and $L_2 = \beta^{-1}(F_2)$, the following are equivalent,
  \begin{enumerate}
  \item $L_1$ is {\bf not} \bsci-separable from $L_2$.
  \item there exist $s_1 \in F_1$ and $s_2 \in F_2$
    such that $(s_1,s_2)$ is \sici-alternating.
  \end{enumerate}
\end{theorem}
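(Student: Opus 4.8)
The plan is to derive this from Lemma~\ref{lem:preo} (the \bsci statement) by showing that the existence of $w_1 \in L_1$, $w_2 \in L_2$ with $w_1 \kbceq{i} w_2$ for every rank $k$ is equivalent to the existence of a \sici-alternating pair $(s_1,s_2)$ with $s_1 \in F_1$, $s_2 \in F_2$. The key observation is that $\kbceq{i}$ is the equivalence generated by $\ksieq{i}$: for words or \iwords $u,v$ we have $u \kbceq{i} v$ if and only if there is a finite $\ksieq{i}$-zigzag connecting them. Since the target set $S$ is finite, such a zigzag can always be pumped into an alternating $\bar{s} = (s_1,s_2,s_1,s_2,\dots)$ pattern; conversely, reading off the endpoints of an alternating \chain gives back a zigzag between two elements of $F_1$ and $F_2$.

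\smallskip\noindent\emph{($\neg$(1) $\Rightarrow$ $\neg$(2)).}\enspace Assume first that $L_1$ is {\bf not} \bsci-separable from $L_2$. Fix $k \geq 1$. By Lemma~\ref{lem:preo} we get $w_1 \in L_1$ and $w_2 \in L_2$ with $w_1 \kbceq{i} w_2$. Unfolding the definition of the generated equivalence, there is a finite sequence $w_1 = z_0, z_1, \dots, z_m = w_2$ in $A^x$ such that consecutive elements are $\ksieq{i}$-comparable in one direction or the other. Using transitivity of $\ksieq{i}$ and inserting repetitions, I can rearrange this into a sequence that alternates direction at each step, i.e.\ produce a \chain of the form $(y_1, y_2, y_1, y_2, \dots)$ — more carefully, I may need to argue that any zigzag of $\ksieq{i}$ can be \emph{refined} to one in which the pattern of images under $\beta$ is eventually periodic with period~$2$, or simply that for arbitrarily long $n$ the alternating \chain $(\beta(w_1),\beta(w_2))^n$ lies in $\Cs_i[\beta]$. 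The cleanest route: from $w_1 \ksieq{i} w_2$-style comparability one can stitch copies together, but the directions of the zigzag are an obstacle here and this is where the main care is needed (see below). Once one has, for each $n$, witnesses realizing $(s_1,s_2)^n$ with $s_1 = \beta(w_1) \in F_1$ and $s_2 = \beta(w_2) \in F_2$ — uniformly in $n$ — a compactness/pigeonhole argument over the finitely many pairs in $F_1 \times F_2$ (combined with the refinement $\sieq{k+1}{i}\ \Rightarrow\ \sieq{k}{i}$ from Fact~\ref{fct:refine}) yields a single pair $(s_1,s_2)$ that works for all $n$ simultaneously, i.e.\ is \sici-alternating, with $s_1\in F_1$ and $s_2\in F_2$.

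\smallskip\noindent\emph{($\neg$(2) $\Rightarrow$ $\neg$(1)).}\enspace Conversely, suppose $(s_1,s_2)$ is \sici-alternating with $s_1 \in F_1$, $s_2 \in F_2$. Fix $k \geq 1$. Choose $n$ large enough (any $n \geq 1$ suffices here, but taking $n$ large is harmless), so that $(s_1,s_2)^n \in \Cs_i[\beta]$; by definition of \ichains there are $w_1, w_2, w_3, w_4, \dots, w_{2n} \in A^x$ with $w_1 \ksieq{i} w_2 \ksieq{i} \cdots \ksieq{i} w_{2n}$ and $\beta(w_{2j-1}) = s_1$, $\beta(w_{2j}) = s_2$ for all $j$. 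Then $w_1 \in L_1$ (since $s_1 \in F_1$), $w_{2n} \in L_2$ (since $s_2 \in F_2$ and $2n$ is even), and the whole \ksieq{i}-chain exhibits $w_1 \ksieq{i} \cdots \ksieq{i} w_{2n}$, hence in particular $w_1$ and $w_{2n}$ lie in the same $\kbceq{i}$-class (every consecutive pair is $\ksieq{i}$-comparable, so the generated equivalence identifies them). By Lemma~\ref{lem:preo}, since this can be done for every $k$, $L_1$ is not \bsci-separable from $L_2$.

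\smallskip
The routine direction is the second one; the delicate point is the first, where one must convert an \emph{arbitrary} $\kbceq{i}$-zigzag (whose steps can point either way) into the specific \emph{monotone} \ksieq{i}-\chain with strictly alternating $\beta$-images that the definition of \sici-alternating demands. I expect the resolution to use that $\ksieq{i}$ is a preorder on a set that, modulo rank-$k$ equivalence, is finite: a long enough zigzag must revisit a pair of classes, allowing one to excise loops and normalize the alternation, and the finiteness of $F_1 \times F_2$ then lets one pass to a single uniform pair by König-style / pigeonhole reasoning. Handling the interaction of the two alphabets (words vs.\ \iwords) is automatic here since $\beta: A^x \to S$ is fixed throughout and all the comparisons $\ksieq{i}$ are taken within $A^x$.
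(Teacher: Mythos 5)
There is a genuine gap, and it has a single root cause: you read $\kbceq{i}$ as the symmetric-transitive closure of $\ksieq{i}$ (a ``zigzag'' equivalence), whereas the paper --- despite the loose phrase ``equivalence generated by'' --- explicitly defines $u \kbceq{i} v$ as $u \ksieq{i} v$ \emph{and} $v \ksieq{i} u$, equivalently ``$u$ and $v$ satisfy the same \sici (hence the same \bsci) sentences of rank $k$''. That kernel equivalence is the one Lemma~\ref{lem:preo} speaks about. With the correct definition, the ``delicate point'' you isolate in the direction $(1)\Rightarrow(2)$ evaporates: from $w_1 \kbceq{i} w_2$ you get both $w_1 \ksieq{i} w_2$ and $w_2 \ksieq{i} w_1$, so $w_1 \ksieq{i} w_2 \ksieq{i} w_1 \ksieq{i} w_2 \ksieq{i} \cdots$ is already a monotone chain of arbitrary length with alternating images; no normalization of zigzags is needed. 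Your pigeonhole over $k$ (fixing a single pair $(s_1,s_2)$ via finiteness of $S$ and Fact~\ref{fct:refine}) is correct and matches the paper, but as written this direction is incomplete, because the step you defer --- converting an arbitrary zigzag into an alternating chain --- is never actually carried out.

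The same misreading makes your direction $(2)\Rightarrow(1)$ wrong as stated. From $w_1 \ksieq{i} w_2 \ksieq{i} \cdots \ksieq{i} w_{2n}$ you conclude $w_1 \kbceq{i} w_{2n}$ ``since every consecutive pair is comparable''. Transitivity only yields $w_1 \ksieq{i} w_{2n}$; it does not yield $w_{2n} \ksieq{i} w_1$, so the endpoints need not satisfy the same \bsci sentences, and a one-directional chain proves nothing here. In particular your remark that ``any $n \geq 1$ suffices'' is exactly backwards: the length of the chain is the whole point. The paper takes $n = \ell$, the (finite) index of $\kbceq{i}$ at rank $k$, applies the pigeonhole principle to find $j < h$ with $u_j \kbceq{i} u_h$, and then the sandwich $u_j \ksieq{i} u_{j+1} \ksieq{i} u_h \ksieq{i} u_j$ forces $u_j \kbceq{i} u_{j+1}$, with $u_j$ and $u_{j+1}$ lying one in $L_1$ and one in $L_2$. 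Without this pigeonhole step the direction does not go through.
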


\begin{proof}
  There are two directions to prove. Assume first that $L_1$ is not
  \bsci-separable from $L_2$. By Lemma~\ref{lem:preo}, we know that
  for all $k \geq 1$ we have $w_1 \in L_1$ and $w_2 \in L_2$ such that
  $w_1 \kbceq{i} w_2$. Hence, it follows that for all $k \geq 1$, we
  have $w_1 \in L_1$ and $w_2 \in L_2$ such that,
  \[
    w_1 \ksieq{i} w_2 \ksieq{i} w_1 \ksieq{i} w_2 \ksieq{i} \cdots
  \]
  Note that $w_1,w_2$ depend on $k$, but since $S$
  is finite, one can assume their images to be constant for infinitely many
  values of $k$. Since $w_1 \sieq{k+1}{i} w_2 \Rightarrow w_1 \sieq{k}{i} w_2$ (see
  Fact~\ref{fct:refine}), we may assume that they are constant for all $k$,
  \emph{i.e.}, that there exist $s_1 \in
  \beta(L_1)$ and $s_2 \in \beta(L_2)$ such that for all $k \geq 1$
  the corresponding $w_1$ and $w_2$ are mapped to $s_1$ and $s_2$ respectively. This
  exactly means thats for all $n \geq 1$, the \chain $(s_1,s_2)^n$ is a \ichain for
  $\beta$. Therefore, $(s_1,s_2)$ is \sici-alternating, which terminates
  the proof of this direction.

  It remains to prove the other direction. Assume that there exist $s_1
  \in \beta(L_1)$ and $s_2 \in \beta(L_2)$ such that $(s_1,s_2)$ is
  \sici-alternating. We have to prove that $L_1$ is {\bf not}
  \bsci-separable from $L_2$. We know from Lemma~\ref{lem:preo} that it
  suffices to prove that for all $k \geq 1$, there exist $w_1 \in L_1$ and
  $w_2 \in L_2$ such that $w_1 \kbceq{i} w_2$.

  Set $k \geq 1$. Since there are only finitely many nonequivalent \bsci formulas of
  rank $k$, the relation $\kbceq{i}$ has finite index.  Let $\ell$
  be the number of equivalence classes of $\kbceq{i}$. Since $(s_1,s_2)$ is
  \sici-alternating, we know that the \chain $(s_1,s_2)^{\ell}$ of
  length $2\ell$ is a \ichain for $\beta$. Hence, we have $2\ell$ words
  $u_1,\dots,u_{2\ell}$ such that for all $j \geq 1$, $u_{2j-1} \in L_1$
  and $u_{2j} \in L_2$, and $u_1 \ksieq{i} u_2 \ksieq{i} \cdots \ksieq{i}
  u_{2\ell}$. By choice of $\ell$, the pigeonhole principle
  gives $j < h$ such that $u_j \kbceq{i} u_h$. Since
  $u_j  \ksieq{i} u_{j+1} \ksieq{i} u_{h}\ksieq{i} u_j$, it follows that $u_j
  \kbceq{i} u_{j+1}$ which terminates the proof since either $u_j \in
  L_1$  and $u_{j+1} \in L_2$, or $u_j \in L_2$  and $u_{j+1} \in L_1$.
\end{proof}

In view of Theorem~\ref{thm:bcsep}, the separation problem for \bsci reduces to
the computation of the $\sici$-alternating pairs. Unfortunately,
whether there exists such an algorithm is open for $i \geq 2$, even in
the case of finite words.

However, Theorem~\ref{thm:bcsep} yields an immediate corollary that
applies to membership only. Given $x \in \{+,\infty\}$ and $\beta: A^x
\to S$ a map into a finite set $S$, we say that $\beta$ has
\emph{bounded \sici-alternation} iff there exists no
$\sici$-alternating pair $(s,t) \in S^2$ for $\beta$ such that $s \neq
t$.

\begin{corollary} \label{cor:bcmemb}
  Let $i \geq 1$, $x \in \{+,\infty\}$ and $\beta: A^x \to S$ be a map into
  a finite set~$S$. Then for any $F \subseteq S$, $\beta^{-1}(F)$ is
  \bsci-definable if and only if $\beta$ has bounded \sici-alternation.
\end{corollary}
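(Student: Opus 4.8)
The plan is to derive the corollary directly from Theorem~\ref{thm:bcsep}, instantiated at a language and its complement. First I would observe that for any $F \subseteq S$ the sets $\beta^{-1}(F)$ and $\beta^{-1}(S\setminus F)$ partition $A^x$, so that $\beta^{-1}(S\setminus F)$ is exactly the complement of $\beta^{-1}(F)$. Consequently, $\beta^{-1}(F)$ is \bsci-definable if and only if it is \bsci-separable from $\beta^{-1}(S\setminus F)$: any separator $K$ must satisfy $\beta^{-1}(F) \subseteq K \subseteq A^x \setminus \beta^{-1}(S\setminus F) = \beta^{-1}(F)$, hence $K = \beta^{-1}(F)$, while the converse implication is immediate since a \bsci-definable language separates itself from its complement. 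This is just the membership-to-separation reduction, carried out here at the level of the finite abstraction $\beta$, and it requires no regularity assumption.

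Next I would apply Theorem~\ref{thm:bcsep} with $L_1 = \beta^{-1}(F)$, $L_2 = \beta^{-1}(S\setminus F)$, $F_1 = F$ and $F_2 = S\setminus F$. It yields that $\beta^{-1}(F)$ is \emph{not} \bsci-separable from $\beta^{-1}(S\setminus F)$ precisely when there exist $s_1 \in F$ and $s_2 \in S\setminus F$ such that $(s_1,s_2)$ is \sici-alternating for $\beta$. Combining this with the previous step, $\beta^{-1}(F)$ is \bsci-definable if and only if $\beta$ admits no \sici-alternating pair $(s_1,s_2)$ with $s_1 \in F$ and $s_2 \notin F$.

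It then remains to relate this per-$F$ criterion to bounded \sici-alternation. If $\beta$ has bounded \sici-alternation, then no \sici-alternating pair $(s,t)$ has $s \neq t$; hence, for every $F$, there is a fortiori no \sici-alternating pair $(s_1,s_2)$ with $s_1 \in F$ and $s_2 \notin F$, so every $\beta^{-1}(F)$ is \bsci-definable. Conversely, if bounded \sici-alternation fails, I would fix a \sici-alternating pair $(s,t)$ with $s \neq t$ and take $F = \{s\}$: then $s \in F$ and $t \notin F$, so $\beta^{-1}(\{s\})$ is not \bsci-definable. This establishes the equivalence claimed by the corollary.

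I do not expect any real obstacle: the statement is a formal consequence of Theorem~\ref{thm:bcsep}. The only point that deserves attention is the handling of the quantification over $F$ — the nontrivial direction must exhibit an actual witness language, which is why one instantiates $F$ as the singleton $\{s\}$ read off from an alternating pair with distinct components. It is also worth keeping in mind that both Theorem~\ref{thm:bcsep} and the membership-to-separation reduction apply to an arbitrary map $\beta$ into a finite set, so that no genericity hypothesis on $\beta$ is needed.
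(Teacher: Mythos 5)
Your proof is correct and is exactly the argument the paper has in mind when it calls Corollary~\ref{cor:bcmemb} an ``immediate'' consequence of Theorem~\ref{thm:bcsep}: reduce membership to separation from the complement $\beta^{-1}(S\setminus F)$, apply the theorem, and handle the quantification over $F$ by instantiating a singleton $F=\{s\}$ from an alternating pair with distinct components. You also correctly read the corollary with the universal quantifier over $F$ scoped to the left-hand side of the equivalence (as the paper uses it together with Fact~\ref{fct:synta}), which is the only point where care is needed.
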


\noindent
Combining Corollary~\ref{cor:bcmemb} with Fact~\ref{fct:synta} yields
a criterion for \bsci-membership: a regular language (resp.
\ilang) is definable in \bsci iff its syntactic morphism has bounded
\sici-alternation. This is the approach used in~\cite{pzqalt}
to obtain a (language) membership algorithm for \bscd. More precisely,
the following result~is~proved.

\begin{theorem}[\cite{pzqalt}] \label{thm:fibscd}
  Given as input a morphism $\alpha: A^+ \to S$ into a finite semigroup
  $S$, one can decide whether $\alpha$ has bounded \sicd-alternation or
  not.
\end{theorem}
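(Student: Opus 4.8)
The plan is to reduce ``$\alpha$ has bounded \sicd-alternation'' to a finite, purely algebraic check over the semigroup $S$, using as a black box the algorithm of Theorem~\ref{thm:fisicd} that computes the set $\Cs_{2,2}[\alpha] \subseteq S^2$ of \dchains of length~$2$. Unfolding the definition, $\alpha$ fails to have bounded \sicd-alternation if and only if there is a pair $(s,t) \in S^2$ with $s \neq t$ that is \sicd-alternating, i.e.\ such that $(s,t)^n \in \Cs_2[\alpha]$ for every $n \geq 1$. Since this quantifies over all $n$, it is not a priori decidable, so the real content of the proof is to establish that the set of \sicd-alternating pairs is already determined by the finite data $\Cs_{2,2}[\alpha]$ together with the multiplication of $S$, its idempotents and, in the alphabet-compatible setting, the content map $s \mapsto \content{s}$.

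One direction is routine. If $(s,t)$ is \sicd-alternating then, fixing any rank $k$ and taking a chain $w_1 \ksieq{2} \cdots \ksieq{2} w_{2n}$ with alternating images $s,t,\dots,s,t$ for large $n$, one reads off $(s,t),(t,s) \in \Cs_{2,2}[\alpha]$ directly, and by applying the Composition Lemma (Lemma~\ref{lem:EFmult}) to suitably chosen infixes and invoking the pigeonhole principle inside the finite semigroup $S$, one extracts an idempotent $e$ together with the handful of further memberships in $\Cs_{2,2}[\alpha]$ (and the accompanying content inclusions) that the characterization will require. The delicate direction is the converse: from such a finite configuration one must build, for \emph{every} rank $k$ and \emph{every} length $2n$, an actual chain $w_1 \ksieq{2} \cdots \ksieq{2} w_{2n}$ realising the pattern $(s,t,\dots,s,t)$. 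One cannot simply splice together $n$ length-$2$ witnesses, because $\ksieq{2}$ does not relate arbitrary representatives of the same image; instead each witnessing pair $u \ksieq{2} v$ must first be ``stabilised'' into a pair $z^{N} u' z^{N} \ksieq{2} z^{N} v' z^{N}$, where $\alpha(z)$ is a fixed idempotent (the $e$ above) and $N$ is large compared to~$k$, so that consecutive blocks share a common idempotent buffer $z^{N}$ and can be concatenated into a single $\ksieq{2}$-chain using Lemma~\ref{lem:EFmult} together with the \sicd-specific principle that a short block of bounded content may be inserted inside such a power without changing the rank-$k$ \sicd-type (the finite-word analogue of the insertion phenomenon behind Lemma~\ref{lem:EF}). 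Keeping this gluing uniform in $k$ is the main obstacle; it is exactly the \sicd combinatorics carried out in~\cite{pzqalt}.

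Granting the characterization, the decision procedure is immediate: run the algorithm of Theorem~\ref{thm:fisicd} to obtain $\Cs_{2,2}[\alpha]$, then search exhaustively over all pairs $(s,t) \in S^2$ with $s \neq t$ and all idempotents $e \in S$ for a configuration satisfying the finitely many algebraic conditions of the characterization, answering that $\alpha$ does \emph{not} have bounded \sicd-alternation precisely when such a witness is found. Termination is clear since $S$ and $\Cs_{2,2}[\alpha]$ are finite and explicitly available, and correctness is the equivalence proved above. Together with Fact~\ref{fct:synta} and Corollary~\ref{cor:bcmemb}, this yields the advertised membership test: a regular language of finite words is \bscd-definable iff its (computable) syntactic morphism has bounded \sicd-alternation.
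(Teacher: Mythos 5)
The paper does not actually prove Theorem~\ref{thm:fibscd}: it is imported verbatim from~\cite{pzqalt} and used strictly as a black box (the text explicitly notes that its proof is ``independent'' of the arguments developed here). So there is no in-paper proof to match; the question is whether your proposal stands on its own. It does not. The entire argument hinges on a finite, decidable characterization of the \sicd-alternating pairs in terms of $\Cs_{2,2}[\alpha]$, the multiplication of $S$, its idempotents and the content map --- and you never state that characterization. Phrases like ``the handful of further memberships in $\Cs_{2,2}[\alpha]$ (and the accompanying content inclusions) that the characterization will require'' and ``the finitely many algebraic conditions of the characterization'' name a placeholder, not a condition one could actually test. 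Without the explicit criterion, the ``decision procedure'' in your last paragraph has nothing to check, and the claimed equivalence cannot be verified in either direction. The easy direction you do carry out (extracting $(s,t),(t,s)\in\Cs_{2,2}[\alpha]$ from an alternating pair via closure under subwords) only yields a necessary condition, which is certainly not sufficient.

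The hard direction is also only gestured at. You correctly identify the obstruction --- length-$2$ witnesses cannot be naively concatenated because $\ksieq{2}$ relates specific words, not $\alpha$-classes --- and the idea of buffering with high powers $z^N$ of a word mapping to an idempotent is in the right spirit. But the ``\sicd-specific principle that a short block of bounded content may be inserted inside such a power without changing the rank-$k$ \sicd-type'' is exactly the lemma that would need to be formulated and proved for finite words (the paper only provides Lemma~\ref{lem:EF} and Corollary~\ref{cor:EF} for \iwords), and making the construction uniform in $k$ while producing chains of every length $2n$ is precisely the combinatorial core of~\cite{pzqalt}. Deferring it with ``it is exactly the \sicd combinatorics carried out in~\cite{pzqalt}'' concedes that the proof is not being given. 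As written, the proposal is a plausible plan of attack, not a proof: both the object to be decided and the key technical lemma are left unspecified.
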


\noindent
In Section~\ref{sec:bsig2} we obtain our (\ilang) algorithm for
\bscd-membership by~proving that given a morphism
$\alpha: (A^+,A^\infty) \to (S_+,S_\infty)$ as input, one can decide whether
$\alpha_\infty$ has bounded \sicd-alternation or not. More precisely, we prove
that $\alpha_\infty$ having bounded \sicd-alternation is equivalent to two
decidable properties of $\alpha$. The first one is that $\alpha_+$ has bounded
\sicd-alternation (which we can decide by Theorem~\ref{thm:fibscd}) and the
second is a simple equation that needs to be satisfied by~$(S_+,S_\infty)$.

\section{\texorpdfstring{A Separation Algorithm for \sicd}{A Separation Algorithm for  Sigma2}}
\label{sec:sig2}
In this section, we present an algorithm for the \ilang separation
problem associated to \sicd. As expected, this algorithm is based on
the computation of \dchains of length $2$ (see
Theorem~\ref{thm:chainsep}): we prove that given a morphism $\alpha$
into a finite \isemi, one can compute $\Cs_{2,2}[\alpha_\infty]$.

Given any \emph{alphabet compatible} morphism $\alpha: (A^+,A^\infty)
\to (S_+,S_\infty)$ into a finite \isemi, we denote by
$\Sat_{\sicd}(\alpha)$  the set of all pairs
\[
  (r_1(s_1)^\infty,\;r_2(s_2)^\omega t_2) \in S_\infty\times S_\infty
\]
with $(r_1,r_2)  \in \Cs_{2,2}[\alpha_+]$, $(s_1,s_2) \in
\Cs_{2,2}[\alpha_+]$, $t_2 \in \alpha(A^\infty)$ and $\content{s_1} =
\content{t_2}$ (note that the last condition is well defined since
$\alpha$ is alphabet compatible).

\begin{proposition} \label{prop:sig2main}
  Let $\alpha: (A^+,A^\infty) \to (S_+,S_\infty)$ be an alphabet
  compatible morphism into a finite \isemi $(S_+,S_\infty)$. Then,
  $\Cs_{2,2}[\alpha_\infty] = \Sat_{\sicd}(\alpha)$.
\end{proposition}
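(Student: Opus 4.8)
\noindent
This is an equality of two subsets of $S_\infty\times S_\infty$, so the plan is to prove the two inclusions, the one I expect to be routine being $\Sat_{\sicd}(\alpha)\subseteq\Cs_{2,2}[\alpha_\infty]$. Fix a pair $(r_1(s_1)^\infty,\,r_2(s_2)^\omega t_2)$ witnessing membership in $\Sat_{\sicd}(\alpha)$ and fix $k\ge 1$. Using $(r_1,r_2)\in\Cs_{2,2}[\alpha_+]$ and $(s_1,s_2)\in\Cs_{2,2}[\alpha_+]$, choose words $p_1\ksieq{2}p_2$ and $q_1\ksieq{2}q_2$ with $\alpha(p_j)=r_j$, $\alpha(q_j)=s_j$, and choose $w\in A^\infty$ with $\alpha(w)=t_2$. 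Put $u=p_1q_1^\infty$ and $v=p_2\,q_2^{\,N}\,w$, where $N$ is a multiple of $\omega$ with $N\ge 2^k$; then $\alpha(u)=r_1(s_1)^\infty$ and $\alpha(v)=r_2(s_2)^\omega t_2$, since $(s_2)^\omega$ is idempotent. Because $\content{s_1}=\content{t_2}$ and $\alpha$ is alphabet compatible, $\content{q_1}=\content{w}$, so Corollary~\ref{cor:EF} gives $q_1^\infty\ksieq{2}q_1^{\,N}w$; the Composition Lemma gives $q_1^{\,N}w\ksieq{2}q_2^{\,N}w$ and then $u=p_1q_1^\infty\ksieq{2}p_2q_2^{\,N}w=v$ by transitivity of $\ksieq{2}$. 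As $k$ was arbitrary, $(u,v)$ certifies $(r_1(s_1)^\infty,\,r_2(s_2)^\omega t_2)\in\Cs_{2,2}[\alpha_\infty]$.

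\medskip\noindent
For the converse, let $(s,t)\in\Cs_{2,2}[\alpha_\infty]$, so for each $k$ there are $u_k\ksieq{2}v_k$ with $\alpha(u_k)=s$, $\alpha(v_k)=t$. Fix $k$ large. By Ramsey's theorem, factor $u_k=p\,Z_1\cdots Z_J\,\eta$ where $p$ is a prefix, $Z_1,\dots,Z_J$ are consecutive ``period blocks'' all having the same idempotent image $e\in S_+$ and the same alphabet $B$, and $\eta$ is an infinite tail whose further period blocks again have image $e$ and alphabet $B$; arrange in addition that $\alpha(p)\,e=\alpha(p)$ and that $J$ is a constant depending only on $|S_+|$, large enough for the step below. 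Then $s=r_1(s_1)^\infty$ with $r_1=\alpha(p)$, $s_1=e$, $\content{s_1}=B$ and $\alpha(\eta)=e^\infty$. Applying the Decomposition Lemma $J+1$ times transports this factorization to $v_k=p'\,m_1'\cdots m_J'\,\tau'$ with $p\sieq{k'}{2}p'$, $Z_j\sieq{k'}{2}m_j'$ and $\eta\sieq{k'}{2}\tau'$ at rank $k'=k-J-1$. Since $m_1'\cdots m_J'$ is a long product in the finite semigroup $S_+$, it has a consecutive subproduct $m_a'\cdots m_b'$ with idempotent value $f$; set $s_2=f$, $r_2=\alpha(p'm_1'\cdots m_{a-1}')$ and $t_2=\alpha(m_{b+1}'\cdots m_J'\,\tau')\in\alpha(A^\infty)$, so that $t=r_2(s_2)^\omega t_2$. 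Recombining the transported relations with the Composition Lemma, the pair $\bigl(p\,Z_1\cdots Z_{a-1},\,p'm_1'\cdots m_{a-1}'\bigr)$ witnesses $(r_1,r_2)$ and the pair $\bigl(Z_a\cdots Z_b,\,m_a'\cdots m_b'\bigr)$ witnesses $(s_1,s_2)$ as length-$2$ \sicd-chains for $\alpha_+$ at rank $k'$ — the identity $r_1e=r_1$ being exactly what makes the first of these words have image $r_1$. Finally $(r_1,r_2,s_1,s_2,t_2)$ ranges over a finite set as $k$ varies, so by pigeonhole a single value occurs for infinitely many $k$, and since the preorders $\ksieq{2}$ refine as $k$ grows (Fact~\ref{fct:refine}) that value then works for all $k$; this gives $(r_1,r_2)\in\Cs_{2,2}[\alpha_+]$ and $(s_1,s_2)\in\Cs_{2,2}[\alpha_+]$.

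\medskip\noindent
What remains, and what I expect to be the real difficulty, is the alphabet condition $\content{s_1}=\content{t_2}$, i.e.\ $\content{t_2}=B$. The inclusion $B\subseteq\content{t_2}$ is easy, since $\eta\sieq{k'}{2}\tau'$ yields $\eta\sieq{k'}{1}\tau'$ and hence $B=\content{\eta}\subseteq\content{\tau'}\subseteq\content{t_2}$. For the reverse inclusion the decompositions must be chosen with care. The main lever is that the sentence $\exists x\,\forall y\,\bigl(x<y\rightarrow\bigvee_{a\in B}P_a(y)\bigr)$, asserting that past some position every letter lies in $B$, is \sicd of rank~$2$; as the letters occurring infinitely often in $u_k$ are precisely those of $B$, it holds in $u_k$, hence in $v_k$, so $v_k$ has only $B$-letters beyond a finite (but a priori unbounded) prefix. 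One then has to make the element $t_2$ land with alphabet exactly $B$: either by replacing it by an equivalent $\alpha(w')$ with $\content{w'}=B$, absorbing the missing letters of $B$ into $(s_2)^\omega$ (whose alphabet already contains $B$), or by inserting extra period super-blocks in $u_k$ before $Z_1,\dots,Z_J$ so that their long transported images swallow the bad prefix of $v_k$ — each option requiring a careful reconciliation of positions, lengths, alphabets and the rank loss through the many applications of the Decomposition Lemma. This coordination is the technical heart of the proof; the rest is routine use of Ramsey's theorem, the composition and decomposition lemmas, and finiteness.
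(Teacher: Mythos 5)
Your proof of $\Sat_{\sicd}(\alpha)\subseteq\Cs_{2,2}[\alpha_\infty]$ is correct and is essentially the paper's argument. In the converse direction your overall strategy (Ramsey factorization of $u_k$, transporting it to $v_k$ with the Decomposition Lemma, a pigeonhole inside $v_k$'s factors, and a final pigeonhole over $k$ using finiteness and Fact~\ref{fct:refine}) is sound and matches the paper's, up to a harmless variation: you extract an \emph{idempotent} consecutive block $f=\alpha(m'_a\cdots m'_b)$ from $v_k$, whereas the paper only extracts a block whose image stabilizes the prefix, i.e.\ $r_2=r_2(s_2)^\omega$; both yield $t=r_2(s_2)^\omega t_2$. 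The genuine gap is the last step: you never prove $\content{t_2}\subseteq B$. You correctly flag that the $\sicu$ sentences only give $B\subseteq\content{t_2}$, and that the rank-2 \sicd sentence ``eventually all letters are in $B$'' only bounds the non-$B$ letters of $v_k$ to an a priori unbounded prefix; you then sketch two repair strategies (absorbing letters into $(s_2)^\omega$, or inserting extra super-blocks) without carrying either out. As written, the proof is incomplete at exactly the point you call its technical heart.

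The missing idea is that this step is a one-liner, not the heart of the proof. For any letter $a$, non-occurrence of $a$ is expressed by $\forall x\,\neg P_a(x)$, which is \picu and hence a \sicd sentence of quantifier rank $1$; together with $\exists x\,P_a(x)$ this means that for $k'\geq 2$ the relation $\sieq{k'}{2}$ preserves the alphabet of an \iword \emph{exactly}, in both directions --- not merely as an inclusion. Your own construction already provides, via the Composition Lemma applied to $Z_j\sieq{k'}{2}m'_j$ and $\eta\sieq{k'}{2}\tau'$, the relation
\[
Z_{b+1}\cdots Z_J\,\eta \;\sieq{k'}{2}\; m'_{b+1}\cdots m'_J\,\tau',
\]
and the left-hand side has alphabet exactly $B$. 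Hence $\content{t_2}=\content{m'_{b+1}\cdots m'_J\tau'}=B=\content{s_1}$ immediately, with no reconciliation of positions or lengths needed. This is precisely how the paper concludes (it takes $\ell_+\geq 2$ and invokes the rank-$2$ \sicd testability of the alphabet on the suffix $u_{i+1}u_{i+2}\cdots\sieq{\ell_+}{2}z$). With this observation inserted, your argument is complete and equivalent to the paper's.
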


A simple consequence of Proposition~\ref{prop:sig2main} is that the
\ilang separation problem is decidable for \sicd. Indeed, recall
that for any two regular \ilangs, one may compute a single alphabet
compatible \isemi morphism that recognizes them both.  Therefore, it follows
from Theorem~\ref{thm:chainsep} that deciding \sicd-separation amounts
to having an algorithm that computes $\Cs_{2,2}[\alpha_\infty]$ from
$\alpha$.

We obtain this algorithm from Proposition~\ref{prop:sig2main} since
$\Sat_{\sicd}(\alpha)$ may be computed, given $\alpha$ as input. Indeed, by
Theorem~\ref{thm:fisicd}, we already know that the set $\Cs_{2,2}[\alpha_+]$
can be computed from $\alpha$. Once one has this set in hand, computing
$\Sat_{\sicd}(\alpha)$ is a simple matter. Hence, we obtain the desired
corollary.

\begin{corollary} \label{cor:sig2sep}
  The \ilang separation problem is decidable for \sicd.
\end{corollary}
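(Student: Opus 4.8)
The plan is to reduce decidability of \sicd-separation for \ilangs to the computation of the finite set $\Cs_{2,2}[\alpha_\infty]$, and then to carry out that computation by invoking Proposition~\ref{prop:sig2main}.

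First, given two regular \ilangs $L_1,L_2 \subseteq A^\infty$ as input, I would build a single recognizing object for both: compute an \isemi morphism recognizing $L_1$, one recognizing $L_2$, take their product (which recognizes both), and replace it by its alphabet completion. Since the alphabet completion still recognizes every \ilang recognized by the original morphism, this produces an \emph{alphabet compatible} morphism $\alpha \colon (A^+,A^\infty) \to (S_+,S_\infty)$ into a finite \isemi, together with sets $F_1,F_2 \subseteq S_\infty$ such that $L_1 = \alpha_\infty^{-1}(F_1)$ and $L_2 = \alpha_\infty^{-1}(F_2)$. By Theorem~\ref{thm:chainsep}, applied with $x = \infty$ and $i = 2$, $L_1$ is \sicd-separable from $L_2$ if and only if there is \emph{no} pair $(s_1,s_2) \in F_1 \times F_2$ lying in $\Cs_{2,2}[\alpha_\infty]$. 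As $S_\infty$ is finite, it therefore suffices to exhibit an algorithm computing $\Cs_{2,2}[\alpha_\infty]$ from $\alpha$.

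For this I would appeal to Proposition~\ref{prop:sig2main}, which gives $\Cs_{2,2}[\alpha_\infty] = \Sat_{\sicd}(\alpha)$, so that the task reduces to computing $\Sat_{\sicd}(\alpha)$. Unwinding the definition of this set, one simply ranges over all pairs $(r_1,r_2),(s_1,s_2) \in \Cs_{2,2}[\alpha_+]$ and all $t_2 \in \alpha(A^\infty)$ with $\content{s_1} = \content{t_2}$, and collects the corresponding elements $(r_1(s_1)^\infty,\, r_2(s_2)^\omega t_2)$ of $S_\infty \times S_\infty$. Every ingredient is effectively available: the set $\alpha(A^\infty)$, the content map $\content{\cdot}$ on $S_+$ (well defined precisely because $\alpha$ is alphabet compatible), and all the required products inside the finite \isemi $(S_+,S_\infty)$ are read off directly; and the set $\Cs_{2,2}[\alpha_+]$ of \dchains of length $2$ for the semigroup morphism $\alpha_+ \colon A^+ \to S_+$ is furnished by Theorem~\ref{thm:fisicd}. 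This last step is exactly where the finite-word algorithm of~\cite{pzqalt} is reused as a subprocedure. Composing these steps yields the claimed decision procedure.

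I do not expect a genuine obstacle at this level: once Proposition~\ref{prop:sig2main} and Theorem~\ref{thm:fisicd} are available as black boxes, the argument is a short assembly. The only point requiring care is that Proposition~\ref{prop:sig2main} applies solely to \emph{alphabet compatible} morphisms, which is why the alphabet completion is inserted in the reduction (this does not alter which \ilangs are recognized, so the reduction remains sound). All the real difficulty is isolated in the proof of Proposition~\ref{prop:sig2main} itself, namely the interplay between a Ramsey-type factorization of an \iword and the decomposition and composition lemmas for $\ksieq{2}$.
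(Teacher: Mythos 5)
Your proposal is correct and follows essentially the same route as the paper: reduce to a single alphabet compatible recognizing morphism, invoke Theorem~\ref{thm:chainsep} to reduce separability to membership of pairs from $F_1\times F_2$ in $\Cs_{2,2}[\alpha_\infty]$, and compute that set via Proposition~\ref{prop:sig2main} together with Theorem~\ref{thm:fisicd}. The only detail worth noting is that one should also observe that $\alpha(A^\infty)$ is effectively computable (e.g.\ via the Ramsey decomposition, it consists of the elements $se^\infty$ with $s,e\in\alpha_+(A^+)$ and $e$ idempotent), but this is routine and implicitly subsumed in the paper's remark that computing $\Sat_{\sicd}(\alpha)$ is a simple matter.
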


An important remark is that we use Theorem~\ref{thm:fisicd} as a
black box: we do not reprove that $\Cs_{2,2}[\alpha_+]$ may be
computed from $\alpha_+$. This is not an immediate result. In fact the
proof of~\cite{pzqalt} requires to use a framework that is more
general than \dchains (that of ``\djuns'') as well as arguments that are
independent from those that we are going to use to prove
Proposition~\ref{prop:sig2main}.

It remains to prove each inclusion of Proposition~\ref{prop:sig2main}.
We first prove the easiest one: $\Cs_{2,2}[\alpha_\infty] \supseteq
\Sat_{\sicd}(\alpha)$ (this proves correctness: all
computed \chains  are indeed \dchains). 

\subsection{Correctness Proof in Proposition~\ref{prop:sig2main}}

In this subsection, we show 
${\Cs_{2,2}[\alpha_\infty] \supseteq \Sat_{\sicd}(\alpha)}$. Set
$(r_1,r_2) \in \Cs_{2,2}[\alpha_+]$, $(s_1,s_2) \in \Cs_{2,2}[\alpha_+]$ and
$t_2 \in \alpha(A^\infty)$ such that $\content{s_1} = \content{t_2}$. Our
objective is to prove that
$(r_1(s_1)^\infty,r_2(s_2)^\omega t_2) \in \Cs_{2,2}[\alpha_\infty]$. Set
$k \geq 1$. By definition, we need to find two \iwords $w_1 \ksieq{2} w_2$
such that $\alpha(w_1) = r_1(s_1)^\infty$ and
$\alpha(w_2) = r_2(s_2)^\omega t_2$.

By hypothesis, we have four words $x_1,x_2,y_1,y_2 \in A^+$ such that
$x_1 \ksieq{2} x_2$, $y_1 \ksieq{2} y_2$, $\alpha(x_1) = r_1$,
$\alpha(x_2) = r_2$, $\alpha(y_1) = s_1$ and $\alpha(y_2) = s_2$.
Moreover, we have an \iword $v \in A^\infty$ such $\alpha(v) = t_2$
and $\content{y_1} = \content{v}$. Set $w_1 = x_1(y_1)^\infty$ and
$w_2 = x_2(y_2)^{2^k\omega}v$. Observe that by definition, we have
$\alpha(w_1) = r_1(s_1)^\infty$ and $\alpha(w_2) = r_2(s_2)^\omega
t_2$. Therefore, it remains to prove that $w_1 \ksieq{2} w_2$.

By Corollary~\ref{cor:EF}, we obtain that $(y_1)^\infty \ksieq{2}
(y_1)^{2^k\omega}v$. Moreover, using  $y_1 \ksieq{2} y_2$
and $v \ksieq{2} v$ together with Lemma~\ref{lem:EFmult}, we obtain
$(y_1)^{2^k\omega}v \ksieq{2} (y_2)^{2^k\omega}v$. Therefore, by
transitivity $(y_1)^\infty \ksieq{2} (y_2)^{2^k\omega}v$. Finally, we
use the fact that $x_1 \ksieq{2} x_2$ and Lemma~\ref{lem:EFmult} to conclude
that $x_1(y_1)^\infty \ksieq{2} x_2(y_2)^{2^k\omega}v$, \emph{i.e.}, that $w_1
\ksieq{2} w_2$.\qed

\subsection{Completeness Proof in Proposition~\ref{prop:sig2main}}

The converse inclusion $\Cs_{2,2}[\alpha_\infty] \subseteq
    \Sat_{\sicd}(\alpha)$ that we show now is more difficult. We devote the rest of this section
to its proof. Before we start the proof, we require two additional results
that we will use.

\medskip\noindent\textbf{Preliminary Results.} The first result that we need is a standard decomposition lemma, which may be
applied to \iwords. We state it in the lemma below.

\begin{lemma} \label{lem:ramsey}
  Let $\gamma: A^+ \to S$ be a morphism into a finite semigroup $S$.
  Then for every \iword $w \in A^\infty$, there exists an idempotent
  $e \in S$ and a decomposition $w = u_0u_1u_2u_3\cdots$ of $w$ into
  infinitely many factors $u_0,u_1,u_2,\dots \in A^+$ satisfying
  $\gamma(u_j) = e$ for all $j \geq 1$ (there is no constraint on
  $u_0$).
\end{lemma}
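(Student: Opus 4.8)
The plan is to obtain this as the classical Ramsey factorisation of an infinite word, the only external input being the infinite Ramsey theorem for pairs. Write $w = a_1a_2a_3\cdots$ with each $a_n \in A$, and for integers $1 \le i < j$ write $w[i,j)$ for the finite factor $a_ia_{i+1}\cdots a_{j-1} \in A^+$. I would colour the $2$-element subsets of $\nat \setminus \{0\}$ with colours in the finite set $S$ by assigning to $\{i,j\}$ (with $i < j$) the colour $\gamma(w[i,j))$. Since $S$ is finite, the infinite Ramsey theorem produces an infinite monochromatic set $I = \{i_0 < i_1 < i_2 < \cdots\} \subseteq \nat \setminus \{0\}$, \emph{i.e.}, there is a single $e \in S$ with $\gamma(w[i_p,i_q)) = e$ for all $p < q$. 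After discarding the least element of $I$ if it happens to be the first position, we may moreover assume $i_0 \ge 2$.

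Next I would verify that $e$ is idempotent. Picking any $p < q < r$, the factorisation $w[i_p,i_r) = w[i_p,i_q)\,w[i_q,i_r)$ together with the fact that $\gamma$ is a morphism gives $e = \gamma(w[i_p,i_r)) = \gamma(w[i_p,i_q))\,\gamma(w[i_q,i_r)) = e\cdot e$, so $e = e^2$ as required.

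Finally I would read off the decomposition: set $u_0 = w[1,i_0)$, which lies in $A^+$ since $i_0 \ge 2$, and for each $j \ge 1$ set $u_j = w[i_{j-1},i_j)$, which lies in $A^+$ since $i_{j-1} < i_j$. Because the $i_j$ tend to infinity, these blocks exhaust all positions of $w$, so $w = u_0u_1u_2u_3\cdots$; and for every $j \ge 1$ we have $\gamma(u_j) = \gamma(w[i_{j-1},i_j)) = e$ by monochromaticity. This is exactly the claimed decomposition.

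There is essentially no obstacle here: this is a textbook Ramsey argument, and the statement is well known. The only minor point to watch is ensuring the first block $u_0$ is nonempty, which is handled by trimming the least element of the monochromatic set $I$ as indicated above.
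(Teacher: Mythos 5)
Your proof is correct and is exactly the argument the paper has in mind: the paper omits the proof, noting only that the lemma is a standard consequence of Ramsey's theorem over infinite graphs, and your colouring of pairs of positions by $\gamma(w[i,j))$, extraction of an infinite monochromatic set, and verification of idempotency is precisely that standard argument. The care you take to ensure $u_0$ is nonempty by trimming the monochromatic set is a correct handling of the one small detail worth checking.
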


The proof of Lemma~\ref{lem:ramsey} is standard and is a consequence
of Ramsey Theorem over infinite graphs (see~\cite{womega} for example).

In order to state the second result that we will need, we require some
additional terminology about \ichains. Given $i,k,n \geq 1$,
$x \in \{+,\infty\}$ and $\beta: A^x \to S$ a map into a finite set $S$, we
denote by $\Cs^k_{i,n}[\beta]$ the set of all \chains
$(s_1,\dots,s_n) \in S^n$ for which there exist $w_1,\dots,w_n \in A^x$
satisfying
\begin{itemize}
\item for all $j$, $\beta(w_j) = s_j$.
\item $w_1 \sieq{k}{i} w_2 \sieq{k}{i} \cdots \sieq{k}{i} w_n$.
\end{itemize}
Note that by definition, $\Cs_{i,n}[\beta] = \bigcap_{k\geq 1}
\Cs^k_{i,n}[\beta]$. We will use the following fact, which may be
verified from  Fact~\ref{fct:refine} and finiteness of $S$.

\begin{fact} \label{fct:theell}
  Let $i,n \geq 1$, $x \in \{+,\infty\}$ and let $\beta: A^x \to S$ be a
  map into a finite set~$S$. Then, for all $k \geq 1$,
  \[
    \Cs_{i,n}[\beta] \subseteq \Cs^{k+1}_{i,n}[\beta] \subseteq
    \Cs^k_{i,n}[\beta]
  \]
  In particular, there exists $\ell$ (depending on $i,n$ and $\beta$)
  such $\Cs_{i,n}[\beta] = \Cs^{\ell}_{i,n}[\beta]$.
\end{fact}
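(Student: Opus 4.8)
The plan is to prove the displayed chain of inclusions directly from the refinement property, and then to obtain the ``in particular'' clause by a standard finiteness argument about decreasing sequences of subsets of the finite set $S^n$.

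First I would treat the inclusions. The inclusion $\Cs_{i,n}[\beta] \subseteq \Cs^{k+1}_{i,n}[\beta]$ is immediate, since $\Cs_{i,n}[\beta] = \bigcap_{m \geq 1} \Cs^m_{i,n}[\beta]$ by the very definition of $\Cs_{i,n}[\beta]$ (this identity is recorded in the text just above the statement), so $\Cs_{i,n}[\beta]$ lies in every $\Cs^m_{i,n}[\beta]$, in particular in $\Cs^{k+1}_{i,n}[\beta]$. For $\Cs^{k+1}_{i,n}[\beta] \subseteq \Cs^k_{i,n}[\beta]$, I would pick a chain $(s_1,\dots,s_n) \in \Cs^{k+1}_{i,n}[\beta]$ with witnesses $w_1,\dots,w_n \in A^x$ such that $\beta(w_j) = s_j$ for all $j$ and $w_1 \sieq{k+1}{i} w_2 \sieq{k+1}{i} \cdots \sieq{k+1}{i} w_n$. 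By Fact~\ref{fct:refine}(1), each link $w_j \sieq{k+1}{i} w_{j+1}$ yields $w_j \sieq{k}{i} w_{j+1}$, so the very same tuple $w_1,\dots,w_n$ already witnesses membership of $(s_1,\dots,s_n)$ in $\Cs^k_{i,n}[\beta]$. This step is purely a weakening of the rank of the witnessing chain and uses nothing about $S$.

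For the ``in particular'' clause I would invoke finiteness. The inclusions just proved say that $\bigl(\Cs^k_{i,n}[\beta]\bigr)_{k \geq 1}$ is a non-increasing sequence of subsets of the finite set $S^n$; hence the integer sequence $\bigl(|\Cs^k_{i,n}[\beta]|\bigr)_{k \geq 1}$ is non-increasing and bounded below by $0$, so it is eventually constant. Picking $\ell$ from which it is constant, for every $k \geq \ell$ we have both $\Cs^k_{i,n}[\beta] \subseteq \Cs^\ell_{i,n}[\beta]$ (composing the inclusions from $k$ down to $\ell$) and $|\Cs^k_{i,n}[\beta]| = |\Cs^\ell_{i,n}[\beta]|$, whence $\Cs^k_{i,n}[\beta] = \Cs^\ell_{i,n}[\beta]$. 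Since the sequence is decreasing, $\bigcap_{k \geq 1} \Cs^k_{i,n}[\beta] = \bigcap_{k \geq \ell} \Cs^k_{i,n}[\beta] = \Cs^\ell_{i,n}[\beta]$, and the left-hand side equals $\Cs_{i,n}[\beta]$ by the definitional identity; this gives $\Cs_{i,n}[\beta] = \Cs^\ell_{i,n}[\beta]$, with $\ell$ depending only on $i$, $n$ and $\beta$.

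I do not expect a genuine obstacle here: the statement is bookkeeping on top of Fact~\ref{fct:refine} and the pigeonhole principle for finite sets. The only point worth care is that a single accidental coincidence $\Cs^\ell_{i,n}[\beta] = \Cs^{\ell+1}_{i,n}[\beta]$ would not by itself certify that the sequence has stabilized for good; this is why I would route the argument through the monotone sequence of cardinalities, which really is eventually constant (equivalently, through the standard observation that a decreasing chain of finite sets attains its intersection at a finite index).
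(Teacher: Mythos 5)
Your proof is correct and follows exactly the route the paper intends: the paper only remarks that the fact ``may be verified from Fact~\ref{fct:refine} and finiteness of $S$,'' and your argument (weakening the rank of the witnessing chain via Fact~\ref{fct:refine}, then stabilizing a decreasing chain of subsets of the finite set $S^n$ and identifying the stable value with the intersection $\Cs_{i,n}[\beta]$) is precisely that verification, carried out carefully.
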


Finally, let us mention the following closure properties of $\Cs_i[\beta]$,
which will be used in the proof of the membership problem for \bscd: the set
$\Cs_i[\beta]$ is closed under subwords
and duplication of letters. This is immediate that the definition and the fact
that for all $k$, $\ksieq{i}$ is transitive and reflexive.

\begin{fact} \label{fct:subdup}
  Set $x \in \{+,\infty\}$, $\beta: A^x \to S$ a map into a finite set
  $S$ and let $(s_1,\dots,s_n) \in \Cs_i[\beta]$. Then for all $j \leq n$,
  \[
    (s_1,\dots,s_{j-1}, s_{j+1},\dots,s_n) \in \Cs_i[\beta] \text{ and }
    (s_1,\dots,s_{j-1},s_j,s_j,s_{j+1},\dots,s_n) \in \Cs_i[\beta]
  \]
\end{fact}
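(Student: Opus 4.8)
The plan is to peel back the definition of $\Cs_i[\beta]$ and observe that a tuple of witnessing $(\omega$-$)$words for the original \chain can be reused, after either trimming one entry or duplicating one entry, as a witness for each of the two modified \chains. The only ingredients needed are transitivity and reflexivity of the relations $\ksieq{i}$, together with the two bookkeeping identities recorded earlier: $\Cs_{i,m}[\beta] = \bigcap_{k \ge 1}\Cs^k_{i,m}[\beta]$ (see Fact~\ref{fct:theell}) and $\Cs_{i,m}[\beta] = \Cs_i[\beta]\cap S^m$, so that membership in $\Cs_{i,m}[\beta]$ for the appropriate length $m$ yields membership in $\Cs_i[\beta]$.

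For the subword part, I would fix $j \le n$ and an arbitrary $k \ge 1$, and take words $w_1 \ksieq{i} \cdots \ksieq{i} w_n$ with $\beta(w_m) = s_m$ for all $m$, witnessing $(s_1,\dots,s_n) \in \Cs^k_{i,n}[\beta]$. Deleting $w_j$ from this list leaves $w_1,\dots,w_{j-1},w_{j+1},\dots,w_n$; when $1 < j < n$, transitivity of $\ksieq{i}$ upgrades $w_{j-1}\ksieq{i} w_j \ksieq{i} w_{j+1}$ to $w_{j-1}\ksieq{i} w_{j+1}$, so the trimmed list still satisfies the full chain of $\ksieq{i}$-inequalities, and when $j \in \{1,n\}$ one simply drops an endpoint. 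Since the sequence of $\beta$-images is exactly $(s_1,\dots,s_{j-1},s_{j+1},\dots,s_n)$, this \chain lies in $\Cs^k_{i,n-1}[\beta]$; as $k$ was arbitrary, it lies in $\bigcap_{k}\Cs^k_{i,n-1}[\beta] = \Cs_{i,n-1}[\beta] \subseteq \Cs_i[\beta]$.

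For the duplication part the argument is the mirror image: fix $j$ and $k$, take the same witnessing list $w_1 \ksieq{i} \cdots \ksieq{i} w_n$, and insert a second copy of $w_j$ immediately after $w_j$. Reflexivity of $\ksieq{i}$ gives $w_j \ksieq{i} w_j$, so the padded list again satisfies the whole chain of $\ksieq{i}$-inequalities, and its sequence of $\beta$-images is $(s_1,\dots,s_{j-1},s_j,s_j,s_{j+1},\dots,s_n)$; hence this \chain is in $\Cs^k_{i,n+1}[\beta]$ for every $k$, i.e.\ in $\Cs_{i,n+1}[\beta] \subseteq \Cs_i[\beta]$.

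I do not expect any genuine obstacle: as the surrounding text already remarks, the statement is immediate from the definition. The single point worth a line of care is that the witnessing words depend on $k$; but since each $k$ is treated separately and one reuses that same $k$'s witnesses (merely trimmed or padded), the universal quantifier over $k$ never interferes with the manipulation, so nothing subtle occurs.
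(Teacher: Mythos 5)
Your proof is correct and matches the paper's intent exactly: the paper gives no detailed argument, merely noting the fact is immediate from the definition together with transitivity and reflexivity of $\ksieq{i}$, which is precisely the deletion/duplication-of-witnesses argument you spell out. Nothing further is needed.
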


\begin{remark}
  A simple consequence of Fact~\ref{fct:subdup} is that, by Higman's
  lemma, $\Cs_i[\beta]$ is a regular language over the alphabet $S$ (and
  is therefore finitely representable). However, this fact is useless in
  the paper: whether an automata for this regular language can be
  computed is open in the cases that we consider.
\end{remark}

\medskip\noindent\textbf{{Proof of Proposition~\ref{prop:sig2main}}.}
We may now prove the inclusion $\Cs_{2,2}[\alpha_\infty] \subseteq
\Sat_{\sicd}(\alpha)$ in Proposition~\ref{prop:sig2main}. We set
$\alpha: (A^+,A^\infty) \to (S_+,S_\infty)$ as an alphabet compatible
morphism into a finite \isemi $(S_+,S_\infty)$. We exhibit a number
$\ell \geq 1$ such that $\Cs_{2,2}^\ell[\alpha_\infty] \subseteq
\Sat_{\sicd}(\alpha)$. By the inclusions in Fact~\ref{fct:theell},
this will prove that $\Cs_{2,2}[\alpha_\infty] \subseteq
\Sat_{\sicd}(\alpha)$.

We begin with the choice of the number $\ell \geq 1$. We know from
Fact~\ref{fct:theell} that there exists a number $\ell_+$ such that
$\Cs_{2,2}[\alpha_+] = \Cs^{\ell_+}_{2,2}[\alpha_+]$. We assume without
loss of generality that $\ell_+ \geq 2$ (by the inclusions in
Fact~\ref{fct:theell} we may choose $\ell_+$ as large as we want).
Furthermore, we set $p = |S_+|+1$. We define $\ell = \ell_+ + p$.

It now remains to prove that $\Cs_{2,2}^\ell[\alpha_\infty] \subseteq
\Sat_{\sicd}(\alpha)$. Set $(q,q') \in
\Cs_{2,2}^\ell[\alpha_\infty]$, we have to prove that $(q,q') \in
\Sat_{\sicd}(\alpha)$. By definition of $\Sat_{\sicd}(\alpha)$, this
means that we have to find $r_1,r_2,s_1,s_2 \in S_+$ and $t_2 \in
S_\infty$ such that
\begin{equation} \label{eq:goal1}
  \begin{array}{l}
    (r_1,r_2)  \in \Cs_{2,2}[\alpha_+] \\
    (s_1,s_2) \in \Cs_{2,2}[\alpha_+] \\
    t_2 \in \alpha(A^\infty) \text{ with } \content{s_1} = \content{t_2}
  \end{array} \quad \text{ and } \quad
  \begin{array}{lll}
    q & = & r_1(s_1)^\infty \\
    q' & = & r_2(s_2)^\omega t_2
  \end{array}
\end{equation}
We proceed as follows. First, we use the definition of
$\Cs_{2,2}^\ell[\alpha_\infty]$ to obtain two \iwords $w$ and $w'$
of images $q$ and $q'$ such that $w \sieq{\ell}{2} w'$. We
then use the hypothesis $w \sieq{\ell}{2} w'$ together with our
decomposition lemma, Lemma~\ref{lem:EFdecomp}, to split $w$
and $w'$ into factors. Finally, we use this decomposition to find
the appropriate $r_1,r_2,s_1,s_2$ and $t$ such that~\eqref{eq:goal1}
holds.

\medskip
\noindent
{\bf Decomposition of $w$ and $w'$.} Using Lemma~\ref{lem:ramsey}
(with $\alpha_+$ as the morphism $\gamma$) we may decompose $w$ as
an infinite product $w = u_0u_1u_2 \cdots$ ($u_0,u_1,u_2,\ldots \in
A^+$) such that $\alpha(u_1) = \alpha(u_2) = \alpha(u_3) = \cdots$ is
an idempotent $e$ of $S_+$. Furthermore, note that since $\alpha$ is
alphabet compatible, $u_1,u_2,\dots$ all share the same alphabet. Let
$B$ be this alphabet.

We now apply Lemma~\ref{lem:EFdecomp} $p$ times to the \iwords $w
\sieq{\ell}{2} w'$. This yields a decomposition $w' = u'_0u'_1
\cdots u'_{p-1}v$ ($u'_0,u'_1,\dots,u'_{p-1} \in A^+$ and $v \in
A^\infty$) which satisfies the following fact (recall that
$\ell = \ell_+ + p$),
\begin{fact} \label{fct:ineq1}
  For all $j \leq p-1$, $u_j \sieq{\ell_+}{2} u'_j$ and $u_{p}u_{p+1}
  \cdots \sieq{\ell_+}{2} v$.
\end{fact}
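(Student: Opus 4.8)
The plan is to prove Fact~\ref{fct:ineq1} by a direct induction, applying the Decomposition Lemma (Lemma~\ref{lem:EFdecomp}) exactly $p$ times to the hypothesis $w \sieq{\ell}{2} w'$. At each step we peel off one finite factor from the front of the remaining infinite suffix of $w'$, match it against the corresponding factor $u_j$ of the Ramsey decomposition $w = u_0 u_1 u_2 \cdots$, and record the loss of one quantifier rank caused by that application of the lemma.

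Concretely, I would prove by induction on $j \in \{0,1,\dots,p\}$ the following statement: $w'$ admits a decomposition $w' = u'_0 u'_1 \cdots u'_{j-1} z_j$ with $u'_0,\dots,u'_{j-1} \in A^+$ and $z_j \in A^\infty$ such that $u_m \sieq{\ell-m-1}{2} u'_m$ for all $m \le j-1$ and $u_j u_{j+1} \cdots \sieq{\ell-j}{2} z_j$. The base case $j=0$ is exactly the hypothesis, with $z_0 = w'$ and no factors $u'_m$ yet. For the inductive step, assuming the statement for some $j < p$, I view the \iword $u_j u_{j+1}\cdots$ as the concatenation of its finite prefix $u_j$ with the \iword $u_{j+1} u_{j+2}\cdots$, and apply Lemma~\ref{lem:EFdecomp} to $u_j u_{j+1}\cdots \sieq{\ell-j}{2} z_j$ relative to this decomposition; this is legitimate since $\ell - j \ge \ell - p + 1 = \ell_+ + 1 \ge 1$. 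The lemma yields a decomposition $z_j = u'_j z_{j+1}$ (with $u'_j \in A^+$ and $z_{j+1} \in A^\infty$) satisfying $u_j \sieq{\ell-j-1}{2} u'_j$ and $u_{j+1} u_{j+2}\cdots \sieq{\ell-j-1}{2} z_{j+1}$, which is precisely the statement for $j+1$.

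Applying this with $j = p$ and setting $v = z_p$ produces the decomposition $w' = u'_0 u'_1 \cdots u'_{p-1} v$ announced just before the fact, together with $u_m \sieq{\ell-m-1}{2} u'_m$ for all $m \le p-1$ and $u_p u_{p+1}\cdots \sieq{\ell-p}{2} v$. Since $\ell = \ell_+ + p$, the last relation reads $u_p u_{p+1}\cdots \sieq{\ell_+}{2} v$; and for $m \le p-1$ we have $\ell - m - 1 \ge \ell - p = \ell_+$, so iterating Fact~\ref{fct:refine}(1) downgrades $u_m \sieq{\ell-m-1}{2} u'_m$ to $u_m \sieq{\ell_+}{2} u'_m$. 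This gives Fact~\ref{fct:ineq1}.

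I do not expect a genuine obstacle here: the argument is entirely a bookkeeping of quantifier ranks. The only two points deserving care are, first, that Lemma~\ref{lem:EFdecomp} is being invoked on decompositions of \iwords into a finite prefix plus an infinite suffix (which is exactly the \iword instance of that lemma, the first component of each decomposition being a finite word and the second an \iword); and second, that the rank stays at least $1$ throughout the iteration — which holds because the input rank to the final ($p$-th) application is $\ell - (p-1) = \ell_+ + 1 \ge 3$, using the standing assumption $\ell_+ \ge 2$.
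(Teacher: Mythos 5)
Your proposal is correct and is exactly the argument the paper intends: the paper simply asserts that applying Lemma~\ref{lem:EFdecomp} $p$ times to $w \sieq{\ell}{2} w'$ yields the decomposition, and your induction with the rank bookkeeping ($\ell - j$ after $j$ applications, downgraded to $\ell_+$ via Fact~\ref{fct:refine}) is the precise version of that. The two points of care you flag (the finite-prefix/infinite-suffix instance of the decomposition lemma, and the rank staying positive) are handled correctly.
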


\medskip
\noindent
{\bf Construction of $r_1,r_2,s_1,s_2$ and $t_2$.} We may now use the
decomposition of $w$ and $w'$ to construct the appropriate $r_1,
r_2,s_1,s_2$ and $t_2$ such that~\eqref{eq:goal1} holds.

Since $p=|S_+|+1$, by the pigeonhole principle, we
obtain $i < j \leq p-1$ such that $\alpha_+(u'_0\cdots u'_{i}) =
\alpha_+(u'_0 \cdots u'_j)=\alpha_+(u'_0\cdots u'_{i})\alpha_+(u'_{i+1}\cdots
u'_{j})$. Hence, $\alpha_+(u'_0\cdots u'_{i})$ is stable by right multiplication
by $\alpha_+(u'_{i+1}\cdots u'_{j})$. Iterating this equality, we get
\[
  \alpha_+(u'_0 \cdots u'_{i}) = \alpha_+(u'_0\cdots u'_{i}) (\alpha_+(u'_{i+1}\cdots u'_{j}))^\omega.
\]

Set $x_1 = u_0 \cdots u_i \in A^+$, $x_2 = u'_0 \cdots u'_i \in A^+$,
$y_1 = u_{i+1}\cdots u_{j} \in A^+$ and $y_2 = u'_{i+1}\cdots u'_{j} \in
A^+$. Moreover, we set $r_1 = \alpha_+(x_1)$, $r_2 = \alpha_+(x_2)$, $s_1
= \alpha_+(y_1)$ and $s_2  = \alpha_+(y_2)$. Note that by the equality
above, we have
\[
  r_2 = r_2(s_2)^\omega.
\]
Finally, we set $z = u'_{i+1} \cdots u'_pv$ and $t_2 = \alpha_\infty(z)$.

It remains to prove that~\eqref{eq:goal1} holds. By definition,
$s_1 = \alpha_+(u_{i+1}\cdots u_{j})$ is the idempotent $e$, therefore
\[
  q = \alpha_\infty(w) = r_1 (s_1)^\infty.
\]
Moreover, we have $w' = x_2z$, therefore,
\[
  q' = r_2t_2 = r_2(s_2)^\omega t_2.
\]
To conclude that~\eqref{eq:goal1} holds, it remains to prove that
$(r_1,r_2),(s_1,s_2) \in \Cs_{2,2}[\alpha_+]$ and
$\content{s_1} = \content{t_2}$. This is what we do now.

We know from Lemma~\ref{lem:EFmult} and Fact~\ref{fct:ineq1} that $x_1
\sieq{\ell_+}{2} x_2$ and $y_1 \sieq{\ell_+}{2} y_2$. This exactly
says that $(r_1,r_2),(s_1,s_2) \in \Cs^{\ell_+}_{2,2}[\alpha_+]$.
Therefore, by choice of $\ell_+$, we have $(r_1,r_2),(s_1,s_2) \in
\Cs_{2,2}[\alpha_+]$. Finally, it is simple to find a \sicd sentence
of rank $2$ that tests the alphabet of an \iword. Since
$\ell_+ \geq 2$ and $u_{i+1}u_{i+2} \cdots \sieq{\ell_+}{2} z$ (see
Lemma~\ref{lem:EFmult} and Fact~\ref{fct:ineq1}), we therefore have
\[
  \content{t_2} = \content{z} =
  \content{u_{i+1}u_{i+2} \cdots} = B = \content{y_1} = \content{s_1}
\]
This terminates the proof of Proposition~\ref{prop:sig2main}.\qed

\section{\texorpdfstring{A Membership Algorithm for \bscd}{A Membership Algorithm for  B-Sigma2}}
\label{sec:bsig2}
In this section, we present our (\ilang) membership algorithm for
\bscd. The algorithm is stated as a decidable characterization of \bscd over~\iwords.

\begin{theorem} \label{thm:caracbc}
  Let $L \subseteq A^\infty$ be a regular \ilang and \hbox{$\alpha:\!
    (A^+,A^\infty) \to (S_+,S_\infty)$} be the alphabet completion of its
  syntactic morphism. The following are~equivalent:
  \begin{enumerate}
  \item $L$ is definable in \bscd.
  \item $\alpha_\infty$ has bounded \sicd-alternation.
  \item\label{item:1} $\alpha_+$ has bounded \sicd-alternation and $\alpha$ satisfies
    the following equation:
    \begin{equation}
      \begin{array}{c}
        (s_2(t_2)^\omega)^\infty = (s_2(t_2)^\omega)^\omega s_1(t_1)^\infty \\
        \text{for all $(s_1,s_2),(t_1,t_2) \in \Cs_{2,2}[\alpha_+]$ with $\content{s_1} = \content{t_1}$}
      \end{array}\label{eq:bcs}
    \end{equation}
  \end{enumerate}
\end{theorem}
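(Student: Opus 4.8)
The plan is to dispatch $(1)\Leftrightarrow(2)$ with soft arguments and then to concentrate on $(2)\Leftrightarrow(3)$, which carries the real content. For $(2)\Rightarrow(1)$, observe that $L$ is itself recognized by $\alpha_\infty$, so Corollary~\ref{cor:bcmemb} applies verbatim: if $\alpha_\infty$ has bounded \sicd-alternation then $L$ is \bscd-definable. For $(1)\Rightarrow(2)$, recall that $\alpha$ is the product of the syntactic morphism of $L$ with the content morphism into $(2^A,2^A)$, and that every content (\ilang) language is \bscu-definable, hence \bscd-definable; therefore the (\ilang) languages recognized by $\alpha$ are \bscd-definable exactly when those recognized by the syntactic morphism are, which by Fact~\ref{fct:synta} is the case when $L$ is \bscd-definable. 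Corollary~\ref{cor:bcmemb} (applied to $\alpha_+$ and to $\alpha_\infty$) then yields that \emph{both} $\alpha_+$ and $\alpha_\infty$ have bounded \sicd-alternation. In particular $(1)$, hence also $(2)$ once $(2)\Rightarrow(1)$ is known, implies that $\alpha_+$ has bounded \sicd-alternation. It therefore suffices to prove, under the standing assumption that $\alpha_+$ has bounded \sicd-alternation, the equivalence: $\alpha_\infty$ has bounded \sicd-alternation if and only if equation~\eqref{eq:bcs} holds.

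For ``$\alpha_\infty$ bounded $\Rightarrow$~\eqref{eq:bcs}'', fix $(s_1,s_2),(t_1,t_2)\in\Cs_{2,2}[\alpha_+]$ with $\content{s_1}=\content{t_1}$. Since $\ksieq{2}$ preserves the alphabet (the property ``only letters from a fixed set occur'' is \sicd-expressible, on words and on \iwords), all of $s_1,s_2,t_1,t_2$ share a common content $B$. Put $q=(s_2(t_2)^\omega)^\infty$ and $q'=(s_2(t_2)^\omega)^\omega s_1(t_1)^\infty$, both of content $B$. First I would show that $(q,q')$ is a \sicd-alternating pair for $\alpha_\infty$: bounded \sicd-alternation then forces $q=q'$, which is precisely~\eqref{eq:bcs}. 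To produce the alternating pair, fix $n$ and a target rank $k$, choose (at a sufficiently high rank) words $a_1\ksieq{2}a_2$ and $b_1\ksieq{2}b_2$ of content $B$ with $\alpha(a_i)=s_i$ and $\alpha(b_i)=t_i$, and build a $\ksieq{2}$-increasing sequence of $2n$ \iwords whose $\alpha_\infty$-images alternate between $q$ and $q'$, alternating \iwords of the shapes $(a_2b_2^\omega)^\infty$ and $(a_2b_2^\omega)^\omega a_1b_1^\infty$ with enough repeated blocks. The required $\ksieq{2}$-comparisons come from the Composition Lemma (Lemma~\ref{lem:EFmult}) and Corollary~\ref{cor:EF}; the delicate point is to arrange the witnesses so that \emph{every} comparison — including the ones that ``reset'' from image $q'$ back to image $q$ — goes through, which is exactly where the common content $B$ is used, but this is of the same flavor as the correctness proof of Proposition~\ref{prop:sig2main}.

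The converse, ``\eqref{eq:bcs} $\Rightarrow$ $\alpha_\infty$ bounded'' (with $\alpha_+$ bounded), is the hard direction. I would argue by contraposition. Suppose $\alpha_\infty$ is not bounded: there is a \sicd-alternating pair $(q,q')$ with $q\neq q'$, so $(q,q')^m\in\Cs_2[\alpha_\infty]$ for all $m$; in particular $(q,q'),(q',q)\in\Cs_{2,2}[\alpha_\infty]$, which by Proposition~\ref{prop:sig2main} equals $\Sat_{\sicd}(\alpha)$. This gives presentations $q=r_1(s_1)^\infty$, $q'=r_2(s_2)^\omega t_2$ and, symmetrically, $q'=r_1'(s_1')^\infty$, $q=r_2'(s_2')^\omega t_2'$, subject to the $\Sigma_2$-chain and content constraints of $\Sat_{\sicd}$. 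Applying Lemma~\ref{lem:ramsey} to the \iwords underlying the factors $t_2$ and $t_2'$ to write each as $pf^\infty$ with $f$ idempotent, and then combining the two presentations, one brings $q$ and $q'$ into a common shape matching the two sides of~\eqref{eq:bcs}; the equation then collapses them, contradicting $q\neq q'$. Boundedness of $\alpha_+$ enters when taming the finite-semigroup layer: after iterating a long alternating chain of \iwords and cutting it up with Lemma~\ref{lem:ramsey} and the Decomposition Lemma (Lemma~\ref{lem:EFdecomp}), the resulting ``rows'' of \dchains over $S_+$ cannot alternate nontrivially, so the $S_+$-elements involved stabilize and the remaining infinite layer is left to~\eqref{eq:bcs}.

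I expect this last direction to be the main obstacle: matching the two $\Sat_{\sicd}$-presentations of $q$ and $q'$, handling the genuinely \ilang factor $t_2\in\alpha(A^\infty)$ (which is not a priori of the form $(\tau_1)^\infty$), and threading boundedness of $\alpha_+$ through the Ramsey decomposition are where the bulk of the work lies. Everything else — the equivalence $(1)\Leftrightarrow(2)$, and the ``$\Rightarrow$'' of $(2)\Leftrightarrow(3)$ — is either soft or a variation on material already developed in Sections~\ref{sec:semi} and~\ref{sec:sig2}.
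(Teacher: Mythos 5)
Your treatment of the soft directions is sound: $(2)\Rightarrow(1)$ is the paper's argument, your direct $(1)\Rightarrow(2)$ via Fact~\ref{fct:synta} and Corollary~\ref{cor:bcmemb} is a legitimate reorganization of the paper's cycle $2)\Rightarrow 1)\Rightarrow 3)\Rightarrow 2)$, and your proof that bounded \sicd-alternation of $\alpha_\infty$ forces~\eqref{eq:bcs} (exhibit the pair $\bigl((s_2(t_2)^\omega)^\infty,(s_2(t_2)^\omega)^\omega s_1(t_1)^\infty\bigr)$ as \sicd-alternating using Corollary~\ref{cor:EF} and Lemma~\ref{lem:EFmult}) is the same computation the paper performs with explicit witnesses $w_1=(x_2(y_2)^{2^k\omega})^{2^k\omega}x_1(y_1)^\infty$ and $w_2=(x_2(y_2)^{2^k\omega})^\infty$.

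The hard direction, however, has a genuine gap. Your plan extracts from an alternating pair $(q,q')$ with $q\neq q'$ only the two length-$2$ chains $(q,q')$ and $(q',q)$, applies Proposition~\ref{prop:sig2main} to each, and hopes to ``combine the two presentations'' into the shape of~\eqref{eq:bcs}. This cannot work as stated: the two applications of $\Sat_{\sicd}(\alpha)$ produce unrelated data $r_1,r_2,s_1,s_2,t_2$ and $r_1',r_2',s_1',s_2',t_2'$, the infinite tails $t_2,t_2'$ are arbitrary elements of $\alpha(A^\infty)$ with no $\Sigma_2$-chain structure linking them to the finite parts, and~\eqref{eq:bcs} only applies to elements of the very specific form $(s_2(t_2)^\omega)^\infty$ versus $(s_2(t_2)^\omega)^\omega s_1(t_1)^\infty$ where \emph{both} pairs lie in $\Cs_{2,2}[\alpha_+]$ with matching alphabets. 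Writing $t_2=pf^\infty$ by Lemma~\ref{lem:ramsey} does not supply these hypotheses. The missing ingredients are: (i) a generalization of Proposition~\ref{prop:sig2main} to \dchains of arbitrary length $n$ (the paper's Proposition~\ref{prop:sig2gen}), applied to the \emph{long} chains $(q,q')^\ell\in\Cs_{2,2\ell}[\alpha_\infty]$ — this is what makes the ``rows'' $(p_1,\dots,p_{2\ell})$, $(q_1,\dots,q_{2\ell})$ genuine \dchains over $S_+$ (so that bounded alternation of $\alpha_+$ collapses them after a pigeonhole step) and, crucially, makes the tails $(t_2,\dots,t_{2\ell})$ a shorter alternating chain over $S_\infty$; and (ii) an iteration scheme, because one application of this decomposition does not yield a contradiction but only a \emph{new} alternating pair $(t_2',t_3')$ for $\alpha_\infty$ over a possibly smaller alphabet. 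The paper handles the iteration by introducing the graph $G[\alpha]$, showing that unbounded alternation of $\alpha_\infty$ produces a $B$-cycle with two distinct $S_\infty$-components (Lemma~\ref{lem:recursive2}, using a minimal-alphabet argument and finiteness of the graph), while~\eqref{eq:bcs} forces all $S_\infty$-components along any $B$-cycle to coincide (Lemma~\ref{lem:recursive1}). Your sketch gestures at the pigeonhole/Ramsey step but omits both the length-$n$ description and the termination argument, and these are precisely where the substance of the implication lies.
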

Theorems~\ref{thm:fibscd} and~\ref{thm:fisicd} entail that
Item~\ref{item:1} in Theorem~\ref{thm:caracbc} is~decidable (note however that these
two theorems state difficult results of~\cite{pzqalt} whose proofs are
independent from that of Theorem~\ref{thm:caracbc}). Indeed, Theorem~\ref{thm:fibscd}
exactly states that whether $\alpha_+$ has bounded \sicd-alternation is
decidable. Moreover, once one has the set $\Cs_{2,2}[\alpha_+]$ in hand (which is
computable by Theorem~\ref{thm:fisicd}), deciding Equation~\eqref{eq:bcs} may
be achieved by checking all possible combinations. Therefore, we obtain as a
corollary of Theorem~\ref{thm:caracbc} that the \ilang membership problem for
\bscd is decidable.

\begin{corollary} \label{cor:memb2}
  The \ilang membership problem is decidable for \bscd.
\end{corollary}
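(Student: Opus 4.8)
The plan is to derive decidability of \bscd-membership for \ilangs directly from the decidable characterization established in Theorem~\ref{thm:caracbc}, reducing the problem to a finite computation on the syntactic morphism. Given a regular \ilang $L$ as input, the first step is to compute its syntactic morphism $\alpha_L: (A^+,A^\infty) \to (S_+,S_\infty)$ into a finite \isemi, which is possible by the first property of syntactic morphisms recalled in Section~\ref{sec:semi}. Next I would form the alphabet completion $\alpha$ of $\alpha_L$; this is also effective, since the construction described in Section~\ref{sec:semi} simply pairs $\alpha_L$ with the content map into $(2^A,2^A)$, and $(2^A,2^A)$ is a finite \isemi whose products are computable. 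In particular $\alpha$ is alphabet compatible, so $\content{s}$ is well defined and computable for every $s \in S_+ \cup S_\infty$.

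By Theorem~\ref{thm:caracbc} (equivalence of its first and third items), $L$ is \bscd-definable if and only if $\alpha_+$ has bounded \sicd-alternation \emph{and} Equation~\eqref{eq:bcs} holds. Thus it suffices to decide these two conditions separately. The first is settled by invoking Theorem~\ref{thm:fibscd}, which states precisely that bounded \sicd-alternation of a semigroup morphism is decidable; we apply it to $\alpha_+: A^+ \to S_+$.

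For the second condition, I would first compute the finite set $\Cs_{2,2}[\alpha_+] \subseteq (S_+)^2$ using the algorithm of Theorem~\ref{thm:fisicd}. Once this set is in hand, Equation~\eqref{eq:bcs} becomes a finite conjunction: one enumerates all pairs $(s_1,s_2),(t_1,t_2) \in \Cs_{2,2}[\alpha_+]$, keeps only those with $\content{s_1} = \content{t_1}$, and for each such pair checks whether $(s_2(t_2)^\omega)^\infty = (s_2(t_2)^\omega)^\omega s_1(t_1)^\infty$ holds in $(S_+,S_\infty)$. All the operations involved---the semigroup products, the idempotent power $\omega$, and the $\infty$-power---are computable in the finite \isemi, since by Wilke's result the infinite product is fully determined by the map $s \mapsto s^\infty$. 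Hence each check terminates and there are finitely many pairs, so Equation~\eqref{eq:bcs} is decidable. Combining this with the decision procedure for the first condition decides \bscd-membership.

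The point worth flagging is that this corollary carries no genuine difficulty of its own: all the substance is delegated to Theorem~\ref{thm:caracbc}, proved in this section, and to the black-box algorithms of Theorems~\ref{thm:fibscd} and~\ref{thm:fisicd} from~\cite{pzqalt}. The only thing left to verify is that the characterizing condition in the third item of Theorem~\ref{thm:caracbc} is a decidable predicate, and this reduces---after one call to each external algorithm---to checking a finite equation over a finite \isemi.
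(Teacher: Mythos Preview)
Your proposal is correct and follows exactly the same approach as the paper: derive decidability from the equivalence $1)\Leftrightarrow 3)$ of Theorem~\ref{thm:caracbc}, invoke Theorem~\ref{thm:fibscd} to decide bounded \sicd-alternation of $\alpha_+$, and invoke Theorem~\ref{thm:fisicd} to compute $\Cs_{2,2}[\alpha_+]$ so that Equation~\eqref{eq:bcs} can be checked by finite enumeration. The paper's own justification is the paragraph immediately preceding the corollary, and your write-up simply spells out the same steps in more detail.
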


It now remains to prove Theorem~\ref{thm:caracbc}.
The most difficult
(and interesting) direction is $3) \Rightarrow 2)$. As we did in the previous section, we
first prove the easier $2) \Rightarrow 1)$ and
$1) \Rightarrow 3)$ directions.

\medskip
\noindent
{\bf Proof of $2) \Rightarrow 1)$.} If $\alpha_\infty$ has bounded \sicd-alternation, we know from
Corollary~\ref{cor:bcmemb} that any \ilang recognized by $\alpha$ is
definable in \bscd. But one of these \ilangs is $L$ itself, since $\alpha$ is the
alphabet completion of its syntactic morphism. Hence, $L$ is
definable in $\bscd$, which completes the proof of this implication.

\medskip
\noindent {\bf Proof of $1) \Rightarrow 3)$.} Assume that $L$ is definable in
\bscd. In particular, this means that every language or \ilang recognized by
$\alpha$ is definable in \bscd (we know from Fact~\ref{fct:synta} that it is
true for the syntactic morphism of $L$, so this is true as well for its
alphabet completion $\alpha$, as one can test the alphabet of a word
in~\bscd).

Since every language recognized by $\alpha$ is definable in \bscd, we
know from Corollary~\ref{cor:bcmemb} that $\alpha_+$ has bounded
\sicd-alternation. It remains to prove that Equation~\eqref{eq:bcs} is
satisfied. Set $(s_1,s_2),(t_1,t_2) \in \Cs_{2,2}[\alpha_+]$ with
$\content{s_1} = \content{t_1}$. We have to prove that
$(s_2(t_2)^\omega)^\infty = (s_2(t_2)^\omega)^\omega s_1(t_1)^\infty$.

As every \ilang recognized by $\alpha$ is definable in
\bscd, we may pick a~\bscd sentence $\varphi$ defining
$\alpha^{-1}((s_2(t_2)^\omega)^\infty)$ and let $k$ be its quantifier
rank. By definition of $s_1,s_2,t_1,t_2$, we have words
$x_1,x_2,y_1,y_2 \in A^+$ such that $x_1 \ksieq{2} x_2$, $y_1
\ksieq{2} y_2$, $\alpha(x_1) = s_1$, $\alpha(x_2) = s_2$, $\alpha(y_1)
= t_1$, $\alpha(y_2) = t_2$ and $\content{x_1} = \content{y_1}$.
Set
\[
  w_1  =  (x_2(y_2)^{2^k\omega})^{2^k\omega} x_1(y_1)^\infty\text{\quad and\quad}
  w_2  =  (x_2(y_2)^{2^k\omega})^\infty
\]
Observe that $\alpha(w_1) = (s_2(t_2)^\omega)^\omega s_1(t_1)^\infty$
and $\alpha(w_2) = (s_2(t_2)^\omega)^\infty$. In particular, this
means that $w_2 \models \varphi$. We prove that $w_1 \kbceq{2} w_2$.
By choice of $k$, this will prove that $w_1 \models
\varphi$ and by definition of $\varphi$, that $\alpha(w_1) =
(s_2(t_2)^\omega)^\infty$. This will exactly mean that
$(s_2(t_2)^\omega)^\infty = (s_2(t_2)^\omega)^\omega s_1(t_1)^\infty$
and conclude the proof.

We prove $w_1 \ksieq{2} w_2$ and $w_2 \ksieq{2} w_1$. Note that since
$\content{x_1} = \content{y_1}$, $x_1 \ksieq{2} x_2$ and $y_1
\ksieq{2} y_2$, we have $\content{x_1} = \content{x_2}
= \content{y_1} = \content{y_2}$ (since the alphabet of a
word may be tested in \sicd). Therefore, we have
$\content{x_1(y_1)^\infty} =  \content{x_2(y_2)^{2^k\omega}}$ and we
may use Corollary~\ref{cor:EF} to conclude that $w_2 \ksieq{2} w_1$.
Conversely, we know that
$\content{(x_2(y_2)^{2^k\omega})^\infty} = \content{y_1}$. Therefore,
we may use Corollary~\ref{cor:EF} again to obtain that $(y_1)^\infty
\ksieq{2} (y_1)^{2^k\omega}(x_2(y_2)^{2^k\omega})^\infty$. That $w_1
\ksieq{2} w_2$ is then immediate from this inequality by
Lemma~\ref{lem:EFmult} and transitivity  of \ksieq{2}.

\medskip\noindent {\bf Proof of $3) \Rightarrow 2)$.}
For this last and more difficult implication, we need to introduce a
preliminary result that we will use in the proof. This result is a
generalized version of Proposition~\ref{prop:sig2main}, which gives
an effective description of the set of \dchains of length $n$ for
all $n \geq 2$ (whereas Proposition~\ref{prop:sig2main} is the
specific case $n = 2$).

\subsection{Generalization of Proposition~\ref{prop:sig2main}}

Essentially, Proposition~\ref{prop:sig2main} gives a description of the set
$\Cs_{2,2}[\alpha_\infty]$ of \dchains of length 2 for $\alpha_\infty$ as a
function of the sets $\Cs_{2,2}[\alpha_+]$ and $\alpha(A^\infty)$. Here, we
generalize this to the set $\Cs_{2,n}[\alpha_\infty]$ for an arbitrary fixed
length $n$ of \dchains. In order to understand this generalization, an important
observation is that $\alpha(A^\infty)$ is exactly the set
$\Cs_{2,1}[\alpha_\infty]$ of \dchains of length 1. In other words,
Proposition~\ref{prop:sig2main} describes the \dchains of length $2$ for
$\alpha_\infty$ as a function of the \dchains of length $2$ for $\alpha_+$ and
of the \dchains of length $1$ for $\alpha_\infty$. Our generalization states
that for all $n \geq 2$, the result still holds when replacing length $2$ by
length $n$ and length $1$ by length $n-1$.

Given any \emph{alphabet compatible} morphism $\alpha: (A^+,A^\infty)
\to (S_+,S_\infty)$ into a finite \isemi and any $n \geq 2$, we denote
by $\Sat_{\sicd,n}(\alpha)$  the set of all pairs
\[
  (r_1(s_1)^\infty,r_2(s_2)^\omega t_2,\dots,r_n(s_n)^\omega t_n) \in (S_\infty)^n
\]
with $(r_1,\dots,r_n)  \in \Cs_{2,n}[\alpha_+]$, $(s_1,\dots,s_n) \in
\Cs_{2,n}[\alpha_+]$, $(t_2,\dots,t_n) \in \Cs_{2,n-1}[\alpha_\infty]$
and $\content{s_1} = \content{t_2}$.

\begin{proposition} \label{prop:sig2gen}
  Let $\alpha: (A^+,A^\infty) \to (S_+,S_\infty)$ be an alphabet
  compatible morphism into a finite \isemi $(S_+,S_\infty)$ and
  $n \geq 2$. Then, $\Cs_{2,n}[\alpha_\infty] = \Sat_{\sicd,n}(\alpha)$.
\end{proposition}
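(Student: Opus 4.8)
I would prove the equality by the same two‑inclusion scheme used for Proposition~\ref{prop:sig2main} (which is exactly the case $n=2$), writing $\alpha_+,\alpha_\infty$ for the two components of $\alpha$. For the easy inclusion $\Cs_{2,n}[\alpha_\infty]\supseteq\Sat_{\sicd,n}(\alpha)$, fix a tuple in $\Sat_{\sicd,n}(\alpha)$ witnessed by $(r_1,\dots,r_n),(s_1,\dots,s_n)\in\Cs_{2,n}[\alpha_+]$ and $(t_2,\dots,t_n)\in\Cs_{2,n-1}[\alpha_\infty]$ with $\content{s_1}=\content{t_2}$, and fix $k\geq 1$. Pick words $x_1\sieq{k}{2}\cdots\sieq{k}{2}x_n$, $y_1\sieq{k}{2}\cdots\sieq{k}{2}y_n$ and \iwords $v_2\sieq{k}{2}\cdots\sieq{k}{2}v_n$ realizing these \dchains, so in particular $\content{y_1}=\content{v_2}$. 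Put $w_1=x_1(y_1)^\infty$ and $w_m=x_m(y_m)^{2^k\omega}v_m$ for $2\leq m\leq n$; then $\alpha_\infty(w_1)=r_1(s_1)^\infty$ and $\alpha_\infty(w_m)=r_m(s_m)^\omega t_m$. The link $w_1\sieq{k}{2}w_2$ is obtained verbatim as in Proposition~\ref{prop:sig2main} (Corollary~\ref{cor:EF}, then Lemma~\ref{lem:EFmult} and transitivity), and each further link $w_m\sieq{k}{2}w_{m+1}$, $m\geq 2$, follows from $x_m\sieq{k}{2}x_{m+1}$, $y_m\sieq{k}{2}y_{m+1}$, $v_m\sieq{k}{2}v_{m+1}$ by Lemma~\ref{lem:EFmult}. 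Since $k$ was arbitrary, the tuple lies in $\Cs_{2,n}[\alpha_\infty]$.

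\noindent\textbf{The hard inclusion, setup and propagation.} For $\Cs_{2,n}[\alpha_\infty]\subseteq\Sat_{\sicd,n}(\alpha)$ I would again follow the $n=2$ proof. By Fact~\ref{fct:theell}, fix $\ell_+\geq 2$ with $\Cs^{\ell_+}_{2,m}[\alpha_+]=\Cs_{2,m}[\alpha_+]$ for all $m\leq n$, and $\ell_\infty\geq 2$ with $\Cs^{\ell_\infty}_{2,n-1}[\alpha_\infty]=\Cs_{2,n-1}[\alpha_\infty]$ (this is a pure finiteness fact about $\alpha_\infty$, hence not circular); set $p=|S_+|^{n-1}+1$ and $\ell=\max(\ell_+,\ell_\infty)+p$. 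It suffices to prove $\Cs^{\ell}_{2,n}[\alpha_\infty]\subseteq\Sat_{\sicd,n}(\alpha)$. Fix $(q_1,\dots,q_n)$ there, realized by $w_1\sieq{\ell}{2}w_2\sieq{\ell}{2}\cdots\sieq{\ell}{2}w_n$ with $\alpha_\infty(w_m)=q_m$. Decompose $w_1$ via Lemma~\ref{lem:ramsey} as $w_1=u_0u_1u_2\cdots$ with $\alpha_+(u_j)=e$ idempotent for $j\geq 1$ and (by alphabet compatibility) common alphabet $B=\content{e}$. Now propagate this factorization \emph{sequentially} along the chain: peeling off the first $p$ finite factors of $w_m$ and applying Lemma~\ref{lem:EFdecomp} $p$ times to $w_m\sieq{\ell}{2}w_{m+1}$ gives a factorization $w_{m+1}=u^{(m+1)}_0\cdots u^{(m+1)}_{p-1}v^{(m+1)}$ with $u^{(m)}_j\sieq{\ell-p}{2}u^{(m+1)}_j$ ($j\leq p-1$) and $v^{(m)}\sieq{\ell-p}{2}v^{(m+1)}$, where $u^{(1)}_j=u_j$, $v^{(1)}=u_pu_{p+1}\cdots$; each step costs rank $p$, and $\ell-p=\max(\ell_+,\ell_\infty)$.

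\noindent\textbf{The hard inclusion, simultaneous pigeonhole and conclusion.} The new point compared to $n=2$ is to run a \emph{single} pigeonhole over all of $w_2,\dots,w_n$: since $p>|S_+|^{n-1}$, the map $t\mapsto\bigl(\alpha_+(u^{(2)}_0\cdots u^{(2)}_t),\dots,\alpha_+(u^{(n)}_0\cdots u^{(n)}_t)\bigr)$ on $\{0,\dots,p-1\}$ is not injective, so fix $i<j_0\leq p-1$ with equal images; then for each $m$, $\alpha_+(u^{(m)}_0\cdots u^{(m)}_i)$ is stable under right multiplication by $\alpha_+(u^{(m)}_{i+1}\cdots u^{(m)}_{j_0})$, whence $\alpha_+(u^{(m)}_0\cdots u^{(m)}_i)=\alpha_+(u^{(m)}_0\cdots u^{(m)}_i)\bigl(\alpha_+(u^{(m)}_{i+1}\cdots u^{(m)}_{j_0})\bigr)^\omega$. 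Set $x_m=u^{(m)}_0\cdots u^{(m)}_i$, $y_m=u^{(m)}_{i+1}\cdots u^{(m)}_{j_0}$, $z_m=u^{(m)}_{i+1}\cdots u^{(m)}_{p-1}v^{(m)}$ (so $w_m=x_mz_m$ for $m\geq 2$ and $w_1=x_1y_1(u_{j_0+1}u_{j_0+2}\cdots)$), and $r_m=\alpha_+(x_m)$, $s_m=\alpha_+(y_m)$, $t_m=\alpha_\infty(z_m)$. One checks $s_1=e$ (a product of factors all mapping to $e$), so $q_1=r_1e^\infty=r_1(s_1)^\infty$, while the stability equation gives $r_m=r_m(s_m)^\omega$, so $q_m=r_mt_m=r_m(s_m)^\omega t_m$ for $m\geq 2$. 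Composing the $\sieq{\ell-p}{2}$‑relations between factors by Lemma~\ref{lem:EFmult} yields $x_1\sieq{\ell-p}{2}\cdots\sieq{\ell-p}{2}x_n$, $y_1\sieq{\ell-p}{2}\cdots\sieq{\ell-p}{2}y_n$ and $z_2\sieq{\ell-p}{2}\cdots\sieq{\ell-p}{2}z_n$; since $\ell-p\geq\ell_+$ this puts $(r_1,\dots,r_n),(s_1,\dots,s_n)$ in $\Cs_{2,n}[\alpha_+]$, and since $\ell-p\geq\ell_\infty$ it puts $(t_2,\dots,t_n)$ in $\Cs_{2,n-1}[\alpha_\infty]$. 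Finally $z_1:=u_{i+1}u_{i+2}\cdots$ has alphabet $B$, and $z_1\sieq{\ell-p}{2}z_2$ together with the fact (as in Proposition~\ref{prop:sig2main}) that the alphabet of an \iword is testable by a \sicd sentence of rank $2$ gives $\content{t_2}=\content{z_2}=B=\content{e}=\content{s_1}$. Hence $(q_1,\dots,q_n)\in\Sat_{\sicd,n}(\alpha)$.

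\noindent\textbf{Main obstacle.} The delicate step is making \emph{one} choice of cut points $i<j_0$ serve all $n$ words at once. This is what forces the Ramsey factorization of $w_1$ to be pushed down the chain sequentially (so consecutive words keep related factors, which is what ultimately makes the $r$‑, $s$‑ and $z$‑tuples genuine \dchains) and forces the pigeonhole to be run in the product $(S_+)^{n-1}$, giving the threshold $p=|S_+|^{n-1}+1$. Beyond that, the argument is the familiar rank‑budget bookkeeping of the $n=2$ case — spend $p$ on propagation plus pigeonhole, keep $\max(\ell_+,\ell_\infty)$ in reserve for the three \dchain conditions — and invokes only Lemmas~\ref{lem:ramsey}, \ref{lem:EFdecomp} and~\ref{lem:EFmult} together with Corollary~\ref{cor:EF}.
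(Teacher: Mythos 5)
Your proof is correct and is precisely the ``straightforward generalization'' of the proof of Proposition~\ref{prop:sig2main} that the paper alludes to and leaves to the reader: Ramsey factorization of the first \iword, sequential propagation via Lemma~\ref{lem:EFdecomp}, a pigeonhole argument, and the same algebraic bookkeeping. You have also correctly identified and handled the only two points that genuinely change for $n>2$, namely running the pigeonhole simultaneously in $(S_+)^{n-1}$ and introducing the extra (non-circular) constant $\ell_\infty$ so that $(t_2,\dots,t_n)$ is a genuine \dchain for $\alpha_\infty$.
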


The proof of Proposition~\ref{prop:sig2gen} is a straightforward
generalization of that of Proposition~\ref{prop:sig2main} and is left
to the reader.

\subsection{Proof of $3) \Rightarrow 2)$ in Theorem~\ref{thm:caracbc}}

We now have all the material we need to prove the implication $3)
\Rightarrow 2)$ in Theorem~\ref{thm:caracbc}. The proof is based
on a new object that may be associated to any alphabet compatible
\isemi morphism. We first present this object and then explain why it
may be used to prove that $3) \Rightarrow 2)$.

\medskip
\noindent
{\bf Graph Associated to an \iSemi Morphism.} Let $\alpha:
(A^+,A^\infty) \rightarrow (S_+,S_\infty)$ be an alphabet compatible
morphism into a finite \isemi $(S_+,S_\infty)$. We explain how to
associate a graph $G[\alpha]$ to $\alpha$.

We denote by $S^1_+$ the monoid constructed from $S_+$ as follows.
If $S_+$ is already a monoid (\emph{i.e.}, if it has a neutral element
$1_{S_+}$) then $S_+^1 = S_+$. Otherwise $S_+^1 = S_+ \cup
\{1_{S_+}\}$ where $1_{S_+}$ is defined as an artificial neutral
element. We associate to $\alpha$ a graph $G[\alpha]$ whose set
of nodes is $S_+^1 \times S_\infty$ and whose edges are labeled by
subsets of the alphabet (\emph{i.e.}, elements of $2^A$). Given two nodes
$(s_+,s_\infty)$ and $(t_+,t_\infty)$ and $B \in 2^A$, there is an
edge,
\[
  (s_+,s_\infty) \xrightarrow{B} (t_+,t_\infty)
\]
if and only
if there exist $(p_1,p_2),(q_1,q_2) \in  \Cs_{2}[\alpha_+]$ and $q
\in \alpha(A^+)$ such that $\content{p_1} = \content{q_1} =
\content{q} = B$ and,
\[
  s_+p_1  (q_1)^\infty = s_\infty \qquad \text{and} \qquad s_+  p_2
  (q_2)^\omega q = t_+
\]

Observe that the definition of the edge relation does not depend on
$t_\infty$ in the destination node. Given any alphabet $B \subseteq
A$, we call \emph{$B$-cycle} of $G[\alpha]$ a cycle of $G[\alpha]$ in which
all edges are labeled by $B$. Finally, we say that $G[\alpha]$ is
\emph{recursive} if and only if there exist $B \subseteq A$ and a
$B$-cycle such that this cycle contains two nodes $(s_+,s_\infty)$ and
$(t_+,t_\infty)$ with $s_\infty \neq t_\infty$.

\medskip
\noindent
{\bf Proof of the Theorem.} Now that we have defined the graph
$G[\alpha]$, the implication $3) \Rightarrow 2)$ in
Theorem~\ref{thm:caracbc} is an immediate consequence of the two
following lemmas.

\begin{lemma} \label{lem:recursive1}
  Let $\alpha: (A^+,A^\infty) \rightarrow (S_+,S_\infty)$ as an alphabet
  compatible morphism into a finite \isemi $(S_+,S_\infty)$ and assume
  that $\alpha$ satisfies Equation~\eqref{eq:bcs}. Then $G[\alpha]$ is
  not recursive.
\end{lemma}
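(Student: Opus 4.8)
The plan is to argue by contradiction. Suppose $G[\alpha]$ is recursive, so there is a $B$-cycle $X_0\xrightarrow{B}X_1\xrightarrow{B}\cdots\xrightarrow{B}X_{m-1}\xrightarrow{B}X_0$, with $X_i=(a_i,b_i)\in S^1_+\times S_\infty$, that contains two nodes with distinct $S_\infty$-components. Then some pair of \emph{consecutive} nodes already has distinct $S_\infty$-components (otherwise all would be equal), and after rotating the cycle I may assume $b_0\neq b_1$. For each $i$ (indices mod $m$) fix data witnessing the edge $X_i\xrightarrow{B}X_{i+1}$: pairs $(p_1^i,p_2^i),(q_1^i,q_2^i)\in\Cs_{2,2}[\alpha_+]$ and $q^i\in\alpha(A^+)$ with $\content{p_1^i}=\content{q_1^i}=\content{q^i}=B$, $a_ip_1^i(q_1^i)^\infty=b_i$ and $a_ip_2^i(q_2^i)^\omega q^i=a_{i+1}$. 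A first point is that in fact $\content{p_2^i}=\content{q_2^i}=B$ too: both components of a pair lying in $\Cs_{2,2}[\alpha_+]$ have the same alphabet, since the alphabet of a word can be tested both by a $\Sigma_1$-sentence (``some position carries an $a$'') and by the rank-$2$ \sicd-sentence $\exists x\,\forall y\,\bigvee_{b\in C}P_b(y)$ (``the alphabet is included in $C$''). Writing $d_i=p_2^i(q_2^i)^\omega q^i$, each $d_i$ thus has alphabet $B$, and from $a_{i+1}=a_id_i$ we get $a_0=a_0d_0\cdots d_{m-1}=a_0c$ with $c:=d_0\cdots d_{m-1}\in S_+$, hence $a_0=a_0c^\omega$.

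The core of the argument is the following claim: \emph{if $X=(a,b)$ is any node lying on a $B$-cycle and $c\in S_+$ denotes the product of the $S_+$-labels along that cycle read from $X$ back to $X$, then $b=a\,c^\infty$.} Granting this, the proof finishes quickly. Applying the claim to $X_0$ gives $b_0=a_0c^\infty$. Applying it to $X_1$ for the rotated cycle $X_1\to X_2\to\cdots\to X_0\to X_1$, whose $S_+$-label product read from $X_1$ is $c_1=d_1\cdots d_{m-1}d_0$, gives $b_1=a_1c_1^\infty$; but $a_1=a_0d_0$, so by associativity of the infinite product $a_1c_1^\infty=a_0d_0(d_1\cdots d_{m-1}d_0)^\infty=a_0(d_0\cdots d_{m-1})^\infty=a_0c^\infty$. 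Hence $b_0=a_0c^\infty=b_1$, contradicting $b_0\neq b_1$.

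To prove the claim I would manufacture, from the cycle data, \dchains of length two on which Equation~\eqref{eq:bcs} can be used. Two facts do the work. First, $\Cs_{2,2}[\alpha_+]$ is closed under componentwise product and componentwise $\omega$-power: this follows from the Composition Lemma~\ref{lem:EFmult} applied to words witnessing the given chains, so that, for instance, the ``version-$1$ cycle product'' $C_1=e_0e_1\cdots e_{m-1}$ with $e_i=p_1^i(q_1^i)^\omega q^i$ satisfies $(C_1,c)\in\Cs_{2,2}[\alpha_+]$, and similarly for several variants, all of alphabet $B$. Second, instantiating Equation~\eqref{eq:bcs} at a pair $(s_1,s_2),(t_1,t_2)\in\Cs_{2,2}[\alpha_+]$ whose second components both equal $c$ produces the idempotent prefix $(c\cdot c^\omega)^\omega=c^\omega$ and the identity $c^\infty=c^\omega\,s_1\,(t_1)^\infty$; since $a_0=a_0c^\omega$, multiplying such identities on the left by $a_0$ keeps $a_0$ in front while rewriting the tail. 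Iterating, starting from $b_0=a_0p_1^0(q_1^0)^\infty=a_0c^\omega p_1^0(q_1^0)^\infty$ and successively trading each block $d_i$ for the corresponding version-$1$ block $e_i$ inside the tail (always keeping the second component equal to $c$), one should reach $b_0=a_0c^\infty$; the cyclic-shift identity $a_jc_j^\infty=a_0c^\infty$ then carries the value to every node.

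The step I expect to be the main obstacle is precisely this iterated rewriting: producing the exact sequence of \dchain pairs that lets Equation~\eqref{eq:bcs} replace the ``local'' tail $p_1^0(q_1^0)^\infty$, which only involves the first edge, by the ``global'' tail $c^\infty$, which involves the whole cycle. One has to bookkeep the ``connector'' elements $q^i$ — which are only controlled through their alphabet $B$ — and keep all the alphabet side-conditions of Equation~\eqref{eq:bcs} satisfied at every application. This is the only place where the equation of Item~\ref{item:1} is genuinely used, and essentially the only step specific to infinite words; everything else is either bookkeeping around the graph $G[\alpha]$ or a direct consequence of the composition properties of the preorders $\ksieq{2}$.
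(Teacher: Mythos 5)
Your overall strategy is the right one: your ``core claim'' --- that every node $(a,b)$ on a $B$-cycle satisfies $b=a\,c^\infty$, where $c$ is the product of the $S_+$-labels around the cycle --- is exactly what the paper establishes, and your endgame (rotating the cycle and using associativity of the infinite product to get $b_0=a_0c^\infty=b_1$) is fine. The genuine gap is that you never carry out the one step where the hypothesis is actually used, and you flag this yourself. Equation~\eqref{eq:bcs} only rewrites tails of the very specific shape $(s_2(t_2)^\omega)^\omega s_1(t_1)^\infty$ into $(s_2(t_2)^\omega)^\infty$ for \dchains $(s_1,s_2),(t_1,t_2)$ with $\content{s_1}=\content{t_1}$. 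Your proposed instantiation ``with both second components equal to $c$'' yields $c^\infty=c^\omega s_1(t_1)^\infty$ only if $(s_1,c)$ and $(t_1,c)$ are themselves \dchains, which is not something the cycle data provides; and the ``iterated trading of blocks $d_i$ for $e_i$'' does not obviously produce a sequence of legal instances of~\eqref{eq:bcs}, since each application collapses the entire tail into the infinite power of a \emph{single} element $s_2(t_2)^\omega$, leaving no intermediate expression to iterate on. As written, the claim is asserted rather than proved.

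The paper avoids the iteration entirely. First it shows that $\xrightarrow{B}$ is transitive (compose the witnesses of two consecutive edges by setting $q''=qq'$), which reduces the whole lemma to a two-node cycle $(s_+,s_\infty)\xrightarrow{B}(t_+,t_\infty)\xrightarrow{B}(s_+,s_\infty)$; your rotation keeps the cycle length $m$ and is precisely what pushes you toward an induction. Then, writing $d=p_2(q_2)^\omega qp'_2(q'_2)^\omega q'$ for the two-edge cycle product, it uses $s_+=s_+d^{\omega+1}$ to get $s_\infty=s_+d^{\omega+1}p_1(q_1)^\infty$ and regroups this as $s_+p_2(q_2)^\omega\cdot\bigl(S_2(q_2)^\omega\bigr)^\omega\cdot S_1(q_1)^\infty$ with $S_1=qp'_2(q'_2)^\omega q'p_1$ and $S_2=qp'_2(q'_2)^\omega q'p_2$. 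Closure of $\Cs_2[\alpha_+]$ under componentwise product (prepend the reflexive pair on the connector $qp'_2(q'_2)^\omega q'$ to $(p_1,p_2)$) gives $(S_1,S_2)\in\Cs_2[\alpha_+]$, the alphabet condition $\content{S_1}=B=\content{q_1}$ holds, and a \emph{single} application of~\eqref{eq:bcs} with the pairs $(S_1,S_2)$ and $(q_1,q_2)$ yields $s_\infty=s_+d^\infty$; a symmetric computation gives $t_\infty=s_+d^\infty$ as well. This conjugation-and-regrouping device --- packaging the whole cycle product into one legal instance of the equation, with the second components taken from one fixed edge and everything else absorbed into the first components --- is the missing idea in your write-up.
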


\begin{lemma} \label{lem:recursive2}
  Let $\alpha: (A^+,A^\infty) \rightarrow (S_+,S_\infty)$ as an alphabet
  compatible morphism into a finite \isemi $(S_+,S_\infty)$ and assume
  that,
  \begin{enumerate}
  \item $\alpha_+$ has bounded \sicd-alternation.
  \item $\alpha_\infty$ has unbounded \sicd-alternation.
  \end{enumerate}
  Then $G[\alpha]$ is recursive.
\end{lemma}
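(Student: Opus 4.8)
Assume $\alpha_+$ has bounded and $\alpha_\infty$ has unbounded \sicd-alternation. By definition of unbounded alternation there is a pair $a\neq b$ in $S_\infty$ that is \sicd-alternating, so by Fact~\ref{fct:subdup} every finite word over $\{a,b\}$ lies in $\Cs_2[\alpha_\infty]$; in particular, for every $N$ the alternating chain $\bar c_N=(a,b,a,b,\dots)$ of length $N$ belongs to $\Cs_{2,N}[\alpha_\infty]$. I would fix $N$ very large compared with the number $|S_+^1|\cdot|S_\infty|$ of nodes of $G[\alpha]$ and with $2^{|A|}$. The goal is then to turn this long \sicd-chain of \iwords into a long walk in $G[\alpha]$ and to extract a recursive cycle by pigeonhole.

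First I would apply the generalized Proposition~\ref{prop:sig2gen} repeatedly to $\bar c_N$. One application decomposes a chain $(c_1,\dots,c_m)\in\Cs_{2,m}[\alpha_\infty]$ into $c_1=r_1(s_1)^\infty$ and $c_j=r_j(s_j)^\omega t_j$ for $j\geq2$, with $(r_1,\dots,r_m),(s_1,\dots,s_m)\in\Cs_{2,m}[\alpha_+]$, $(t_2,\dots,t_m)\in\Cs_{2,m-1}[\alpha_\infty]$ and $\content{s_1}=\content{t_2}$. Peeling off the first coordinate and iterating yields, for each coordinate of $\bar c_N$, a telescoping expansion as a product of finitely many blocks of the shape $r(s)^\omega$ in $S_+$ followed by an $S_\infty$-tail, all intermediate $S_+$-chains being \sicd-chains for $\alpha_+$ and all the interface alphabet conditions $\content{s_1}=\content{t_2}$ holding. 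From this expansion I would read off a walk in $G[\alpha]$: the $S_+$-component of a node is a partial product of the blocks; passing to the next node absorbs one further block of the form $r(s)^\omega q'$, which is exactly the shape $p_2(q_2)^\omega q$ (with $q\in\alpha(A^+)$) occurring in the edge relation; the $S_\infty$-component of a node is the value of the associated \iword (its $\omega$-word built from the prefix together with $p_1(q_1)^\infty$), and the \sicd-inequalities between consecutive \iwords, obtained from Corollary~\ref{cor:EF} and Lemmas~\ref{lem:EFmult}--\ref{lem:EFdecomp} exactly as in the correctness proof of Proposition~\ref{prop:sig2main}, guarantee that consecutive nodes are genuinely linked by an edge of $G[\alpha]$.

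Since the blocks sitting over a fixed stretch of the walk share a common interface alphabet, and there are only $2^{|A|}$ alphabets, a walk of length $N$ contains a long sub-walk all of whose edges carry one fixed label $B$; and since $G[\alpha]$ has finitely many nodes, this sub-walk revisits a node, producing a $B$-cycle. Finally I would argue this $B$-cycle can be chosen recursive: because the starting chain $\bar c_N$ lives over $\{a,b\}$ and the walk is set up so that the $S_\infty$-components of its successive stages run through $a$ and $b$, any closed sub-walk visits stages carrying both values, hence contains a node with $S_\infty$-value $a$ and one with $S_\infty$-value $b$; as $a\neq b$, the cycle is recursive, so $G[\alpha]$ is recursive.

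The step I expect to be the main obstacle is making the passage ``telescoping expansion $\leadsto$ walk in $G[\alpha]$'' rigorous: matching precisely the shapes $r(s)^\infty$ and $p_2(q_2)^\omega q$, handling the auxiliary finite factor $q\in\alpha(A^+)$, and — most delicately — controlling the alphabet constraint $\content{p_1}=\content{q_1}=\content{q}=B$ that must hold \emph{along each edge} even though the alphabets of the other components (e.g. the $p_2$, $q_2$ produced by the decomposition) may be strictly larger. In tandem, one must ensure the walk is genuinely long enough to force a recursive cycle rather than collapsing prematurely, and this is exactly where bounded \sicd-alternation of $\alpha_+$ enters: it constrains the $S_+$-chains coming out of Proposition~\ref{prop:sig2gen} so that the $S_+$-side of the walk cannot ``cycle without progress'' in a way disconnected from the unbounded alternation living on the $S_\infty$-side. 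Balancing these two sides is the heart of the proof.
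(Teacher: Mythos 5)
Your high-level plan uses the right tools (Proposition~\ref{prop:sig2gen} on long alternating \dchains, pigeonhole on a finite graph), but the step you yourself flag as ``the main obstacle'' --- turning the telescoped decomposition of one long chain $(a,b,a,b,\dots)$ into a walk in $G[\alpha]$ --- genuinely fails as written, and it is not a matter of routine bookkeeping. One application of Proposition~\ref{prop:sig2gen} to $(c_1,\dots,c_N)$ writes $c_j=r_j(s_j)^\omega t_j$ where $r_j,s_j$ are the \emph{$j$-th coordinates} of two \dchains for $\alpha_+$; iterating on the tail chains therefore builds the ``partial product of blocks'' for position $j$ and for position $j+1$ out of \emph{different} elements of $S_+$. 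But an edge of $G[\alpha]$ from $(\sigma_j,\tau_j)$ to $(\sigma_{j+1},\tau_{j+1})$ requires $\sigma_{j+1}=\sigma_j p_2(q_2)^\omega q$, i.e., the next $S_+$-component must be a right multiple of the previous one; your partial products are not so related unless the coordinates of the $\alpha_+$-chains are forced to coincide. This is precisely the one place where the hypothesis that $\alpha_+$ has bounded \sicd-alternation does real work, and it has to be made explicit: after a pigeonhole making the even and odd coordinates of a long enough \dchain in $\Cs_2[\alpha_+]$ each constant, the resulting pair $(p',p'')$ satisfies $(p',p'')^h\in\Cs_2[\alpha_+]$ for $h$ past a fixed threshold (such a threshold exists by closure under subwords, Fact~\ref{fct:subdup}), hence is \sicd-alternating for $\alpha_+$, hence $p'=p''$. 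Only after this collapse do the prefixes telescope and the edges exist. Note moreover that you cannot pigeonhole all levels of your telescoping at once --- the number of tuples to stabilize grows with the number of levels --- so the levels must be handled one at a time with the chain length budgeted per level, which your sketch does not do.

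The paper avoids this entirely by not telescoping a single chain: it calls a node $(s_+,s_\infty)$ a \emph{$B$-generator} if $s_\infty=s_+s_1\neq s_+s_2$ for some \sicd-alternating pair $(s_1,s_2)$ of $\alpha_\infty$ with $\content{s_1}=B$, fixes $B$ minimal, and proves a step lemma: every $B$-generator has a $B$-labeled edge to another $B$-generator with a different $S_\infty$-component. Each step applies Proposition~\ref{prop:sig2gen} \emph{once} to $(s_1,s_2)^\ell$ and then performs the pigeonhole-plus-bounded-alternation collapse described above; the new alternating pair has alphabet $D\subseteq B$, and minimality of $B$ forces $D=B$, which is what makes every edge of the eventual cycle carry the same label. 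Your idea of extracting a monochromatic sub-walk from a decreasing sequence of alphabets could plausibly replace that minimality device, but it presupposes that the walk exists in the first place. As it stands the proposal has a real gap at its core and is not yet a proof.
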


Before we prove these two lemmas, we use them to conclude the proof of
Theorem~\ref{thm:caracbc}. Let $L \subseteq A^\infty$ be a regular
\ilang and let $\alpha: (A^+,A^\infty) \to (S_+,S_\infty)$ be the
alphabet completion of its syntactic morphism. Assume that condition
$3)$ holds in Theorem~\ref{thm:caracbc}, \emph{i.e.}, $\alpha_+$ has bounded
\sicd-alternation and $\alpha$ satisfies~\eqref{eq:bcs}. We have to
prove that $\alpha_\infty$ has bounded \sicd-alternation. We proceed
by contradiction. Assume that $\alpha_\infty$ has unbounded
\sicd-alternation. We now have three hypotheses:
\begin{enumerate}[itemindent=3ex,leftmargin=4ex,label=($\alph*$)]
\item $\alpha$ satisfies~\eqref{eq:bcs}.
\item $\alpha_+$ has bounded \sicd-alternation.
\item $\alpha_\infty$ has unbounded \sicd-alternation.
\end{enumerate}
Therefore, it follows from Lemma~\ref{lem:recursive1} that $G[\alpha]$
is not recursive and from Lemma~\ref{lem:recursive2} that $G[\alpha]$
is recursive, which is a contradiction. We conclude that
$\alpha_\infty$ has bounded \sicd-alternation which terminates the
proof of Theorem~\ref{thm:caracbc}.

It now remains to prove Lemma~\ref{lem:recursive1} and
Lemma~\ref{lem:recursive2}. We devote a subsection to each proof.

\subsection{Proof of Lemma~\ref{lem:recursive1}}

Let $\alpha: (A^+,A^\infty) \rightarrow (S_+,S_\infty)$ be an alphabet
compatible morphism into a finite \isemi $(S_+,S_\infty)$ that
satisfies~\eqref{eq:bcs}. Consider a $B$-cycle
in $G[\alpha]$ and let $(s_+,s_\infty)$ and $(t_+,t_\infty)$ be two
nodes in this $B$-cycle. We have to prove that $s_\infty = t_\infty$.
We first prove that we may actually assume that $(s_+,s_\infty)$ and
$(t_+,t_\infty)$ are the only nodes in the cycle. This follows from
the next fact.

\begin{fact} \label{fct:transitive}
  For all $B \subseteq A$, $\xrightarrow{B}$ is transitive.
\end{fact}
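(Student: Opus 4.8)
The plan is to prove that $\xrightarrow{B}$ is transitive directly from the definition of the edge relation of $G[\alpha]$, by merging two composable $B$-edges into a single one. The guiding observation is that the first edge already takes care of everything concerning the ``$\infty$-side'' (the equation producing $s_\infty$), so its witnessing \dchains can be recycled verbatim; the effect of the second edge is then folded entirely into the ``tail'' element that occurs in the definition of an edge.

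Concretely, suppose $(s_+,s_\infty) \xrightarrow{B} (t_+,t_\infty)$ and $(t_+,t_\infty) \xrightarrow{B} (u_+,u_\infty)$. Unfolding the definition, the first edge yields $(p_1,p_2),(q_1,q_2) \in \Cs_{2}[\alpha_+]$ and $q \in \alpha(A^+)$ with $\content{p_1} = \content{q_1} = \content{q} = B$ such that $s_+ p_1 (q_1)^\infty = s_\infty$ and $s_+ p_2 (q_2)^\omega q = t_+$; the second edge yields primed data $(p'_1,p'_2),(q'_1,q'_2) \in \Cs_{2}[\alpha_+]$ and $q' \in \alpha(A^+)$ with $\content{p'_1} = \content{q'_1} = \content{q'} = B$ and, in particular, $t_+ p'_2 (q'_2)^\omega q' = u_+$. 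I would then exhibit the edge $(s_+,s_\infty) \xrightarrow{B} (u_+,u_\infty)$ using the \emph{same} pairs $(p_1,p_2)$ and $(q_1,q_2)$ together with the new tail element $Q := q\, p'_2\, (q'_2)^\omega\, q'$. The first product equality, $s_+ p_1 (q_1)^\infty = s_\infty$, is then unchanged, and the second one is the one-line computation
\[
  s_+ p_2 (q_2)^\omega Q = \bigl(s_+ p_2 (q_2)^\omega q\bigr)\, p'_2 (q'_2)^\omega q' = t_+\, p'_2 (q'_2)^\omega q' = u_+ .
\]

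It then remains to check the two side conditions imposed on $Q$ by the definition of an edge, namely $Q \in \alpha(A^+)$ and $\content{Q} = B$. The first is immediate, since $\alpha(A^+)$ is a subsemigroup of $S_+$ closed under $s \mapsto s^\omega$ and each of $q, p'_2, q'_2, q'$ lies in $\alpha(A^+)$. For the alphabet, I would use that $\alpha$ is alphabet compatible and that the alphabet of a word can be tested in \sicd, so that every \dchain carries a constant alphabet along all of its coordinates; this forces $\content{p'_2} = \content{p'_1} = B$ and $\content{q'_2} = \content{q'_1} = B$, whence $\content{Q} = \content{q} \cup \content{p'_2} \cup \content{q'_2} \cup \content{q'} = B$, using that $s \mapsto \content{s}$ is a morphism into $(2^A,\cup)$ satisfying $\content{s^\omega} = \content{s}$.

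This argument is essentially bookkeeping, and I do not anticipate a genuine obstacle. The only point needing a little care — and the closest thing to a subtlety — is the alphabet computation for $Q$: one has to record why the \emph{second} components of \dchains carry alphabet exactly $B$ (and not merely a subset of it), and why multiplying elements and taking $\omega$-powers inside $S_+$ leaves alphabets unchanged. Both rely solely on alphabet compatibility of $\alpha$ and on \sicd-testability of the alphabet, facts already used elsewhere in the paper.
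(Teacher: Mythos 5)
Your proof is correct and follows essentially the same route as the paper's: keep the witnessing pairs $(p_1,p_2),(q_1,q_2)$ of the first edge and absorb the entire effect of the second edge into a new tail element $q''=q\,p'_2(q'_2)^\omega q'$, whose alphabet is $B$ because all coordinates of a \dchain share the same alphabet. Your write-up is in fact slightly more careful than the paper's, which leaves the checks that $q''\in\alpha(A^+)$ and $\content{q''}=B$ implicit.
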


\begin{proof}
  Let $(r_+,r_\infty)$, $(s_+,s_\infty)$ and $(t_+,t_\infty)$ be three
  nodes such that,
  \[
    (r_+,r_\infty) \xrightarrow{B} (s_+,s_\infty) \xrightarrow{B}
    (t_+,t_\infty)
  \]
  By definition of the left edge, we have $(p_1,p_2),(q_1,q_2) \in
  \Cs_2[\alpha_+]$ and $q \in \alpha(A^+)$ such that $\content{p_1} =
  \content{q_1} = \content{q} = B$, $r_+p_1  (q_1)^\infty = r_\infty$
  and $r_+  p_2  (q_2)^\omega q = s_+$.

  Moreover, it follows from the definition of the right edge that we
  have $q'$ such that $\content{q'} = B$ and $s_+q' = t_+$. Set $q'' =
  qq'$, we now have $\content{p_1} = \content{q_1} = \content{q''} = B$,
  $r_+p_1(q_1)^\infty = r_\infty$ and $r_+p_2(q_2)^\omega q'' = t_+$. We
  conclude that $(r_+,r_\infty) \xrightarrow{B} (t_+,t_\infty)$.
\end{proof}

It is immediate from Fact~\ref{fct:transitive} that,
\[
  (s_+,s_\infty) \xrightarrow{B} (t_+,t_\infty) \quad \text{and} \quad
  (t_+,t_\infty) \xrightarrow{B} (s_+,s_\infty)
\]
By definition of $B$-labeled edges we have $(p_1,p_2),(q_1,q_2),
(p'_1,p'_2),(q'_1,q'_2) \in \Cs_2[\alpha_+]$ and $q,q' \in S_+$ such that,
\begin{enumerate}[label=$\alph*)$]
\item\label{item:a} $\content{p_1} = \content{q_1} = \content{q} = \content{p'_1} =
  \content{q'_1} = \content{q'} = B$.
\item\label{item:b} $s_+p_1 (q_1)^\infty = s_\infty$.
\item\label{item:c} $s_+ p_2 (q_2)^\omega q = t_+$.
\item\label{item:d} $t_+p'_1  (q'_1)^\infty = t_\infty$.
\item\label{item:e} $t_+  p'_2 (q'_2)^\omega q' = s_+$.
\end{enumerate}
We now use these equalities to prove that $s_\infty$ and $t_\infty$
are equal. Using \ref{item:c} and \ref{item:e}, we obtain $s_+=t_+  p'_2
(q'_2)^\omega q'=s_+ p_2 (q_2)^\omega qp'_2 (q'_2)^\omega q'$. Therefore,
$s_+$ is stable by right multiplication by $p_2 (q_2)^\omega qp'_2
(q'_2)^\omega q'$, hence
\begin{equation}
  \label{eq:s}
  \begin{array}{lll}
    s_+&=&s_+(p_2 (q_2)^\omega qp'_2(q'_2)^\omega q')\\
       &=&s_+(p_2 (q_2)^\omega qp'_2(q'_2)^\omega q')^{\omega+1}
  \end{array}
\end{equation}
whence by~\ref{item:a}
\[
  \begin{array}{lll}
    s_\infty & = & s_+(p_2 (q_2)^\omega qp'_2 (q'_2)^\omega
                   q')^{\omega+1} p_1 (q_1)^\infty \\
             & = & s_+p_2 (q_2)^\omega\cdot ( qp'_2 (q'_2)^\omega
                   q'p_2 (q_2)^\omega)^{\omega}\cdot   qp'_2 (q'_2)^\omega
                   q'p_1 (q_1)^\infty
  \end{array}
\]

Now, by closure of $\Cs_2[\alpha_+]$ under product and since
$(p_1,p_2)\in\Cs_2[\alpha_+]$, we obtain that
\[
  (qp'_2 (q'_2)^\omega q'p_1,qp'_2 (q'_2)^\omega q'p_2) \in \Cs_2[\alpha_+].
\]
One can also verify that
$\content{qp'_2 (q'_2)^\omega q'p_1} = B = \content{q_1}$. Therefore,
we may apply~\eqref{eq:bcs} to obtain
\[
  \begin{array}{lll}
    s_\infty & = & s_+p_2 (q_2)^\omega( qp'_2 (q'_2)^\omega q'p_2 (q_2)^\omega)^{\infty} \\
             & = & s_+(p_2 (q_2)^\omega qp'_2 (q'_2)^\omega q')^{\infty}
  \end{array}
\]
We now prove that $t_\infty = s_+(p_2 (q_2)^\omega qp'_2 (q'_2)^\omega
q')^{\infty}$ as well. By \eqref{eq:s} and Items~\ref{item:d} and
\ref{item:c}, we get
\[
  \begin{array}{lll}
    t_\infty & = & s_+(p_2 (q_2)^\omega qp'_2 (q'_2)^\omega
                   q')^{\omega+1}p_2 (q_2)^\omega q p'_1 (q'_1)^\infty \\
             & = & s_+p_2 (q_2)^\omega qp'_2 (q'_2)^\omega\cdot(q'p_2
                   (q_2)^\omega qp'_2 (q'_2)^\omega)^{\omega}\cdot q' p_2 (q_2)^\omega q p'_1 (q'_1)^\infty
  \end{array}
\]

Since by hypothesis, we have $p'_1,q'_1\in \Cs_2[\alpha_+]$, we get by closure
under product that
\[
  (q' p_2 (q_2)^\omega q p'_1,q'p_2
  (q_2)^\omega qp'_2) \in \Cs_2[\alpha_+].
\]
One can also verify that
$\content{q' p_2 (q_2)^\omega q p'_1} = B = \content{q'_1}$. Therefore,
we may apply~\eqref{eq:bcs} to obtain,
\[
  \begin{array}{lll}
    t_\infty & = & s_+p_2 (q_2)^\omega qp'_2 (q'_2)^\omega(q'p_2
                   (q_2)^\omega qp'_2 (q'_2)^\omega)^{\infty} \\
             & = & s_+(p_2 (q_2)^\omega
                   qp'_2 (q'_2)^\omega q')^{\infty}
  \end{array}
\]
We conclude that $s_\infty = t_\infty$ which terminates the proof.\qed

\subsection{Proof of Lemma~\ref{lem:recursive2}}

Let $\alpha: (A^+,A^\infty) \rightarrow (S_+,S_\infty)$ be an alphabet
compatible morphism into a finite \isemi $(S_+,S_\infty)$ such that,
\begin{enumerate}[itemindent=3ex,leftmargin=4ex]
\item $\alpha_+$ has bounded alternation.
\item $\alpha_\infty$ has unbounded alternation.
\end{enumerate}
We have to prove that $G[\alpha]$ is recursive. Let $B \subseteq
A$. We say that a node $(s_+,s_\infty)$ of $G[\alpha]$ is a
\emph{$B$-generator} if there exists a \sicd-alternating pair
$(s_1,s_2) \in (S_\infty)^2$ for $\alpha_\infty$ such that $s_\infty =
s_+s_1 \neq s_+s_2$ and $\content{s_1} = B$.

\begin{lemma} \label{lem:alt1}
  $G[\alpha]$ contains at least one $B$-generator for some $B \subseteq A$.
\end{lemma}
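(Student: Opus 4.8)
The claim is essentially the trivial existence step in the proof that $G[\alpha]$ is recursive, and it should follow at once from the single hypothesis that $\alpha_\infty$ has unbounded \sicd-alternation, by taking the first coordinate of the generating node to be neutral. The only point that needs a line of care is checking that the content of the relevant element of $S_\infty$ is well defined.

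First I would unfold the hypothesis. Saying that $\alpha_\infty$ has unbounded \sicd-alternation is, by definition, the negation of bounded \sicd-alternation, so there exists a \sicd-alternating pair $(s_1,s_2) \in (S_\infty)^2$ for $\alpha_\infty$ with $s_1 \neq s_2$. Next I would record that such a pair has preimages under $\alpha_\infty$: applying the definition of \sicd-alternating with $n=1$ shows that the \chain $(s_1,s_2)$ lies in $\Cs_2[\alpha_\infty]$, hence there are \iwords mapped to $s_1$ and to $s_2$. Since $\alpha$ is alphabet compatible, $B := \content{s_1}$ is therefore a well-defined (and nonempty) subset of $A$, so it makes sense to speak of a $B$-generator for this particular $B$.

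Then I would exhibit the $B$-generator. Consider the node $(1_{S_+}, s_1)$ of $G[\alpha]$, where $1_{S_+}$ is the neutral element of $S_+^1$. Since $1_{S_+}$ is neutral for the mixed product, $1_{S_+} s_1 = s_1$ and $1_{S_+} s_2 = s_2$; in particular the second coordinate of the node equals $1_{S_+} s_1$, and $1_{S_+} s_1 = s_1 \neq s_2 = 1_{S_+} s_2$. Together with $\content{s_1} = B$ and the fact that $(s_1,s_2)$ is \sicd-alternating for $\alpha_\infty$, this is precisely the definition of $(1_{S_+}, s_1)$ being a $B$-generator, which proves Lemma~\ref{lem:alt1}.

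I do not expect any real obstacle in this lemma: it is the base/existence statement, and it does not even use the hypothesis that $\alpha_+$ has bounded \sicd-alternation. The one thing to be careful about is not to apply \content to an element of $S_\infty$ with no preimage, which is why one first verifies $s_1 \in \alpha_\infty(A^\infty)$. The substantive work of the proof that $G[\alpha]$ is recursive lies in the companion lemma(s), which must promote a $B$-generator to an actual $B$-cycle of $G[\alpha]$ joining two nodes with distinct $S_\infty$-coordinates; that is presumably where the assumption that $\alpha_+$ has bounded \sicd-alternation is used.
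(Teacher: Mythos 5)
Your proof is correct and matches the paper's own argument: the paper likewise unfolds unbounded \sicd-alternation into a \sicd-alternating pair $(s_1,s_2)$ with $s_1\neq s_2$ and observes that $(1_{S_+},s_1)$ is an $\content{s_1}$-generator. Your extra remark that $\content{s_1}$ is well defined because $s_1$ has a preimage is a fine (if minor) point of care that the paper leaves implicit.
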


\begin{proof}
  This is because $\alpha_\infty$ has unbounded \sicd-alternation. By
  definition this means that there exists at least one \sici-alternating
  pair $(s_1,s_2) \in S_\infty$ for $\alpha_\infty$ such that $s_1 \neq
  s_2$. It is then immediate that $(1_{S_+},s_1)$ is an
  $\content{s_1}$-generator.
\end{proof}

Set $B$ as a minimal alphabet (with respect to inclusion) such
that there exists a $B$-generator. That $G[\alpha]$ is recursive
is a consequence of the next lemma.

\begin{lemma} \label{lem:alt2}
  Let $(s_+,s_\infty)$ be any $B$-generator of $G[\alpha]$. Then, there
  exists a node $(t_+,t_\infty)$ such that

  \begin{enumerate}[itemindent=3ex,leftmargin=4ex]
  \item $s_\infty \neq t_\infty$.
  \item $(t_+,t_\infty)$ is a $B$-generator.
  \item there is a $B$-labeled edge from $(s_+,s_\infty)$ to $(t_+,t_\infty)$.
  \end{enumerate}
\end{lemma}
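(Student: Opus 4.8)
The plan is to ``unroll'' the \sicd-alternation of the witnessing pair once, reading off both the target node and the edge from a single structured decomposition. Write $(s_1,s_2)\in(S_\infty)^2$ for the \sicd-alternating pair witnessing that $(s_+,s_\infty)$ is a $B$-generator, so that $s_\infty=s_+s_1\neq s_+s_2$ and $\content{s_1}=B$. Since $(s_1,s_2)$ is \sicd-alternating, for every $n$ the chain $(s_1,s_2)^n$ lies in $\Cs_{2,2n}[\alpha_\infty]$, so Proposition~\ref{prop:sig2gen} applies and produces \sicd-chains $(r_1,\dots,r_{2n})$ and $(\sigma_1,\dots,\sigma_{2n})$ over $S_+$, a \sicd-chain $(t_2,\dots,t_{2n})$ over $S_\infty$ with $\content{\sigma_1}=\content{t_2}$, and the equalities $s_1=r_1(\sigma_1)^\infty$ and $r_j(\sigma_j)^\omega t_j=s_1$ for odd $j\geq 3$, $=s_2$ for even $j$.

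Next I would bring in the hypothesis that $\alpha_+$ has bounded \sicd-alternation, in the following guise: there is a constant $N$ such that every \sicd-chain over $S_+$ that is long enough is constant on a block of $N$ consecutive entries. (A \sicd-chain never constant on a block of some fixed length breaks into many maximal constant runs, hence by the pigeonhole principle some pattern $(a,b)$ with $a\neq b$ occurs as a pair of consecutive entries arbitrarily often; extracting those positions and using that $\Cs_2[\alpha_+]$ is closed under subwords, Fact~\ref{fct:subdup}, produces a \sicd-alternating pair with distinct entries, contradicting the hypothesis.) Applying this to $(r_j)_j$ and then, inside the resulting block, to $(\sigma_j)_j$, and taking $n$ large, I obtain an arbitrarily long block of consecutive indices (which we may assume avoids positions $1,2$) on which $r_j=r$ and $\sigma_j=\sigma$ are constant; there $(t_j)_j$ restricts to a \sicd-chain over $S_\infty$ of arbitrary length along which $r\sigma^\omega t_j$ alternates between $s_1$ and $s_2$. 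Since these blocks are arbitrarily long while $S_\infty$ is finite, a further pigeonhole over consecutive pairs inside them, of a fixed parity, and then over $n$, extracts a single pair $(\mu_1,\mu_2)\in(S_\infty)^2$ that is \sicd-alternating and satisfies $r\sigma^\omega\mu_1=s_1$ and $r\sigma^\omega\mu_2=s_2$; in particular $\mu_1\neq\mu_2$ because $s_1\neq s_2$.

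It remains to assemble this into the claimed edge. Using the first coordinate $s_1=r_1(\sigma_1)^\infty$ and the closure of $\Cs_2[\alpha_+]$ under componentwise product (Lemma~\ref{lem:EFmult}), I would take $(p_1,p_2)=(r_1\sigma_1,r_2\sigma_2)$ and $(q_1,q_2)=(\sigma_1,\sigma_2)$ in $\Cs_2[\alpha_+]$, so that $s_+p_1(q_1)^\infty=s_\infty$; then I would pick $q\in\alpha(A^+)$ routing from $s_+p_2(q_2)^\omega$ to the constant block (using that $(r_2,\dots)$ is a \sicd-chain reaching $r$ and absorbing a factor $r\sigma^\omega$), set $t_+=s_+p_2(q_2)^\omega q$ and $t_\infty=t_+\mu_1$, so that $(s_+,s_\infty)\xrightarrow{B}(t_+,t_\infty)$ and $(t_+,t_\infty)$ is a $B$-generator witnessed by $(\mu_1,\mu_2)$. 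I expect the main obstacle to lie exactly here, in two intertwined points. First, making the alphabet conditions $\content{p_1}=\content{q_1}=\content{q}=B$ and $\content{\mu_1}=B$ hold: this is where the minimality of $B$ is used, since any of these alphabets being a proper subset $B'\subsetneq B$ would let one read off a $B'$-generator from the very same data, contradicting minimality. Second, guaranteeing $t_+\mu_1\neq t_+\mu_2$, so that $(\mu_1,\mu_2)$ is a legitimate generator witness and, after swapping $\mu_1,\mu_2$ if necessary, that $t_\infty\neq s_\infty$: this is the ``non-collapse'' of the alternation along the edge, and it is precisely what propagates the unbounded \sicd-alternation of $\alpha_\infty$ one step forward. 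The remaining verifications — that the chosen $t_+,t_\infty$ satisfy the defining equalities of a $B$-labeled edge, and that $(\mu_1,\mu_2)$ meets the \sicd-alternating and alphabet requirements of a generator — are then routine manipulations in the \isemi using associativity and idempotence.
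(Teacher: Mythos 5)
Your overall strategy is the paper's: apply Proposition~\ref{prop:sig2gen} to $(s_1,s_2)^n$, use the pigeonhole principle together with closure of chains under subwords to extract alternating pairs over $S_+$ that bounded alternation of $\alpha_+$ forces to be constant, and read off the new node from the surviving alternating pair $(\mu_1,\mu_2)$ over $S_\infty$ together with the constant values $r,\sigma$. The variations you introduce (consecutive constant blocks rather than values repeated at scattered positions, and a final pigeonhole over $n$ rather than fixing a single sufficiently large length up front) are workable, and your treatment of the alphabet condition via minimality of $B$ and of the non-collapse $t_+\mu_1\neq t_+\mu_2$ is exactly what the paper does. (Minor quibble: your constant-block claim should be quantified ``for every $N$, every long enough chain has a constant block of length $N$''; as written, ``there is a constant $N$'' is vacuous.)

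The genuine gap is in the edge construction. You set $(p_1,p_2)=(r_1\sigma_1,r_2\sigma_2)$, $(q_1,q_2)=(\sigma_1,\sigma_2)$, and then look for $q\in\alpha(A^+)$ with $s_+r_2(\sigma_2)^\omega q=t_+=s_+r\sigma^\omega$, justified by ``the chain $(r_2,\dots)$ reaching $r$ and absorbing a factor $r\sigma^\omega$''. This step fails: a \sicd-chain records logical indistinguishability of preimage words, not multiplicative reachability in $S_+$, so nothing guarantees that $s_+r\sigma^\omega$ lies in $s_+r_2(\sigma_2)^\omega\,\alpha(A^+)$, let alone via an element whose alphabet is $B$. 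The repair is to drop the position-$2$ values altogether: by subword closure and closure of $\Cs_2[\alpha_+]$ under componentwise products, both $(r_1(\sigma_1)^\omega,\,r\sigma^\omega)$ and $(\sigma_1,\sigma)$ belong to $\Cs_2[\alpha_+]$, and taking $q=\sigma^\omega\in\alpha(A^+)$ gives $s_+\cdot r_1(\sigma_1)^\omega\cdot(\sigma_1)^\infty=s_\infty$ and $s_+\cdot r\sigma^\omega\cdot\sigma^\omega\cdot\sigma^\omega=s_+r\sigma^\omega=t_+$, with all three alphabets equal to $B$. This is precisely how the paper closes the argument; no routing element is needed.
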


Since $G[\alpha]$ is a finite graph it is immediate from
Lemma~\ref{lem:alt2} and our choice of $B$ that it must contain a
$B$-cycle in which there are two nodes $(s_+,s_\infty)$ and
$(t_+,t_\infty)$ such that $s_\infty \neq t_\infty$. We conclude that
$G[\alpha]$ is recursive. It now remains to prove
Lemma~\ref{lem:alt2}. In particular, this is where we use
Proposition~\ref{prop:sig2gen}. We devote the remainder of this
section to this proof.

Let $(s_+,s_\infty)$ be a $B$-generator of $G[\alpha]$ and let
$(s_1,s_2) \in (S_\infty)^2$ be the \sicd-alternating pair such that
$s_\infty = s_+s_1 \neq s_+s_2$. We have to construct $(t_+,t_\infty)$
satisfying the conditions of Lemma~\ref{lem:alt2}. We proceed as
follows. First we choose an integer $\ell \geq 1$ and use the fact
that $(s_1,s_2)$ is \sicd-alternating to conclude that $(s_1,s_2)^{\ell}
\in \Cs_{2,2\ell}[\alpha_\infty]$. We then use this result together
with Proposition~\ref{prop:sig2gen} to construct the desired node
$(t_+,t_\infty)$.

Let us begin with the choice of $\ell$. This choice is based on the
two following facts.

\begin{fact} \label{fct:choose1}
  There exists $n_+ \geq 1$ such that for any $t_1,t_2 \in
  S_+$ and any $n \geq n_+$, if $(t_1,t_2)^n \in
  \Cs_{2}[\alpha_+]$, then $(t_1,t_2)$ is \sicd-alternating.
\end{fact}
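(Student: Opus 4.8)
The plan is to leverage the closure of $\Cs_2[\alpha_+]$ under subwords (Fact~\ref{fct:subdup}). For a fixed pair $(t_1,t_2)\in (S_+)^2$, the set of lengths $n$ for which the alternating chain $(t_1,t_2)^n$ lies in $\Cs_2[\alpha_+]$ will turn out to be downward closed, hence either a bounded initial segment of the positive integers or all of them; the desired threshold $n_+$ then comes from the finiteness of $(S_+)^2$.

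In detail, I would first associate to each pair $(t_1,t_2)\in (S_+)^2$ the set
\[
  N(t_1,t_2)=\{\,n\geq 1\mid (t_1,t_2)^n\in \Cs_2[\alpha_+]\,\},
\]
and observe that, by the definition of $\sicd$-alternation, $(t_1,t_2)$ is $\sicd$-alternating precisely when $N(t_1,t_2)$ contains every positive integer. The key step is that $N(t_1,t_2)$ is downward closed: if $n\geq 2$ lies in $N(t_1,t_2)$, then the chain $(t_1,t_2)^{n-1}$ is obtained from $(t_1,t_2)^n$ by deleting its last two entries, so Fact~\ref{fct:subdup} gives $(t_1,t_2)^{n-1}\in\Cs_2[\alpha_+]$; iterating, every $m$ with $1\leq m\leq n$ lies in $N(t_1,t_2)$. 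Consequently, for each pair either $N(t_1,t_2)$ is finite (a bounded, possibly empty, initial segment of the positive integers) and then $(t_1,t_2)$ is not $\sicd$-alternating, or $N(t_1,t_2)$ is exactly the set of all positive integers and $(t_1,t_2)$ is $\sicd$-alternating.

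To finish, I would use that $(S_+)^2$ is finite and set $n_+$ to be one more than the maximum, over all pairs $(t_1,t_2)$ for which $N(t_1,t_2)$ is finite, of $\max N(t_1,t_2)$ (with the convention $\max\emptyset=0$, and $n_+=1$ if $N(t_1,t_2)$ is infinite for every pair). Then, given any $n\geq n_+$ and any $(t_1,t_2)$ with $(t_1,t_2)^n\in\Cs_2[\alpha_+]$, we have $n\in N(t_1,t_2)$ while $n\geq n_+$ strictly exceeds $\max N(t_1,t_2)$ whenever the latter is finite; hence $N(t_1,t_2)$ cannot be finite, so it equals the set of all positive integers, and $(t_1,t_2)$ is $\sicd$-alternating, as required.

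There is no serious obstacle here; the only point worth flagging is that one should resist using the level $\ell$ supplied by Fact~\ref{fct:theell} directly, since that $\ell$ a priori depends on the chain length $2n$ and would therefore not yield a uniform threshold. Routing the argument through the subword-closure property of $\Cs_2[\alpha_+]$ instead collapses ``bounded'' and ``finite'', which is exactly what makes the finite set $(S_+)^2$ enough to produce a single $n_+$ that works for all pairs.
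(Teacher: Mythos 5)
Your proof is correct and follows essentially the same route as the paper: both arguments use the closure of $\Cs_2[\alpha_+]$ under subwords (Fact~\ref{fct:subdup}) to show that the set of lengths $n$ with $(t_1,t_2)^n\in\Cs_2[\alpha_+]$ is downward closed, and then invoke the finiteness of $(S_+)^2$ to extract a uniform threshold $n_+$. Your closing remark about why one cannot simply use the $\ell$ from Fact~\ref{fct:theell} is a sensible caution, though the paper does not need to address it explicitly.
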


\begin{fact} \label{fct:choose2}
  There exists $n_\infty \geq 1$ such that for any $t_1,t_2 \in
  S_\infty$ and any $n \geq n_\infty$, if $(t_1,t_2)^n \in
  \Cs_2[\alpha_\infty]$, then $(t_1,t_2)$ is \sicd-alternating.
\end{fact}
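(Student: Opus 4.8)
The plan is to combine two elementary observations: finiteness of $S_\infty$, and the fact that $\Cs_2[\alpha_\infty]$ is closed under subwords (the first half of Fact~\ref{fct:subdup}, applied with $x=\infty$). For a pair $(t_1,t_2)\in(S_\infty)^2$, let $M(t_1,t_2)$ be the set of integers $m\geq 1$ such that the \chain $(t_1,t_2)^m$ (of length $2m$) belongs to $\Cs_2[\alpha_\infty]$. First I would note that $M(t_1,t_2)$ is downward closed: for $1\leq m'\leq m$, the \chain $(t_1,t_2)^{m'}$ is obtained from $(t_1,t_2)^m$ by deleting its last $2(m-m')$ entries, so iterating Fact~\ref{fct:subdup} shows that $m\in M(t_1,t_2)$ implies $m'\in M(t_1,t_2)$. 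By definition, $(t_1,t_2)$ is \sicd-alternating exactly when $M(t_1,t_2)=\{1,2,3,\dots\}$; in the remaining case $M(t_1,t_2)$ is a finite (possibly empty) initial segment of the positive integers, and I put $m_0(t_1,t_2):=\sup M(t_1,t_2)$, with the convention $m_0(t_1,t_2)=0$ when $M(t_1,t_2)=\emptyset$.

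Next, using that $(S_\infty)^2$ is finite, there are only finitely many pairs that are not \sicd-alternating, so
\[
  n_\infty\ :=\ 1+\max\{\, m_0(t_1,t_2)\ \mid\ (t_1,t_2)\in(S_\infty)^2 \text{ not \sicd-alternating}\,\}
\]
is a well-defined natural number (set $n_\infty=1$ if every pair of $(S_\infty)^2$ is \sicd-alternating). I claim this $n_\infty$ witnesses the fact. Indeed, suppose $(t_1,t_2)\in(S_\infty)^2$ and $n\geq n_\infty$ satisfy $(t_1,t_2)^n\in\Cs_2[\alpha_\infty]$, i.e.\ $n\in M(t_1,t_2)$; by downward closure, also $n_\infty\in M(t_1,t_2)$. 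If $(t_1,t_2)$ were not \sicd-alternating we would get $n_\infty\leq m_0(t_1,t_2)\leq n_\infty-1$, a contradiction. Hence $(t_1,t_2)$ is \sicd-alternating. The very same argument with $\alpha_+$ in place of $\alpha_\infty$ (note that $\Cs_2[\alpha_+]$ is also closed under subwords, by Fact~\ref{fct:subdup} with $x=+$) proves Fact~\ref{fct:choose1}.

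There is no real obstacle in this argument: it is purely a uniformisation over the finite set $(S_\infty)^2$. The only point that must be spotted is that subword-closure of $\Cs_2[\alpha_\infty]$ forces each set $M(t_1,t_2)$ to be an initial segment of the positive integers — which is exactly where Fact~\ref{fct:subdup} enters — after which the existence of the bound $n_\infty$ is immediate from finiteness of $S_\infty$.
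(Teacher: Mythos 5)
Your argument is correct and is essentially the paper's own proof: both rest on the subword-closure of $\Cs_2[\alpha_\infty]$ from Fact~\ref{fct:subdup} to show that the set of exponents $m$ with $(t_1,t_2)^m\in\Cs_2[\alpha_\infty]$ is an initial segment, and then take a maximum over the finitely many non-alternating pairs (your $n_\infty=1+\max m_0$ coincides with the paper's choice of the largest $k$ with $(t_1,t_2)^{k-1}\in\Cs_2[\alpha_\infty]$ but $(t_1,t_2)^{k}\notin\Cs_2[\alpha_\infty]$). Your closing remark that the same argument with $\alpha_+$ gives Fact~\ref{fct:choose1} also matches the paper, which presents the two facts as sharing an identical proof.
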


The two facts share identical proofs. We show the first one (it suffices to
replace $\alpha_+$ by $\alpha_\infty$ and $\Cs_2[\alpha_+]$ by
$\Cs_2[\alpha_\infty]$ to obtain the second one). If for all $t_1,t_2\in S_+$
and $n\geq1$,
we have $(t_1,t_2)^n\in\Cs_2[\alpha_+]$, then choose $n_+=1$.
Otherwise, since $\Cslev 2[\alpha_+]$ is closed under subwords
(Fact~\ref{fct:subdup}), if $(t_1,t_2)^k\notin \Cs_2[\alpha_+]$, then for all
$j\geqslant k$, we have $(t_1,t_2)^j\notin \Cs_2[\alpha_+]$ as well. Therefore,
one can define $n_+$ as the largest integer $k$ such that there exist
$t_1,t_2 \in S_+$ with $(t_1,t_2)^{k-1} \in \Cs_ 2[\alpha_+]$ but
$(t_1,t_2)^{k} \not\in \Cs_ 2[\alpha_+]$ (with the convention that
$(t_1,t_2)^{0} \in \Cs_ 2[\alpha_+]$).

\medskip
We now set $h = \max(n_+,n_\infty)$ and we choose  $\ell =|S_+|^4 \times
|S_\infty|^2 \times h$. Since $(s_1,s_2)$ is \sicd-alternating, we have in
particular $(s_1,s_2)^{\ell} \in \Cs_{2,2\ell}[\alpha_\infty]$. We now
use this fact together with Proposition~\ref{prop:sig2gen} to
construct $(t_+,t_\infty)$.

Since $(s_1,s_2)^{\ell} \in \Cs_{2,2\ell}[\alpha_\infty]$, we may use
Proposition~\ref{prop:sig2gen} to obtain $(p_1,\dots,p_{2\ell}) \in
\Cs_{2,2\ell}[\alpha_+]$, $(q_1,\dots,q_{2\ell}) \in
\Cs_{2,2\ell}[\alpha_+]$ and $(t_2,\dots,t_{2\ell}) \in
\Cs_{2,2\ell-1}[\alpha_\infty]$ such that:
\begin{itemize}
\item $\content{t_2} = \content{q_1}$
\item $p_1(q_1)^\infty = s_1$.
\item for all $i \geq 1$, $p_{2i}(q_{2i})^\omega t_{2i} = s_2$.
\item for all $i \geq 1$, $p_{2i+1}(q_{2i+1})^\omega t_{2i+1} = s_1$.
\end{itemize}

We set $C = \content{p_1}$ and $D = \content{q_1} =
\content{t_2}$. Observe that since $p_1(q_1)^\infty = s_1$, we have $C
\cup D = \content{s_1} = B$.

Using the pigeonhole principle and our choice of $\ell$, we
obtain a set of $h$ indices $I =\{i_1,\dots,i_h\}$ such that for all
$i,j \in I$,
\[
  \begin{array}{lll}
    p_{2i} & = & p_{2j}\\
    p_{2i+1}&  = & p_{2j+1}
  \end{array}\qquad
  \begin{array}{lll}
    q_{2i} & = & q_{2j}\\
    q_{2i+1} & = & q_{2j+1}
  \end{array}\qquad
  \begin{array}{lll}
    t_{2i} & = & t_{2j}\\
    t_{2i+1} & = & t_{2j+1}
  \end{array}
\]
We define, $p'_2 = p_{2i}$, $p'_{3} = p_{2i+1}$, $q'_2 = q_{2i}$,
$q'_3 = q_{2i+1}$, $t'_2 = t_{2i}$ and $t'_3 = t_{2i+1}$ (for $i \in
I$). Since \dchains are closed under subwords (see
Fact~\ref{fct:subdup}) and $I$ is of size $h$, we know that
\[
  \begin{array}{lll}
    (p_1,p'_2,p'_3,\dots,p'_2,p'_3) & \in & \Cs_{2,2h+1}[\alpha_+] \\
    (q_1,q'_2,q'_3,\dots,q'_2,q'_3) & \in & \Cs_{2,2h+1}[\alpha_+] \\
    (t'_{2},t'_{3},\dots,t'_2,t'_3)  & \in & \Cs_{2,2h}[\alpha_\infty]
  \end{array}
\]
In particular, this means that $(p'_2,p'_3)^h \in
\Cs_{2,2h}[\alpha_+]$, $(q'_2,q'_3)^h \in \Cs_{2,2h}[\alpha_+]$ and
$(t'_2,t'_3)^h \in \Cs_{2,2h}[\alpha_\infty]$. By choice of $h$, it
follows that $(p'_2,p'_3)$, $(q'_2,q'_3)$ and $(t'_2,t'_3)$ are
\sicd-alternating for $\alpha_+$ and $\alpha_\infty$. Furthermore,
since $\alpha_+$ has bounded \sicd-alternation, it follows that
$p'_2 = p'_3$ and $q'_2 = q'_3$. Let us summarize what we
have so far. We have $(p_1,p'_2),(q_1,q'_2) \in \Cs_2[\alpha_+]$ and
$(t'_2,t'_3) \in \Cs_2[\alpha_\infty]$ which is \sicd-alternating for
$\alpha_\infty$ such that
\begin{enumerate}[itemindent=3ex,leftmargin=4ex]
\item $s_1 = p_1(q_1)^\infty$.
\item $s_2 = p'_2(q'_2)^\omega t'_2$.
\item $s_1 = p'_2(q'_2)^\omega t'_3$.
\end{enumerate}

We may now define the node $(t_+,t_\infty)$. Set $t_+ = s_+
p'_2(q'_2)^\omega$ and $t_\infty = s_+p'_2(q'_2)^\omega t'_2$. We
prove  that this node satisfies the conditions of
Lemma~\ref{lem:alt2}. We have $t_\infty = s_+s_2$, which is
different from $s_+s_1 = s_\infty$ by definition of $(s_1,s_2)$.
Hence, we have $s_\infty \neq t_\infty$. Moreover, we have
$(t'_2,t'_3) \in \Cs_2[\alpha_\infty]$ which is \sicd-alternating and,
\[
  \begin{array}{lllllll}
    t_+t'_2 & = & s_+p'_2(q'_2)^\omega t'_2 & = & s_+s_2 & = & t_\infty \\
    t_+t'_3 & = & s_+p'_2(q'_2)^\omega t'_3 & = & s_+s_1 & = & s_\infty
  \end{array}
\]
and we already know that $s_\infty \neq t_\infty$. Therefore,
$(t_+,t_\infty)$ is an $\content{t'_2}$-generator. Furthermore,
is it simple to verify that as an element of the  \dchain,
$(t_2,\dots,t_\ell)$, $t'_2$ has the same alphabet as $t_2$,
\emph{i.e.},  $\content{t'_2} = D \subseteq B$. It follows that
$(t_+,t_\infty)$ is a $D$-generator and by minimality of $B$
that $D = B$.

Finally we prove that there is a $B$-labeled edge from
$(s_+,s_\infty)$ to $(t_+,t_\infty)$. Observe that
\begin{itemize}
\item $s_+ (p_1(q_1)^\omega)(q_1)^\infty = s_\infty$.
\item $s_+  (p'_2(q'_2)^\omega)(q'_2)^\omega (q'_2)^\omega = t_+$.
\end{itemize}
Therefore, since we already know that
$(p_1(q_1)^\omega,p'_2(q'_2)^\omega),(q_1,q'_2) \in \Cs_2[\alpha_+]$,
it suffices to prove that $\content{p_1(q_1)^\omega} = \content{q_1} =
\content{(q'_2)^\omega} = B$ to conclude that there is a $B$-labeled
edge from $(s_+,s_\infty)$ to $(t_+,t_\infty)$. By definition, we have
$\content{p_1(q_1)^\omega} = C \cup D = B$. Similarly, we have
$\content{q_1} = D$ and we have already established that $D = B$.
Finally, since $(q_1,q'_2) \in \Cs_2[\alpha_+]$, we have
$\content{q'_2} = \content{q_1} = B$ and $\content{(q'_2)^\omega} = B$,
which terminates the proof.\qed

\section{\texorpdfstring{A Separation Algorithm for \sict}{A
    Separation Algorithm for Sigma3}}
\label{sec:sig3}
We now present our algorithm for the
\ilang separation problem associated to \sict. As for
\sicd, this algorithm is based on Theorem~\ref{thm:chainsep}: we
present a procedure for computing $\Cs_{3,2}[\alpha_\infty]$ from an
input morphism $\alpha: (A^+,A^\infty) \to (S_+,S_\infty)$.

However, in this case, this computation requires a new
ingredient. This new ingredient is a generalization of \ichains
that we call \emph{mixed \chains}.

\medskip
\noindent
{\bf Mixed \Chains.} Set $x \in \{+,\infty\}$ and $\beta: A^x  \to S$
as a map into some finite set~$S$. We define a set $\Ms[\beta]
\subseteq S^3$. Let $\bar{s} = (s_1,s_2,s_3) \in S^3$ be  a \chain
over $S$. We have $\bar{s} \in \Ms[\beta]$ if and only if for all $k
\in \nat$, there exist $w_1,w_2,w_3 \in A^x$ such that,
\[
  \beta(w_1) = s_1, \ \beta(w_2) = s_2, \ \beta(w_3) = s_3
  \text{\quad and\quad}
  w_1 \ksieq{2} w_2 \ksieq{3} w_3
\]

Mixed \chains were first introduced in~\cite{pseps3} under a different
name: ``$\Sigma_{2,3}$-trees'' (here, 2 and 3 denote \sicd and \sict). In fact ``$\Sigma_{2,3}$-trees'' are a
more general notion: what we call mixed \chains are a particular case
of ``$\Sigma_{2,3}$-trees''.
Essentially $\Sigma_{2,3}$-trees are
trees of depth $3$ whose nodes are labeled by elements of a finite set~$S$ and mixed \chains are the special case when there is only a single
branch in the~tree.

An important remark is that we
will not present any algorithm for computing mixed \chains. On the
other hand, our algorithm for computing $\Cs_{3,2}[\alpha_\infty]$
from an \isemi morphism $\alpha$ is parametrized by the set of mixed
\chains $\Ms[\alpha_+]$. That $\Ms[\alpha_+]$ may be computed from $\alpha_+$
is a very difficult result of~\cite{pseps3}, stated~below.

\begin{theorem}[\cite{pseps3}] \label{thm:fisict2}
  Given as input a morphism $\alpha: A^+ \to S$ into a finite semigroup
  $S$, one can compute the set $\Ms[\alpha]$ of mixed \chains for
  $\alpha$.
\end{theorem}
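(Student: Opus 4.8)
\medskip\noindent\textbf{Plan.}
This is the computational engine behind the \sict-separation algorithm of~\cite{pseps3}, and I would reconstruct its proof as follows. The set $\Ms[\alpha]$ is the restriction to a single branch of the more general ``$\Sigma_{2,3}$-trees'' of~\cite{pseps3}, and it seems unavoidable to carry the full tree version through the argument (see below); I phrase the sketch in terms of mixed \chains and indicate where trees are forced. The strategy is to present $\Ms[\alpha]$ as the least fixpoint of an operator on subsets of $S^3$ built from an explicit, finite list of closure rules, and then to prove that this fixpoint coincides with $\Ms[\alpha]$. Since $S^3$ is finite, the fixpoint is reached after finitely many iterations, so the entire content lies in (a) choosing the rules so that everything produced really is a mixed \chain (soundness) and (b) showing that no mixed \chain is missed (completeness). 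A free starting observation is that $\Ms[\alpha]$ is the intersection, over all $k\geq 1$, of the sets obtained by replacing the full preorders in its definition with the rank-$k$ preorders $\sieq{k}{2}$ and $\sieq{k}{3}$; by Fact~\ref{fct:refine} these sets form a descending chain, hence stabilize at some rank $\ell$, but $\ell$ is not \emph{a priori} bounded --- producing an effective bound is exactly what the fixpoint achieves.

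\medskip\noindent\textbf{The rules.}
I would take as generators: (i) \emph{base} --- for every $s\in\alpha(A^+)$ the triple $(s,s,s)$, witnessed by $w_1=w_2=w_3$; (ii) \emph{subword and duplication closure} --- the exact analogue of Fact~\ref{fct:subdup}, immediate from transitivity and reflexivity of the preorders; (iii) \emph{product} --- if $(s_1,s_2,s_3)$ and $(s'_1,s'_2,s'_3)$ are mixed \chains, so is $(s_1s'_1,s_2s'_2,s_3s'_3)$, which follows from the Composition Lemma~\ref{lem:EFmult} applied to both the $\sieq{k}{2}$ and the $\sieq{k}{3}$ step; and (iv) an \emph{$\omega$-saturation} rule that injects the already-known \sicd-structure of $\alpha$ into the third coordinate: roughly, inside a block whose $\alpha$-value is an idempotent $e$, one may lift a coordinate along any \sicd-chain $(p,p')\in\Cs_2[\alpha]$ with $\content{p}=\content{e}$ at the $\sieq{k}{3}$ level for free. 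Rule~(iv) is the crux: it is the combinatorial shadow of the \efgame fact that, inside such a block, once Spoiler has spent the first $\forall$-round of the \sict-game, the residual game is a \sicd-game, so Duplicator's available answers there are governed precisely by the \sicd-chains of $\alpha$, which are computable by Theorem~\ref{thm:fisicd} (and the more general machinery of~\cite{pzqalt} underlying it). This same phenomenon explains why the single-branch notion is not self-contained: the $\forall$-round lets Spoiler select \emph{several} positions at once, so the \efgame analysis naturally branches, and one must thread a tree of depth $3$ through the induction even when only $\Ms[\alpha]$ is wanted at the end.

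\medskip\noindent\textbf{Soundness, completeness, and the main obstacle.}
Soundness --- every triple the rules produce lies in $\Ms[\alpha]$ --- is the routine half: for each rule one builds explicit witnesses $w_1\sieq{k}{2}w_2\sieq{k}{3}w_3$ using Lemma~\ref{lem:EFmult}, the Decomposition Lemma~\ref{lem:EFdecomp}, and standard \efgame arguments for idempotent powers, in the same spirit as the correctness part of Proposition~\ref{prop:sig2main} but with one extra quantifier layer. Completeness is the hard direction. Given $(s_1,s_2,s_3)\in\Ms[\alpha]$, fix a large $k$ and witnesses $w_1\sieq{k}{2}w_2\sieq{k}{3}w_3$; apply a Simon/Ramsey factorization (the finite-word analogue of Lemma~\ref{lem:ramsey}) to $w_3$ to expose long runs of idempotent $\alpha$-value, transport it to $w_2$ and then $w_1$ via iterated use of Lemma~\ref{lem:EFdecomp}, and prove by induction on the number of factors that the resulting decomposition arises from rules (i)--(iv), the idempotent blocks being absorbed by rule~(iv) exactly because, after its first $\forall$-round, the \sict-game on such a block collapses to a \sicd-game whose outcomes are the \sicd-chains of $\alpha$. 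The main obstacle --- and the reason~\cite{pseps3} is a substantial paper --- is quantitative: each application of the Decomposition Lemma costs one unit of quantifier rank ($k\to k-1$), so one must show that a \emph{bounded} number of factorization rounds suffices, equivalently that the fixpoint stabilizes at a rank $\ell$ independent of $k$. I would obtain this from a potential/well-quasi-order argument: as soon as a $\Cs_2[\alpha]$-certified block recurs, rule~(iv) lets one swallow all further iterations, so the process terminates after $O(|S|)$ nested levels. I expect essentially all the difficulty to reside in this uniform bound and in the bookkeeping needed to keep the tree structure coherent across the induction; the rest follows the \sicd template of the earlier sections with one added layer of alternation.
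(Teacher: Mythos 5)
You cannot be graded against the paper's own proof here, because the paper does not contain one: Theorem~\ref{thm:fisict2} is imported verbatim from~\cite{pseps3} and used strictly as a black box. The authors say explicitly that they ``will not present any algorithm for computing mixed \chains'' and that the computability of $\Ms[\alpha_+]$ is ``a very difficult result of~\cite{pseps3}''; the same caveat is made for Theorems~\ref{thm:fisicd} and~\ref{thm:fibscd}. So the only honest comparison is between your sketch and what the paper tells us about the external result, namely that the actual algorithm operates on the more general ``$\Sigma_{2,3}$-trees'' (mixed \chains being the single-branch case) and that its proof is independent of the techniques used in this paper.

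On that basis, your sketch has the right silhouette --- a saturation/least-fixpoint computation over $S^3$ (really over trees), soundness via Lemma~\ref{lem:EFmult}-style composition, completeness via Ramsey factorization plus iterated use of Lemma~\ref{lem:EFdecomp}, and a rule injecting the computable \sicd-chain structure into the \sict level --- and you correctly identify that the single-branch notion is not self-contained. But as a proof it has a genuine gap exactly where you say the difficulty lies: rule~(iv) is only ``roughly'' stated, and the completeness direction, i.e.\ the uniform bound on the quantifier rank at which the descending chain of approximants stabilizes, is asserted via ``a potential/well-quasi-order argument'' terminating ``after $O(|S|)$ nested levels'' without any argument that the saturation rules actually absorb every witness. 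That uniform bound, together with the bookkeeping for the tree induction, is the substance of~\cite{pseps3}; a reader cannot reconstruct it from your outline, and the paper's framing warns that it does not follow from the \sicd template of the earlier sections. So this should be treated as a plausible plan consistent with the cited result, not as a proof of it.
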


We may now present our separation algorithm for \sict over \iwords.
Set $\alpha: (A^+,A^\infty) \to (S_+,S_\infty)$ as an alphabet
compatible morphism into a finite \isemi $(S_+,S_\infty)$. We define
$\Sat_{\sict}(\alpha) \subseteq (S_\infty)^2$ as the set of all
pairs
\[
  \big(r_2(s_2(t_2)^\omega)^\infty,\ r_3(s_3(t_3)^\omega)^\omega s_1(t_1)^\infty\big)
\]
with $(r_2,r_3)  \in \Cs_{3,2}[\alpha_+]$, $(s_1,s_2,s_3) \in
\Ms[\alpha_+]$, $(t_1,t_2,t_3) \in \Ms[\alpha_+]$ and $\content{s_1} =
\content{t_1}$. Since we know from
Theorem~\ref{thm:fisict2} that one may compute $\Ms[\alpha_+]$ from
$\alpha$, it is immediate from the definition that one may compute
$\Sat_{\sict}(\alpha)$ from $\alpha$.

\begin{proposition} \label{prop:sig3main}
  Let $\alpha: (A^+,A^\infty) \to (S_+,S_\infty)$ be an alphabet
  compatible morphism into a finite \isemi $(S_+,S_\infty)$. Then,
  $\Cs_{3,2}[\alpha_\infty] = \Sat_{\sict}(\alpha)$.
\end{proposition}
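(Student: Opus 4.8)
The plan is to prove the two inclusions separately, mirroring the proof of Proposition~\ref{prop:sig2main} but ``shifted up one level'' in the hierarchy: the sets $\Cs_{2,2}[\alpha_+]$ that parametrized $\Sat_{\sicd}(\alpha)$ are now replaced by the sets of mixed \chains $\Ms[\alpha_+]$, and the role played there by Corollary~\ref{cor:EF} (the case $i=1$ of Lemma~\ref{lem:EF}, whose hypothesis ``$v\ksieq{1}u^\infty$'' came for free from equality of alphabets) is now played by the genuine case $i=2$ of Lemma~\ref{lem:EF}. Throughout one uses that, for $k\geq1$, both $\sieq{k}{2}$ and $\sieq{k}{3}$ preserve alphabets exactly (presence and absence of a letter are both testable in \sicd, hence in \sict, at rank~$1$); so for any mixed \chain $(x_1,x_2,x_3)\in\Ms[\alpha_+]$ all three of $x_1,x_2,x_3$ share the alphabet $\content{x_1}$, and combined with the hypothesis $\content{s_1}=\content{t_1}$ in the definition of $\Sat_{\sict}(\alpha)$, all semigroup elements in a given pair of $\Sat_{\sict}(\alpha)$ have one common alphabet $B$.

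For the correctness inclusion $\Cs_{3,2}[\alpha_\infty]\supseteq\Sat_{\sict}(\alpha)$, fix a pair $(q,q')$ of $\Sat_{\sict}(\alpha)$ witnessed by $(r_2,r_3)\in\Cs_{3,2}[\alpha_+]$ and $(s_1,s_2,s_3),(t_1,t_2,t_3)\in\Ms[\alpha_+]$, and fix $k\geq1$. At a sufficiently large rank $k'$ (a function of $k$ dictated by Lemma~\ref{lem:EF}), extract word witnesses $a_2\sieq{k'}{3}a_3$ for $(r_2,r_3)$ and $b_1\sieq{k'}{2}b_2\sieq{k'}{3}b_3$, $c_1\sieq{k'}{2}c_2\sieq{k'}{3}c_3$ for the two mixed \chains, all of alphabet $B$, with $\alpha(a_j)=r_j$, $\alpha(b_j)=s_j$, $\alpha(c_j)=t_j$. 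Set $N=2^k\,\omega(S_+)$, let $\pi=b_2c_2^{\,N}$ and $\pi'=b_3c_3^{\,N}$ (so that $\alpha(\pi)=s_2(t_2)^\omega$ and $\alpha(\pi')=s_3(t_3)^\omega$), choose a suitable alphabet-$B$ \iword $\tau$ with $\alpha(\tau)=s_1(t_1)^\infty$, and put $w=a_2\pi^\infty$ and $w'=a_3(\pi')^N\tau$; a direct computation gives $\alpha(w)=q$ and $\alpha(w')=q'$. It then remains to check $w\sieq{k}{3}w'$, which we do by chaining: first establish $\tau\sieq{k}{2}\pi^\infty$ (this is the only genuinely new point — the \sicd-level refinement of the trivial \sicu alphabet argument used in Proposition~\ref{prop:sig2main} — and it forces $\tau$ to be chosen so that its set of recurrent factors dominates that of $\pi^\infty$, which is precisely where $\content{s_1}=\content{t_1}$ and the freedom in $\tau$ are spent); then apply Lemma~\ref{lem:EF} with $i=2$ to get $\pi^\infty\sieq{k}{3}\pi^N\tau$; then use $b_2\sieq{k'}{3}b_3$, $c_2\sieq{k'}{3}c_3$ and Lemma~\ref{lem:EFmult} to get $\pi^N\tau\sieq{k}{3}(\pi')^N\tau$; then prepend $a_2\sieq{k'}{3}a_3$ via Lemma~\ref{lem:EFmult}, and conclude by transitivity.

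For the completeness inclusion $\Cs_{3,2}[\alpha_\infty]\subseteq\Sat_{\sict}(\alpha)$, it suffices by Fact~\ref{fct:theell} to fix one number $\ell$ — built from $|S_+|$, $|S_\infty|$, the ranks at which $\Cs_{3,\cdot}[\alpha_+]$ and $\Ms[\alpha_+]$ stabilize (Fact~\ref{fct:theell} and its analogue for $\Ms$), and the number of factors peeled off below — and prove $\Cs^{\ell}_{3,2}[\alpha_\infty]\subseteq\Sat_{\sict}(\alpha)$. Take $(q,q')\in\Cs^{\ell}_{3,2}[\alpha_\infty]$ with \iwords $w\sieq{\ell}{3}w'$, $\alpha(w)=q$, $\alpha(w')=q'$. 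Factor $w$ by a two-level argument: apply Lemma~\ref{lem:ramsey} with $\alpha_+$ to write $w=u_0u_1u_2\cdots$ with $\alpha(u_j)$ a fixed idempotent and all $u_j$ ($j\geq1$) of a common alphabet $B$, then regroup the $u_j$ into super-blocks (a Simon-type grouping) so as to expose a ``prefix followed by an infinite repetition of a period of type $s\cdot e^\omega$'' structure. Peel finitely many initial factors off $w'$ using Lemma~\ref{lem:EFdecomp}, matching them against this factorization, so that $w'$ is cut into a prefix, a body split into the same number of super-blocks, and a tail \iword, each piece being $\sieq{\ell''}{3}$-related to the corresponding piece of $w$ with $\ell''=\ell-(\text{number of peels})$. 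Then pigeonhole on $\alpha_+$-images to stabilize the prefixes (producing $r_2,r_3$), the super-block images (producing $(s_1,s_2,s_3)$ and $(t_1,t_2,t_3)$), and the tail: the $\ksieq{3}$-relations needed for $(r_2,r_3)\in\Cs_{3,2}[\alpha_+]$ and for the two ``body'' positions of the mixed \chains come directly from the decomposition, while the $\ksieq{2}$-relation between the body of $w$ and the tail of $w'$ — the level-drop to position~$1$ of the mixed \chains — is obtained by an \efgame argument over \iwords that is essentially the converse of Lemma~\ref{lem:EF}. Finally $\content{s_1}=\content{t_1}$ holds because both equal $B$ (alphabet compatibility), and the identities $q=r_2(s_2(t_2)^\omega)^\infty$, $q'=r_3(s_3(t_3)^\omega)^\omega s_1(t_1)^\infty$ follow by reading off $\alpha$-images; hence $(q,q')\in\Sat_{\sict}(\alpha)$.

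The main obstacle is the completeness direction: arranging the hierarchical (Ramsey plus Simon-type) factorization of the \iword $w$ correctly and, above all, showing that the single \sict-preorder relation $w\sieq{\ell}{3}w'$ decomposes precisely into the ``$\ksieq{2}$-then-$\ksieq{3}$'' pattern of a mixed \chain, with the level drop landing exactly at the body/tail boundary and not elsewhere. On the correctness side, the one non-routine step is the \sicd-level tail inequality $\tau\sieq{k}{2}\pi^\infty$, which is the honest, one-block-higher analogue of the trivial alphabet argument that sufficed in Proposition~\ref{prop:sig2main}; the remaining bookkeeping (choosing $k'$ as a function of $k$, and the $2^k\,\omega(S_+)$ exponents) is routine \efgame calculus, and the length-$n$ generalization needed elsewhere would follow as in Proposition~\ref{prop:sig2gen}.
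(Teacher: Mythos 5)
Your correctness direction is essentially the paper's: same witnesses, the same words $w_2=x_2(y_2(z_2)^{2^k\omega})^\infty$ and $w_3=x_3(y_3(z_3)^{2^k\omega})^{2^k\omega}y_1(z_1)^\infty$, and the same chain of inequalities through Lemma~\ref{lem:EF} with $i=2$. The one step you leave vague, $\tau\ksieq{2}\pi^\infty$, is precisely where the first components of the two mixed \chains are consumed: take $\tau=y_1(z_1)^\infty$ (your $b_1c_1^\infty$) and derive $y_1(z_1)^\infty\ksieq{2}(y_2(z_2)^{2^k\omega})^\infty$ from Corollary~\ref{cor:EF} (using $\content{y_1}=\content{z_1}$) together with $y_1\ksieq{2}y_2$, $z_1\ksieq{2}z_2$ and Lemma~\ref{lem:EFmult}; no appeal to ``recurrent factors'' is needed. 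Modulo writing that out, this half is fine.

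The completeness direction has a genuine gap. You run Lemma~\ref{lem:ramsey} with $\alpha_+$ alone and then claim the $\ksieq{2}$-relation between the periodic body of $w$ and the tail of $w'$ via ``an \efgame argument that is essentially the converse of Lemma~\ref{lem:EF}''. No such converse exists, and with an $\alpha_+$-only factorization the relation you need is not derivable: knowing only that every block $u_h$ has the same idempotent image tells you nothing about the \emph{logical} relation between $w_1=u_{i+1}\cdots u_j$ and the other blocks, so you cannot compare the genuinely periodic word $(w_1)^\infty$ (whose image is what must appear in $q=r_2(s_2(t_2)^\omega)^\infty$) with the actual tail $u_{i+1}u_{i+2}\cdots$ of $w$, let alone with the tail $w'_\infty$ of $w'$. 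The paper's fix is to apply Ramsey to the refined morphism $\gamma(u)=(\alpha(u),[u]_{\bceq{\ell'}{3}})$, so that all blocks and all products of consecutive blocks are $\bceq{\ell'}{3}$-equivalent to $w_1$; then $(w_1)^\infty\sieq{\ell'}{3}u_{i+1}u_{i+2}\cdots\sieq{\ell'}{3}w'_\infty$ by Lemma~\ref{lem:EFmult}, and the level drop is simply item $(3)$ of Fact~\ref{fct:refine} (a \sict-preorder relation at rank $\ell'$ yields a \bscd-equivalence at the same rank), giving $w'_\infty\sieq{\ell'}{2}(w_1)^\infty$. Even granting that, a second stage of decomposition is still required --- Ramsey on $w'_\infty$, then Lemma~\ref{lem:EFdecomp} applied $p^2$ times to $w'_\infty\sieq{\ell'}{2}(w_1)^\infty$ and again to $(w_1)^n\sieq{\ell'}{3}(w'_1)^n$, with a double pigeonhole --- because the first components $s_1,t_1$ of the mixed \chains come from the tail of $w'$ while the second and third come from $(w_1)^\infty$ and $(w'_1)^n$ respectively; your single pigeonhole on ``super-block images'' does not produce this three-way alignment.
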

As for \sicd, it is immediate that Proposition~\ref{prop:sig3main}
yields an algorithm for \sict-separation over \iwords. Indeed, it
provides an algorithm computing
$\Cs_{3,2}[\alpha_\infty]$ from any alphabet compatible morphism
$\alpha$, which suffices to decide \sict-separation.

\begin{corollary} \label{cor:sig3sep}
  The \ilang separation problem is decidable for \sict.
\end{corollary}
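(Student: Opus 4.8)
Corollary~\ref{cor:sig3sep} is immediate from Proposition~\ref{prop:sig3main}: given two regular \ilangs one computes a single alphabet compatible \isemi morphism $\alpha$ recognizing both, the set $\Sat_{\sict}(\alpha)=\Cs_{3,2}[\alpha_\infty]$ is computable (because $\Ms[\alpha_+]$ is computable by Theorem~\ref{thm:fisict2} and $\Sat_{\sict}(\alpha)$ is obtained from it by a finite search), and by Theorem~\ref{thm:chainsep} this set decides \sict-separation. So the real task is Proposition~\ref{prop:sig3main}, which I would prove by the two-step scheme of Proposition~\ref{prop:sig2main}: first the direct \emph{correctness} inclusion $\Cs_{3,2}[\alpha_\infty]\supseteq\Sat_{\sict}(\alpha)$, then the harder \emph{completeness} inclusion $\Cs_{3,2}[\alpha_\infty]\subseteq\Sat_{\sict}(\alpha)$. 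The new feature relative to the case of \sicd is that the \iwords involved carry two nested infinite iterations, an outer $(\cdot)^\infty$ and an inner $(\cdot)^\omega$, and the mixed \chains of $\Ms[\alpha_+]$ are precisely the device recording the $\ksieq{2}$-then-$\ksieq{3}$ information needed to handle both at once. The guiding principle is that wherever the proof of Proposition~\ref{prop:sig2main} used Corollary~\ref{cor:EF} (which is the case $i=1$ of Lemma~\ref{lem:EF}) to produce an iteration at level~$2$ from a level-$1$ fact, one now uses the case $i=2$ of Lemma~\ref{lem:EF} to produce the outer iteration at level~$3$ from a level-$2$ fact.

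\medskip\noindent\textbf{Correctness.} Fix $(r_2,r_3)\in\Cs_{3,2}[\alpha_+]$, $(s_1,s_2,s_3),(t_1,t_2,t_3)\in\Ms[\alpha_+]$ with $\content{s_1}=\content{t_1}$, and a rank $k$. Picking witnesses at a large enough rank gives words $\rho_2\ksieq{3}\rho_3$, $\beta_1\ksieq{2}\beta_2\ksieq{3}\beta_3$ and $\gamma_1\ksieq{2}\gamma_2\ksieq{3}\gamma_3$ realizing the respective images; alphabet compatibility together with $\content{s_1}=\content{t_1}$ forces all the $\beta$'s and $\gamma$'s to have a common alphabet $B$. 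With $N,M$ large multiples of $\omega(S_+)$, put
\[
  w=\rho_2(\beta_2\gamma_2^{N})^\infty \qquad\text{and}\qquad w'=\rho_3(\beta_3\gamma_3^{N})^{M}\beta_1\gamma_1^\infty ,
\]
so that $\alpha(w)=r_2(s_2(t_2)^\omega)^\infty$ and $\alpha(w')=r_3(s_3(t_3)^\omega)^\omega s_1(t_1)^\infty$ (for the second equality, $(s_3(t_3)^\omega)^\omega$ is idempotent, so $M$ may be taken as large as needed). It remains to prove $w\ksieq{3}w'$. Using $\beta_1\ksieq{2}\beta_2$, $\gamma_1\ksieq{2}\gamma_2$, the common alphabet, and the composition lemma (Lemma~\ref{lem:EFmult}), one first establishes the level-$2$ facts $\beta_1\gamma_1^\infty\ksieq{2}(\beta_2\gamma_2^{N})^\infty$ and $\beta_1\gamma_1^\infty\ksieq{2}(\beta_3\gamma_3^{N})^\infty$; these are the only genuinely new ingredients, and I expect each to be a short \efgame argument in the spirit of Corollary~\ref{cor:EF}. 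Plugging the first into Lemma~\ref{lem:EF} with $i=2$ gives $(\beta_2\gamma_2^{N})^\infty\ksieq{3}(\beta_2\gamma_2^{N})^{2^k}\beta_1\gamma_1^\infty$; combining this with $\rho_2\ksieq{3}\rho_3$, $\beta_2\ksieq{3}\beta_3$, $\gamma_2\ksieq{3}\gamma_3$, Lemma~\ref{lem:EFmult}, a second application of Lemma~\ref{lem:EF} to absorb the remaining $(\beta_3\gamma_3^{N})$-blocks of $w'$, and transitivity, one obtains $w\ksieq{3}w'$. (The ranks and the number of $(\beta_3\gamma_3^{N})$-blocks inserted before the tail must be chosen with some care, but this part is routine.)

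\medskip\noindent\textbf{Completeness.} This is the hard direction. As in Proposition~\ref{prop:sig2main}, I would fix a large $\ell$ (built from the stabilization bounds of Fact~\ref{fct:theell}, for $\Cs_{3,2}[\alpha_+]$ and for $\Ms[\alpha_+]$, plus a constant depending on $|S_+|$ and $|S_\infty|$) and prove $\Cs_{3,2}^{\ell}[\alpha_\infty]\subseteq\Sat_{\sict}(\alpha)$, which yields the inclusion by Fact~\ref{fct:theell}. Fix $(q,q')\in\Cs_{3,2}^{\ell}[\alpha_\infty]$ and \iwords $w\sieq{\ell}{3}w'$ with $\alpha(w)=q$, $\alpha(w')=q'$. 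Apply Ramsey (Lemma~\ref{lem:ramsey}) to $w$ via $\alpha_+$: $w=u_0u_1u_2\cdots$ with $\alpha_+(u_j)=e$ idempotent and $\content{u_j}=B$ for $j\geq1$; this produces the outer iteration, $q=r_2\,e^\infty$ with $r_2$ an initial $\alpha_+$-product and $s_2=t_2=e$. Next apply the decomposition lemma (Lemma~\ref{lem:EFdecomp}) to $w\sieq{\ell}{3}w'$ a bounded number of times, peeling $w'=u_0'\cdots u_{D-1}'\,\tilde v$ with $u_j\sieq{\ell'}{3}u_j'$ for $j<D$ and $u_Du_{D+1}\cdots\sieq{\ell'}{3}\tilde v$, all ranks $\ell'$ still above the stabilization bounds. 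Two further moves then yield the mixed \chains. First — the key point — the relation $u_Du_{D+1}\cdots\sieq{\ell'}{3}\tilde v$ cannot be used directly, because a mixed \chain $b_1\ksieq{2}b_2\ksieq{3}b_3$ needs a level-$2$ comparison running \emph{from} a factor of the tail of $w'$ \emph{towards} an $e$-factor of $w$, whereas the decomposition lemma only ever produces comparisons running from $w$ to $w'$; the remedy is Fact~\ref{fct:refine}, which turns $u_Du_{D+1}\cdots\sieq{\ell'}{3}\tilde v$ into the reverse level-$2$ relation $\tilde v\sieq{\ell'}{2}u_Du_{D+1}\cdots$. Peeling \emph{this} with Lemma~\ref{lem:EFdecomp} at level~$2$ produces finite factors of $\tilde v$ that are $\ksieq{2}$-below words of image $e$ — these are the first components $b_1$, whose $\alpha_+$-images become the $s_1,t_1$ sitting in the tail $s_1(t_1)^\infty$ of $q'$. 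Second, applying Ramsey to $\tilde v$ (with idempotent $f$) and pigeonholing over the finite semigroup — iterating stability equations to create the $\omega$-powers, exactly as in Proposition~\ref{prop:sig2main} — produces $r_3$, the idempotent $f$ taking the role of $s_1=t_1$, the elements $s_3,t_3$ coming from the middle block of $w'$, and the identity $q'=r_3(s_3(t_3)^\omega)^\omega s_1(t_1)^\infty$; the links $u_j\sieq{\ell'}{3}u_j'$ provide the third components $b_3$. Since all the factor-relations hold at ranks above the stabilization bounds, the triples obtained lie in $\Ms[\alpha_+]$ and $(r_2,r_3)\in\Cs_{3,2}[\alpha_+]$, and $\content{s_1}=\content{t_1}$ is automatic because $s_1=t_1=f$.

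\medskip\noindent The main obstacle is exactly this bookkeeping in the completeness argument: fitting the \emph{directions} of the $\ksieq{2}$- and $\ksieq{3}$-comparisons to the shape $b_1\ksieq{2}b_2\ksieq{3}b_3$ of a mixed \chain (which forces the Fact~\ref{fct:refine} flip and the second Ramsey decomposition), and then assigning semigroup elements to roles so that the nested $\omega$-powers of $\Sat_{\sict}(\alpha)$ come out correctly. This is presumably also why $\Sat_{\sict}(\alpha)$ is phrased in terms of the (a priori larger) set $\Ms[\alpha_+]$ rather than $\Cs_{3,3}[\alpha_+]$: completeness naturally delivers triples with only a weak ($\ksieq{2}$) first link, which need merely lie in $\Ms[\alpha_+]$, and correctness must then verify that even these weaker triples still yield genuine \tchains of length~$2$ for $\alpha_\infty$.
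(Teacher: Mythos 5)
Your proof of the corollary itself is correct and is exactly the paper's: Corollary~\ref{cor:sig3sep} is derived in one step from Proposition~\ref{prop:sig3main} together with Theorem~\ref{thm:chainsep} and the computability of $\Ms[\alpha_+]$ (Theorem~\ref{thm:fisict2}), just as in your first paragraph.

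The extended sketch of Proposition~\ref{prop:sig3main} is not needed to establish the corollary, but since you offered it: it does anticipate the paper's actual strategy, including the key move of flipping a $\sieq{\ell'}{3}$ comparison into the reverse $\sieq{\ell'}{2}$ comparison via item~(3) of Fact~\ref{fct:refine}, and the use of Lemma~\ref{lem:EF} with $i=2$ for correctness. Two details differ from what the paper needs in the completeness step. First, applying Lemma~\ref{lem:ramsey} to $w$ with $\alpha_+$ alone is not enough: the paper explicitly enriches the morphism to $\gamma(u)=(\alpha(u),[u]_{\bceq{\ell'}{3}})$ so that all factors $u_h$ ($h\geq 1$) are $\bceq{\ell'}{3}$-equivalent, which is what makes $(w_1)^\infty \sieq{\ell'}{3} u_{i+1}u_{i+2}\cdots$ (and hence the Fact~\ref{fct:refine} flip towards the tail of $w'$) available. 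Second, the tail elements are $s_1=\alpha(u_3y_1)$ and $t_1=\alpha(z_1)$ coming from two different decompositions, not a common idempotent $f$, so $\content{s_1}=\content{t_1}$ is not automatic and requires the alphabet argument the paper carries out.
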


It remains to prove Proposition~\ref{prop:sig3main}. We proceed as
for \sicd. We first prove the easiest inclusion $\Cs_{3,2}[\alpha_\infty] \supseteq \Sat_{\sict}(\alpha)$.

\medskip
\noindent
{\bf Proof of the inclusion $\boldsymbol{\Cs_{3,2}[\alpha_\infty] \supseteq
    \Sat_{\sict}(\alpha)}$.} Let $(r_2,r_3)  \in
\Cs_{3,2}[\alpha_+]$, $(s_1,s_2,s_3) \in \Ms[\alpha_+]$ and
$(t_1,t_2,t_3) \in \Ms[\alpha_+]$ be \chains such that $\content{s_1} =
\content{t_1}$. We have to prove that
the pair $(r_2(s_2(t_2)^\omega)^\infty,r_3(s_3(t_3)^\omega)^\omega
s_1(t_1)^\infty) $ belongs to $ \Cs_{3,2}[\alpha_\infty]$. Set $k \geq 1$, we
need to find two \iwords $w_2 \ksieq{3} w_3$ such that $\alpha(w_2) =
r_2(s_2(t_2)^\omega)^\infty$ and $\alpha(w_3) = r_3(s_3(t_3)^\omega)^\omega
s_1(t_1)^\infty$.
The definition gives words
$x_2,x_3,y_1,y_2,y_3,z_1,z_2,z_3$ such~that
\begin{itemize}
\item $\alpha(x_j)=r_j$, $\alpha(y_j)=s_j$, $\alpha(z_j)=t_j$
\item $x_2\ksieq{3} x_3$, $y_1\ksieq{2} y_2\ksieq{3} y_3$ and
  $z_1\ksieq{2} z_2 \ksieq{3} z_3$.
\end{itemize}
Moreover, as $\content{s_1}=\content{t_1}$, we have $\content{y_1} =
\content{z_1}$. We define $w_2 = x_2(y_2(z_2)^{2^k\omega})^\infty$ and
$w_3 = x_3(y_3(z_3)^{2^k\omega})^{2^k\omega} y_1z_1^\infty$. It is
immediate from this definition that $\alpha(w_2) =
r_2(s_2(t_2)^\omega)^\infty$ and that $\alpha(w_3) = r_3(s_3(t_3)^\omega)^\omega
s_1(t_1)^\infty$. It remains to prove that $w_2 \ksieq{3} w_3$.

We first prove  $y_1z_1^\infty\ksieq{2}
(y_2(z_2)^{2^k\omega})^\infty$. Since $\content{y_1} =
\content{z_1}$, we may use Corollary~\ref{cor:EF} to obtain
$z_1^\infty \ksieq{2} (z_1)^{2^k\omega} (y_1(z_1)^{2^k\omega})^\infty$.
By Lemma~\ref{lem:EFmult} and transitivity,
\begin{equation}
  \label{eq:1}
  y_1z_1^\infty \ksieq{2} (y_1(z_1)^{2^k\omega})^\infty\ksieq{2} (y_2(z_2)^{2^k\omega})^\infty
\end{equation}
We may now use~\eqref{eq:1} together with Lemma~\ref{lem:EF} to
obtain that $(y_2(z_2)^{2^k\omega})^\infty\ksieq{3} (y_2(z_2)^{2^k\omega})^{2^k\omega}
y_1z_1^\infty$. Using Lemma~\ref{lem:EFmult} and transitivity again, we obtain
that
\[
  x_2(y_2(z_2)^{2^k\omega})^\infty\ksieq{3}x_3(y_3(z_3)^{2^k\omega})^{2^k\omega}
  y_1z_1^\infty
\]
This exactly says that $w_2 \ksieq{3} w_3$ which concludes the proof.\qed

It remains to prove the second
inclusion of Proposition~\ref{prop:sig3main}, that is,  $\Cs_{3,2}[\alpha_\infty] \subseteq \Sat_{\sict}(\alpha)$.
 Note
that the proof relies on Lemma~\ref{lem:ramsey} and
Fact~\ref{fct:theell} that we presented for the proof of the \sicd algorithm. We will also need an additional
result about mixed \chains (it is essentially the `mixed \chains'
version of Fact~\ref{fct:theell}).

\subsection{Preliminary Result}

Given $x \in \{+,\infty\}$ and $\beta: A^x \to S$ a map into a finite
set $S$, for all $k \geq 1$, we denote by $\Ms^k[\beta]$ the set of
all \chains $(s_1,s_2,s_3) \in S^3$ such that there exist
$w_1,w_2,w_3 \in A^x$ satisfying,
\begin{itemize}
\item for all $j$, $\beta(w_j) = s_j$.
\item $w_1 \sieq{k}{2} w_2 \sieq{k}{3} w_3$.
\end{itemize}
Note that by definition, $\Ms[\beta] = \bigcap_{k\geq 1}
\Ms^k[\beta]$. Moreover, the following fact may be verified
from the definition (this uses Fact~\ref{fct:refine} and the fact
that $S$ is finite).

\begin{fact} \label{fct:theell2}
  Let $x \in \{+,\infty\}$ and let $\beta: A^x \to S$ be a
  map into a finite set $S$. Then, for all $k \geq 1$,
  \[
    \Ms[\beta] \subseteq \Ms^{k+1}[\beta] \subseteq \Ms^k[\beta].
  \]
  Moreover, there exists $\ell$ (depending on $\beta$) such that
  $\Ms[\beta] = \Ms^{\ell}[\beta]$.
\end{fact}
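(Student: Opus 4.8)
The plan is to mirror the proof of Fact~\ref{fct:theell} almost verbatim, replacing \ichains by mixed \chains. The argument splits into two parts: the chain of inclusions, and the stabilization of the resulting decreasing sequence of subsets of $S^3$.

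For the inclusions, I would first observe that $\Ms[\beta] \subseteq \Ms^{k+1}[\beta]$ is immediate from the identity $\Ms[\beta] = \bigcap_{k \geq 1} \Ms^k[\beta]$ noted just before the statement. For $\Ms^{k+1}[\beta] \subseteq \Ms^k[\beta]$, I would take $(s_1,s_2,s_3) \in \Ms^{k+1}[\beta]$ with witnessing words $w_1,w_2,w_3 \in A^x$ such that $\beta(w_j) = s_j$ for all $j$ and $w_1 \sieq{k+1}{2} w_2 \sieq{k+1}{3} w_3$. Applying Fact~\ref{fct:refine}(1) separately to the two comparisons (first with $i = 2$, then with $i = 3$) yields $w_1 \sieq{k}{2} w_2$ and $w_2 \sieq{k}{3} w_3$, so the same triple of words witnesses $(s_1,s_2,s_3) \in \Ms^k[\beta]$. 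This gives the displayed chain of inclusions.

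For the existence of $\ell$, the key ingredient is finiteness of $S$, hence of $S^3$. By the inclusions just proved, $(\Ms^k[\beta])_{k \geq 1}$ is a non-increasing sequence of subsets of $S^3$, so $(|\Ms^k[\beta]|)_{k \geq 1}$ is a non-increasing sequence of non-negative integers and is therefore eventually constant, say constant for all $k \geq \ell$. Combined with $\Ms^{k+1}[\beta] \subseteq \Ms^k[\beta]$, equality of cardinalities forces $\Ms^{k+1}[\beta] = \Ms^k[\beta]$ for every $k \geq \ell$, hence $\Ms^k[\beta] = \Ms^\ell[\beta]$ for all $k \geq \ell$. Consequently $\Ms[\beta] = \bigcap_{k \geq 1} \Ms^k[\beta] = \bigcap_{k \geq \ell} \Ms^k[\beta] = \Ms^\ell[\beta]$, which is the claimed equality.

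I do not expect any genuine obstacle: the statement is a purely combinatorial consequence of Fact~\ref{fct:refine} and finiteness of $S$, exactly as in Fact~\ref{fct:theell}. The one point that deserves a moment's care is not to stop at ``some two consecutive terms coincide'' but to argue via cardinalities (equivalently, via eventual constancy of a descending chain in the finite poset of subsets of $S^3$) that the sequence is constant from $\ell$ onwards, since that is precisely what makes the infinite intersection collapse to the single term $\Ms^\ell[\beta]$.
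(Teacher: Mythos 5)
Your proof is correct and uses exactly the ingredients the paper indicates (Fact~\ref{fct:refine} together with finiteness of $S$, hence of $S^3$); the paper itself leaves the verification to the reader, and your write-up is a faithful expansion of that sketch. The care you take in passing from eventual stabilization of the descending chain to the collapse of the infinite intersection onto $\Ms^\ell[\beta]$ is exactly the right point to make explicit.
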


\subsection{Proof of Proposition~\ref{prop:sig3main}}

We now prove that $\Cs_{3,2}[\alpha_\infty] \subseteq
\Sat_{\sict}(\alpha)$ in Proposition~\ref{prop:sig3main}. We follow a
proof template which is similar to the one we used to prove
Proposition~\ref{prop:sig2main}. Let
$\alpha: (A^+,A^\infty) \to (S_+,S_\infty)$ be an alphabet compatible
morphism into a finite \isemi $(S_+,S_\infty)$. We exhibit a number
$\ell \geq 1$ such that $\Cs_{3,2}^\ell[\alpha_\infty] \subseteq
\Sat_{\sict}(\alpha)$ (by Fact~\ref{fct:theell}, this suffices since
$\Cs_{3,2}[\alpha_\infty] \subseteq \Cs_{3,2}^\ell[\alpha_\infty]$ for
any $\ell$).

We begin by choosing the number $\ell$. We know from
Fact~\ref{fct:theell} and Fact~\ref{fct:theell2} that there exists
$\ell_+$ such that $\Ms[\alpha_+] = \Ms^{\ell_+}[\alpha_+]$ and
$\Cs_{3,2}[\alpha_+] = \Cs^{\ell_+}_{3,2}[\alpha_+]$. Moreover, we may
assume without loss of generality that $\ell_+ \geq 2$ (we may choose
$\ell_+$ as large as we want). Set $p = |S_+| +1$. We define $\ell' =
\ell_+ + p^2$ and $\ell = \ell' + p$.

We have to prove that $\Cs_{3,2}^\ell[\alpha_\infty] \subseteq
\Sat_{\sict}(\alpha)$. Set $(q,q') \in \Cs_{3,2}^\ell[\alpha_\infty]$,
we prove that $(q,q') \in \Sat_{\sict}(\alpha)$. By definition, of
$\Sat_{\sict}(\alpha)$, we have to find $r_2,r_3,s_1,s_2,s_3,t_1,t_2$
and $t_3$ in $S_+$ such that,
\begin{equation} \label{eq:goal2}
  \begin{array}{l}
    (r_2,r_3)  \in \Cs_{3,2}[\alpha_+] \\
    (s_1,s_2,s_3) \in \Ms[\alpha_+] \\
    (t_1,t_2,t_3) \in \Ms[\alpha_+] \\
    \content{s_1} = \content{t_1}
  \end{array} \quad \text{ and } \quad
  \begin{array}{lll}
    q & = & r_2(s_2(t_2)^\omega)^\infty \\
    q' & = & r_3(s_3(t_3)^\omega)^\omega s_1(t_1)^\infty
  \end{array}
\end{equation}

The existence of $r_2,r_3,s_1,s_2,s_3,t_1,t_2$ and $t_3$ in $S_+$ is a
consequence of the following lemma.

\begin{lemma} \label{lem:sig3witness}
  There exist words $u_2,u_3,x_2,x_3,y_1,y_2,y_3,z_1,z_2$ and $z_3$ in
  $A^+$ such that
  \begin{enumerate}[itemindent=3ex,leftmargin=4ex]
  \item $x_2 \sieq{\ell_+}{3} x_3$, $y_1 \sieq{\ell_+}{2} y_2
    \sieq{\ell_+}{3} y_3$, $z_1 \sieq{\ell_+}{2} z_2 \sieq{\ell_+}{3}
    z_3$ and $u_2 \sieq{\ell_+}{3} u_3$.
  \item $\content{u_3y_1} = \content{z_1}$.
  \item $q = \alpha(x_2(y_2(z_2)^\omega u_2)^\infty)$ and $q' =
    \alpha(x_3(y_3(z_3)^\omega u_3)^\omega y_1(z_1)^\infty)$.
  \end{enumerate}
\end{lemma}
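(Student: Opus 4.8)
The plan is to mimic the completeness argument of Proposition~\ref{prop:sig2main}, but to build a \emph{two-level} nested periodic decomposition, which is what the extra quantifier block of \sict calls for. Unwinding the hypothesis $(q,q')\in\Cs_{3,2}^{\ell}[\alpha_\infty]$ gives \iwords $w,w'$ with $\alpha_\infty(w)=q$, $\alpha_\infty(w')=q'$ and $w\sieq{\ell}{3}w'$. Since every \sicd and every \picd sentence is, after padding with a dummy quantifier block, also a \sict sentence, this relation also yields $w\bceq{\ell}{2}w'$, hence \emph{both} $w\sieq{\ell}{2}w'$ and $w'\sieq{\ell}{2}w$ (this is Fact~\ref{fct:refine}(3)); the level-$2$ relation will be used in both directions, which is the main new feature compared to the \sicd case.

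First I would fix the ``rigid'' side. Applying Lemma~\ref{lem:ramsey} to $w$ with the semigroup morphism $\alpha_+$ decomposes $w=a_0a_1a_2\cdots$ with $\alpha_+(a_m)=e$ an idempotent and $\content{a_m}=B$ for every $m\ge 1$; consequently $q=\alpha_+(a_0)\,e^{\infty}$, and any finite word formed as a product of the $a_m$'s ($m\ge 1$) has image $e$ and alphabet $B$. Cutting $w$ at the $a_m$-boundaries into the prefix $a_0$, the next (roughly) $p^{2}$ factors $a_1,\dots,a_N$, and a tail, and applying Lemma~\ref{lem:EFdecomp} $N$ times to $w\sieq{\ell}{3}w'$, produces an aligned decomposition $w'=a'_0a'_1\cdots a'_N V'$ with $a_m\sieq{\ell-N}{3}a'_m$ for $m\le N$ and $(a_{N+1}a_{N+2}\cdots)\sieq{\ell-N}{3}V'$, where $\ell-N\ge\ell_+$ by the choice $\ell=\ell_++p^{2}+p$. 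Because $w$ is already split into individual $a_m$'s, no further cut is needed on $w'$ for the inner level: I would run two successive pigeonhole arguments on the partial products $\alpha_+(a'_0a'_1\cdots a'_m)$ — an ``outer'' one over blocks of $p$ consecutive indices (there are $|S_+|<p$ values, so a repetition occurs) to produce the prefix $x_3$ and an outer loop word, and an ``inner'' one inside that outer loop producing, within it, an inner prefix $y_3$, an inner loop $z_3$ and a remainder $u_3$. The matching $w$-side words $x_2,y_2,z_2,u_2$ are then boundary-aligned products of $a_m$'s, hence of image $e$ and alphabet $B$; the two pigeonholes give the idempotent identities $\alpha_+(x_3)=\alpha_+(x_3)(\alpha_+(y_3)\alpha_+(z_3)^{\omega}\alpha_+(u_3))^{\omega}$ and $\alpha_+(y_2)\alpha_+(z_2)^{\omega}\alpha_+(u_2)=e$, whence $q=\alpha(x_2(y_2(z_2)^{\omega}u_2)^{\infty})$; and the Composition Lemma~\ref{lem:EFmult} assembles $x_2\sieq{\ell_+}{3}x_3$, $y_2\sieq{\ell_+}{3}y_3$, $z_2\sieq{\ell_+}{3}z_3$ and $u_2\sieq{\ell_+}{3}u_3$.

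It remains to produce the deepest factors $y_1,z_1$ — which appear only in the suffix ``$y_1(z_1)^{\infty}$'' of $q'$ — together with the level-$2$ links $y_1\sieq{\ell_+}{2}y_2$ and $z_1\sieq{\ell_+}{2}z_2$, and to verify the alphabet condition. Writing $W$ for the $\omega$-suffix of $w'$ left over once the outer loop has been absorbed into the identity above, a short computation using that identity rewrites $q'$ as $\alpha_+(x_3)(\alpha_+(y_3)\alpha_+(z_3)^{\omega}\alpha_+(u_3))^{\omega}\cdot\alpha_\infty(W)$, so it suffices to realise $\alpha_\infty(W)$ as $\alpha_+(y_1)\alpha_+(z_1)^{\infty}$. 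Applying Lemma~\ref{lem:ramsey} to $W$ gives $W=d_0d_1d_2\cdots$ with $\alpha_+(d_m)=f$ idempotent and $\content{d_m}=D$ for $m\ge 1$; taking $y_1=d_0$ and $z_1=d_1$ yields $\alpha_\infty(W)=\alpha_+(d_0)\,f^{\infty}=\alpha_+(y_1)\alpha_+(z_1)^{\infty}$, so the second displayed equation of the lemma holds exactly. For the level-$2$ links I would exploit that $W$ is a tail of $w'$ while the aligned tail $a_{N+1}a_{N+2}\cdots$ of $w$ satisfies $(a_{N+1}a_{N+2}\cdots)\sieq{\ell_+}{3}V'$, hence by Fact~\ref{fct:refine}(3) also $V'\sieq{\ell_+}{2}(a_{N+1}a_{N+2}\cdots)$: transporting the $d$-boundary cut of $W$ through this level-$2$ relation via Lemma~\ref{lem:EFdecomp}, and re-choosing $y_2,z_2$ (and with them $y_3,z_3$) among the transported factors if necessary, provides words $y_2,z_2$ in $w$ with $y_1\sieq{\ell_+}{2}y_2$ and $z_1\sieq{\ell_+}{2}z_2$; by Fact~\ref{fct:theell2} this places $(\alpha(y_1),\alpha(y_2),\alpha(y_3))$ and $(\alpha(z_1),\alpha(z_2),\alpha(z_3))$ in $\Ms[\alpha_+]$.

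Finally, the alphabet bookkeeping is routine: membership of a letter (``$\exists x\,P_a(x)$'') and containment of the alphabet in a fixed set (``$\forall x\bigvee_{a\in B}P_a(x)$'') are expressible with quantifier rank $1$, the first in $\sicu\subseteq\sicd\subseteq\sict$ and the second in $\picu\subseteq\sicd\subseteq\sict$; propagating these along the chains (recall $\ell_+\ge 2$) forces all words in a single $\sieq{\ell_+}{2}$- or $\sieq{\ell_+}{3}$-chain to share a single alphabet, giving $\content{y_1}\subseteq\content{z_1}$, $\content{u_3}\subseteq\content{z_1}$ and $\content{z_1}=\content{z_2}$, hence $\content{u_3y_1}=\content{z_1}$ and $\content{s_1}=\content{t_1}$ as required. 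The hard part is the coordination in the previous paragraph: making the level-$2$ transport of the deepest factors land on factors compatible with the boundary-aligned decomposition on which the two displayed equations of the lemma rely, while keeping every relation at rank $\ge\ell_+$ — this is precisely what the generous choice $\ell=\ell_++p^{2}+p$ is for, and it is the step with no analogue in the proof of Proposition~\ref{prop:sig2main}.
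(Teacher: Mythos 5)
Your overall architecture is right (Ramsey factorization, repeated use of Lemma~\ref{lem:EFdecomp}, pigeonhole on partial products, and Fact~\ref{fct:refine}(3) to convert the level-3 hypothesis into a two-sided level-2 relation between the tails), but the step you yourself flag as ``the hard part'' is a genuine gap, and it is exactly the point the paper's proof is engineered around. You need the level-2 links $y_1 \sieq{\ell_+}{2} y_2$ and $z_1 \sieq{\ell_+}{2} z_2$ \emph{and} you need $y_2, z_2, u_2$ to be products of the Ramsey factors $a_m$ of $w$ (so that $\alpha(y_2(z_2)^\omega u_2)=e$, which is what your equation for $q$ and your alphabet bookkeeping rest on). But Lemma~\ref{lem:EFdecomp} transports a cut of $W$ to \emph{some} cut of the tail of $w$ with no control whatsoever over where the cut points land: the transported factors will in general not be aligned with the $a_m$-boundaries, so ``re-choosing $y_2,z_2$ among the transported factors'' destroys $\alpha(y_2)=\alpha(z_2)=e$, and with it the identity $q=\alpha(x_2(y_2(z_2)^\omega u_2)^\infty)$; it also invalidates the level-3 links $y_2\sieq{\ell_+}{3}y_3$, $z_2\sieq{\ell_+}{3}z_3$, which you established for the boundary-aligned factors. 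No choice of $\ell$ fixes this; the obstruction is structural, not quantitative.

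The paper resolves this with two ideas absent from your proposal. First, the initial Ramsey decomposition of $w$ is taken not with respect to $\alpha_+$ but with respect to the richer morphism $\gamma(u)=(\alpha(u),[u]_{\bceq{\ell'}{3}})$ into $S_+\times(A^+/\bceq{\ell'}{3})$; this guarantees that all factors $u_iu_{i+1}\cdots u_j$ are $\bceq{\ell'}{3}$-equivalent to the loop word $w_1$, hence that the \emph{periodic} \iword $(w_1)^\infty$ satisfies $(w_1)^\infty \sieq{\ell'}{3} u_{i+1}u_{i+2}\cdots \sieq{\ell'}{3} w'_\infty$, whence $w'_\infty \sieq{\ell'}{2} (w_1)^\infty$. (The paper explicitly notes that with $\alpha_+$ alone the link between the factors is too weak.) Second, the level-2 decomposition lemma is then applied to $w'_\infty \sieq{\ell'}{2} (w_1)^\infty$, i.e.\ the arbitrary, uncontrolled cuts are received by the periodic word $(w_1)^\infty$ rather than by the actual tail of $w$: a finite prefix $v_0\cdots v_{p^2}$ of the cut is then \emph{exactly} $(w_1)^n$ for some $n$, so a double pigeonhole on the partial images $\alpha(v_0\cdots v_h)$ and $\alpha(v''_0\cdots v''_h)$ (this is why $p^2$ cuts are needed, not $p$) recovers the algebraic identities $\alpha(y_2(z_2)^\omega u_2)=\alpha((w_1)^n)$ and $\alpha(y_3(z_3)^\omega u_3)=\alpha((w'_1)^n)$ without any boundary alignment. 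Your plan has no substitute for either device, so the coordination you defer cannot be carried out as described.
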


Before proving Lemma~\ref{lem:sig3witness}, let us explain how we use
it to conclude the proof of Proposition~\ref{prop:sig3main}. We set
\[
  \begin{array}{lll}
    r_2 & = & \alpha(x_2(y_2(z_2)^\omega u_2)^{\omega-1} y_2(z_2)^\omega) \\
    r_3 & = & \alpha(x_3(y_3(z_3)^\omega u_3)^{\omega-1} y_3(z_3)^\omega)
  \end{array} \qquad
  \begin{array}{lll}
    s_1 & = & \alpha(u_3y_1) \\
    s_2 & = & \alpha(u_2y_2) \\
    s_3 & = & \alpha(u_3y_3)
  \end{array}
  \qquad
  \begin{array}{lll}
    t_1 & = & \alpha(z_1) \\
    t_2 & = & \alpha(z_2) \\
    t_3 & = & \alpha(z_3)
  \end{array}
\]
We have to prove that these choices satisfy the conditions
in~\eqref{eq:goal2}. The facts that $(r_2,r_3) \in \Cs_{3,2}[\alpha_+]$,
$(s_1,s_2,s_3) \in \Ms[\alpha_+]$ and $(t_1,t_2,t_3) \in
\Ms[\alpha_+]$ is a simple consequence of the first item in
Lemma~\ref{lem:sig3witness} and our choice of $\ell_+$. Let
us detail the case of $(s_1,s_2,s_3)$ (other cases are handled
similarly). By the first item in Lemma~\ref{lem:sig3witness}, we
know that $y_1 \sieq{\ell_+}{2} y_2 \sieq{\ell_+}{3} y_3$ and
$u_2 \sieq{\ell_+}{3} u_3$. Also notice the second inequality means
that $u_3 \sieq{\ell_+}{2} u_2 \sieq{\ell_+}{3} u_3$ (this is comes
from the last item in Fact~\ref{fct:refine}). We may now apply
Lemma~\ref{lem:EFmult} to obtain that $u_3y_1 \sieq{\ell_+}{2}
u_2y_2 \sieq{\ell_+}{3} u_3y_3$. By definition of $s_1,s_2,s_3$, this
exactly says that $(s_1,s_2,s_3) \in \Cs^{\ell_+}_{3,2}[\alpha_+]$ and
we chose $\ell_+$ so that $\Cs^{\ell_+}_{3,2}[\alpha_+] =
\Cs_{3,2}[\alpha_+]$. We conclude that $(s_1,s_2,s_3) \in
\Ms[\alpha_+]$.

Moreover, that $\content{s_1} = \content{t_1}$ is immediate from the
definitions of $s_1$ and $t_1$ since we have $\content{u_3y_1} =
\content{z_1}$ in Lemma~\ref{lem:sig3witness}.

\noindent
Finally, we prove that $q = r_2(s_2(t_2)^\omega)^\infty$ and $q' =
r_3(s_3(t_3)^\omega)^\omega s_1(t_1)^\infty$. We have:
\[
  \begin{array}{lll}
    r_2(s_2(t_2)^\omega)^\infty & = &
                                      \alpha(x_2(y_2(z_2)^\omega u_2)^{\omega-1}
                                      y_2(z_2)^\omega(u_2y_2(z_2)^\omega)^\infty) \\
                                & = &
                                      \alpha(x_2(y_2(z_2)^\omega u_2)^{\omega-1}
                                      (y_2(z_2)^\omega u_2)^\infty)\\
                                & = &
                                      \alpha(x_2(y_2(z_2)^\omega u_2)^{\infty})\\
                                & = & q  \quad \text{by the third item in Lemma~\ref{lem:sig3witness}}
  \end{array}
\]
\noindent and
\[
  \begin{array}{lll}
    r_3(s_3(t_3)^\omega)^\omega s_1(t_1)^\infty & = & \alpha(x_3(y_3(z_3)^\omega u_3)^{\omega-1}
                                                      y_3(z_3)^\omega(u_3y_3(z_3)^\omega)^\omega u_3y_1(z_1)^\infty)
    \\
                                                & = & \alpha(x_3(y_3(z_3)^\omega u_3)^{\omega-1}
                                                      (y_3(z_3)^\omega u_3)^{\omega+1} y_1(z_1)^\infty)\\
                                                & = & \alpha(x_3(y_3(z_3)^\omega u_3)^{2\omega}
                                                      y_1(z_1)^\infty)\\
                                                & = & \alpha(x_3(y_3(z_3)^\omega u_3)^{\omega}
                                                      y_1(z_1)^\infty)\\
                                                & = & q'  \quad \text{by the third item in Lemma~\ref{lem:sig3witness}}\\
  \end{array}
\]

Therefore, our choices of $r_2,r_3,s_1,s_2,s_3,t_1,t_2$ and $t_3$
satisfy the conditions of~\eqref{eq:goal2}, which exactly means that
$(q,q') \in \Sat_{\sict}(\alpha)$ and we are finished with the proof
of Proposition~\ref{prop:sig3main}.

It now remains to prove Lemma~\ref{lem:sig3witness}. We proceed in
two steps. We first prove the existence of a set of words and \iwords
that satisfy weaker properties than the desired words $u_2$, $u_3$, $x_2$, $
x_3$, $y_1$, $y_2$, $y_3$, $z_1$, $z_2$ and $z_3$ in Lemma~\ref{lem:sig3witness}. We
then use this first set of words and \iwords to construct the desired
words $u_2$, $u_3$, $x_2$, $x_3$, $y_1$, $y_2$, $y_3$, $z_1$, $z_2$ and $z_3$. We devote a
subsection to each step.

\subsection{First Step in the Proof of Lemma~\ref{lem:sig3witness}}

As explained, this first step consists in the construction of a
set of words and \iwords that we will then use in the second step to
construct the words $u_2,u_3,x_2,x_3,y_1,y_2,y_3,z_1,z_2$ and $z_3$ of
Lemma~\ref{lem:sig3witness}. We state the construction in the lemma
below. Recall that we have set $p = |S_+| + 1$, $\ell' = p^2 + \ell_+$
and $\ell = p + \ell$, and that $(q,q') \in
\Cs_{3,2}^\ell[\alpha_\infty]$.

\begin{lemma} \label{lem:sig3inter}
  There exist words $w^{}_0,w^{}_1,w'_0,w'_1 \in A^+$ and an \iword
  $w'_\infty \in A^\infty$ such that:
  \begin{enumerate}[itemindent=3ex,leftmargin=4ex]
  \item $w^{}_0 \sieq{\ell'}{3} w'_0$ and $w^{}_1 \sieq{\ell'}{3} w'_1$.
  \item $w'_\infty \sieq{\ell'}{2} (w^{}_1)^\infty$
  \item $q  = \alpha(w^{}_0(w^{}_1)^\infty)$ and $q' =
    \alpha(w'_0(w'_1)^\omega w'_\infty)$.
  \end{enumerate}
\end{lemma}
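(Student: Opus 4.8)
The plan is to unfold the hypothesis $(q,q')\in\Cs_{3,2}^{\ell}[\alpha_\infty]$ into a pair of \iwords, factorise them with Ramsey and the Decomposition Lemma, and read $w_0,w_1,w'_0,w'_1,w'_\infty$ off the factorisation. By definition of $\Cs_{3,2}^{\ell}[\alpha_\infty]$ we obtain \iwords $w,w'\in A^\infty$ with $\alpha(w)=q$, $\alpha(w')=q'$ and $w\sieq{\ell}{3}w'$. The first move is to apply Lemma~\ref{lem:ramsey} to $w$, \emph{not} with $\alpha_+$ alone, but with the morphism into the finite semigroup $S_+\times(A^+/{\bceq{\ell'}{2}})$ sending a finite factor to its $\alpha_+$-image together with its $\bceq{\ell'}{2}$-class; this is a genuine morphism into a finite semigroup because, by the Composition Lemma (Lemma~\ref{lem:EFmult}), $\bceq{\ell'}{2}$ is a congruence of finite index. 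This yields a factorisation $w=u_0u_1u_2\cdots$ such that every factor $u_a\cdots u_b$ with $a\ge 1$ has the same $\alpha_+$-image, an idempotent $e$, \emph{and} the same $\bceq{\ell'}{2}$-class $\tau$; in particular all these factors share a common alphabet $B$ (alphabet compatibility).

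Next, since $w\sieq{\ell}{3}w'$ and $\ell=\ell'+p$, I apply the Decomposition Lemma (Lemma~\ref{lem:EFdecomp}) $p$ times along $w=u_0u_1\cdots u_{p-1}(u_pu_{p+1}\cdots)$, obtaining $w'=u'_0u'_1\cdots u'_{p-1}v'$ with $u_m\sieq{\ell'}{3}u'_m$ for $m\le p-1$ and $u_pu_{p+1}\cdots\sieq{\ell'}{3}v'$. As $p=|S_+|+1$, the pigeonhole principle applied to the $p$ prefixes $\alpha_+(u'_0\cdots u'_m)$ ($0\le m\le p-1$) gives $i<j\le p-1$ with $\alpha_+(u'_0\cdots u'_i)=\alpha_+(u'_0\cdots u'_j)$; writing $g=\alpha_+(u'_0\cdots u'_i)$ and $h=\alpha_+(u'_{i+1}\cdots u'_j)$ we get $gh=g$, hence $gh^\omega=g$. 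I then set $w_0=u_0\cdots u_i$, $w_1=u_{i+1}\cdots u_j$, $w'_0=u'_0\cdots u'_i$, $w'_1=u'_{i+1}\cdots u'_j$ and $w'_\infty=u'_{j+1}\cdots u'_{p-1}v'$ (with $w'_\infty=v'$ if $j=p-1$).

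It remains to verify the three conditions. For (3): $\alpha_+(w_1)=e^{\,j-i}=e$ is idempotent, so $\alpha(w_0(w_1)^\infty)=\alpha_+(w_0)\,e^\infty=\alpha(w)=q$; and $\alpha(w'_0(w'_1)^\omega w'_\infty)=g\,h^\omega\,\alpha(w'_\infty)=g\,\alpha(w'_\infty)=gh\,\alpha(w'_\infty)=\alpha(w')=q'$. Condition (1) is immediate from $u_m\sieq{\ell'}{3}u'_m$ and the Composition Lemma. Condition (2) is the heart of the matter. From $u_m\sieq{\ell'}{3}u'_m$ (whence $u'_m\sieq{\ell'}{2}u_m$ by Fact~\ref{fct:refine}) and $u_pu_{p+1}\cdots\sieq{\ell'}{3}v'$ (whence $v'\sieq{\ell'}{2}u_pu_{p+1}\cdots$), the Composition Lemma gives $w'_\infty\sieq{\ell'}{2}u_{j+1}u_{j+2}\cdots$, so it suffices to prove $u_{j+1}u_{j+2}\cdots\sieq{\ell'}{2}(w_1)^\infty$. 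Here the refined Ramsey factorisation pays off: $w_1=u_{i+1}\cdots u_j$ and each single factor $u_{j+m}$ lie in the same $\bceq{\ell'}{2}$-class $\tau$, so $u_{j+m}\bceq{\ell'}{2}w_1$ for all $m\ge 1$; a standard infinite-composition argument (a routine extension of the Composition Lemma to infinite products, combined with Corollary~\ref{cor:EF} and the fact that all factors have alphabet $B$) then gives $u_{j+1}u_{j+2}\cdots\bceq{\ell'}{2}(w_1)^\infty$, hence in particular condition (2).

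I expect condition (2) to be the main obstacle. A plain Ramsey factorisation does \emph{not} suffice: $\sieq{\ell'}{2}$ is strictly finer than ``same $\alpha_+$-image and same alphabet'' — for instance $(aab)^\infty$ and $(ba)^\infty$ are separated by a rank-$3$ \sicd sentence even though their idempotent factors agree on image and alphabet — so one genuinely needs the idempotent factors of $w$ to be uniformly $\bceq{\ell'}{2}$-equivalent before the tail of $w$ can be collapsed to the periodic \iword $(w_1)^\infty$, which is why the Ramsey colouring must record the $\bceq{\ell'}{2}$-class. The other point to watch is the rank bookkeeping: peeling $p$ factors with the Decomposition Lemma costs $p$ quantifier ranks, which is exactly why $\ell$ is taken to be $\ell'+p$ (and the spare $p^2$ inside $\ell'$ is what the second step of the proof of Lemma~\ref{lem:sig3witness} will consume).
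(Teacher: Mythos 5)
Your proof is correct and follows essentially the same route as the paper's: unfold $(q,q')\in\Cs_{3,2}^{\ell}[\alpha_\infty]$ into $w\sieq{\ell}{3}w'$, apply Lemma~\ref{lem:ramsey} to $w$ with a morphism enriched by a $\bceq{\ell'}{\cdot}$-class (the key idea, which you correctly identify as indispensable for condition~(2)), peel off $p$ factors of $w'$ via Lemma~\ref{lem:EFdecomp}, and pigeonhole on the prefix images. Your two deviations are cosmetic and harmless: you record $\bceq{\ell'}{2}$-classes and argue directly at level~$2$ where the paper records $\bceq{\ell'}{3}$-classes and downgrades at the end via Fact~\ref{fct:refine}, and you start $w'_\infty$ at $u'_{j+1}$ rather than $u'_{i+1}$, which still yields $q'$ since $\alpha(w'_0)=\alpha(w'_0)\alpha(w'_1)=\alpha(w'_0)\alpha(w'_1)^\omega$.
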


We devote the subsection to the proof of Lemma~\ref{lem:sig3inter}. By
definition, $(q,q') \in \Cs_{3,2}^\ell[\alpha_\infty]$, which means that there
exist two \iwords $w,w' \in A^\infty$ of images $q,q'$ under $\alpha$ such
that $w \sieq{\ell}{3} w'$. We construct $w_0,w_1, w'_0,w'_1 \in A^+$ and
$w'_\infty \in A^\infty$ by decomposing $w$ and $w'$. Our first move is to
decompose $w$ as an infinite product using Lemma~\ref{lem:ramsey}. However, if
we use $\alpha_+$ only as the morphism for making the decomposition, we will
not have a strong enough link between the factors to prove the desired result.
For this reason, we apply the lemma for a morphism carrying more information
than $\alpha_+$ does. Let us define this morphism.

We know since Lemma~\ref{lem:EFmult} that over $A^+$, the equivalence
$\bceq{\ell'}{3}$ is a congruence for concatenation, hence the quotient
$A^+ / \bceq{\ell'}{3}$ is a semigroup. Moreover, since there are only
finitely many non equivalent \bsct sentences of quantifier rank $\ell'$, this
semigroup is finite. We write
\[
  \begin{array}{rlcl}
    \gamma: & A^+ & \to     & S_+ \times (A^+ / \bceq{\ell'}{3})\\
            & u   & \mapsto & (\alpha(u),[u]_{\bceq{\ell'}{3}})
  \end{array}
\]
\noindent
where $[u]_{\bceq{\ell'}{3}}$ denotes the $\bceq{\ell'}3$-equivalence class of $u$. We now apply
Lemma~\ref{lem:ramsey} to $w \in A^\infty$ with $\gamma$ as the
morphism. We obtain a decomposition $w = u_0u_1u_2 \cdots$
($u_0,u_1,u_2,\dots \in A^+$) such that $\gamma(u_1) = \gamma(u_2) =
\gamma(u_3) = \cdots$ is an idempotent of $S_+ \times (A^+ /
\bceq{\ell'}{3})$. In other words, the decompositions satisfies the two
following properties:
\begin{itemize}
\item there exists an idempotent $e \in S_+$ such that all factors of
  the form $u_iu_{i+1}\cdots u_j$ with $1 \leq i \leq j$ have image
  $e$ under $\alpha$.
\item all factors of the form $u_iu_{i+1}\cdots u_j$ with $1 \leq i
  \leq j$ are $\bceq{\ell'}{3}$-equivalent.
\end{itemize}
We now use this decomposition of $w$ and the hypothesis $w
\sieq{\ell}{3} w'$ to decompose $w'$ as well. Since $w \sieq{\ell}{3}
w'$, we may apply Lemma~\ref{lem:EFdecomp} $p$ times to the $w$ and
$w'$ to obtain a decomposition $w' = u'_0 \cdots u'_{p-1}u'_\infty$ of
$w'$ ($u'_0,u'_1,\dots,u'_{p-1} \in A^+$ and $u'_\infty \in A^\infty$)
which satisfies the following fact (recall that $\ell =\ell' + p$).

\begin{fact} \label{fct:ineq2}
  For all $j \leq p-1$, $u_j \sieq{\ell'}{3} u'_j$ and $u_{p}u_{p+1}
  \cdots \sieq{\ell'}{3} u'_\infty$.
\end{fact}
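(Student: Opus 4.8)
The plan is to obtain this decomposition by applying the Decomposition Lemma (Lemma~\ref{lem:EFdecomp}) exactly $p$ times to the hypothesis $w \sieq{\ell}{3} w'$, peeling off one finite factor $u'_j$ of $w'$ at each step and matching it against the corresponding factor $u_j$ of $w$. Since a single application of Lemma~\ref{lem:EFdecomp} costs one unit of quantifier rank, and we perform $p$ applications starting from rank $\ell = \ell' + p$, every resulting comparison will retain rank at least $\ell'$. Converting these back to the uniform rank $\ell'$ via the refinement property then yields exactly the two families of inequalities asserted in the fact.

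Concretely, I would argue by induction on the number $t$ of applications, for $0 \le t \le p$. The induction hypothesis is that after $t$ steps we have a decomposition $w' = u'_0 \cdots u'_{t-1} r'_t$ (with $u'_0,\dots,u'_{t-1} \in A^+$ and $r'_t \in A^\infty$) such that $u_j \sieq{\ell-1-j}{3} u'_j$ for all $j \le t-1$ and $u_t u_{t+1} \cdots \sieq{\ell-t}{3} r'_t$. The base case $t = 0$ is just the hypothesis $w \sieq{\ell}{3} w'$, with $r'_0 = w'$. For the inductive step, I write the current tail of $w$ as $u_t u_{t+1}\cdots = u_t \cdot (u_{t+1}u_{t+2}\cdots)$, a splitting of an \iword into a finite prefix and an infinite suffix, and apply Lemma~\ref{lem:EFdecomp} (with $i=3$) to the relation $u_t u_{t+1}\cdots \sieq{\ell-t}{3} r'_t$. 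This produces a matching splitting $r'_t = u'_t\, r'_{t+1}$ with $u_t \sieq{\ell-t-1}{3} u'_t$ and $u_{t+1}u_{t+2}\cdots \sieq{\ell-t-1}{3} r'_{t+1}$, which is precisely the induction hypothesis at level $t+1$ (note $\ell-t-1 = \ell-1-t$).

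Taking $t = p$ and setting $u'_\infty := r'_p$ yields the decomposition $w' = u'_0 \cdots u'_{p-1} u'_\infty$ together with $u_j \sieq{\ell-1-j}{3} u'_j$ for $j \le p-1$ and $u_p u_{p+1}\cdots \sieq{\ell-p}{3} u'_\infty$. It remains only to replace these ranks by $\ell'$. Since $\ell = \ell' + p$, for every $j \le p-1$ we have $\ell - 1 - j \ge \ell - p = \ell'$, so repeated use of the first part of Fact~\ref{fct:refine} (the preorders refine as the rank grows) gives $u_j \sieq{\ell'}{3} u'_j$; for the tail we already have $\ell - p = \ell'$, hence $u_p u_{p+1}\cdots \sieq{\ell'}{3} u'_\infty$. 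This is exactly Fact~\ref{fct:ineq2}.

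I expect no genuine obstacle here: the argument is a routine induction whose whole content is the bookkeeping of the rank loss across the $p$ iterations. The single point deserving care is that the Decomposition Lemma is invoked in the \iword regime, where at each step the object being split is an infinite word decomposed into a finite prefix and an infinite suffix. This is legitimate, since Lemma~\ref{lem:EFdecomp} is stated for ``two words or two \iwords'' and allows an arbitrary decomposition $u = u_1u_2$; for an \iword $u$ this simply means $u_1 \in A^+$ and $u_2 \in A^\infty$, and the matching factors $v_1,v_2$ inherit the same shape, which is what makes the peeling procedure terminate after producing a finite prefix $u'_0 \cdots u'_{p-1}$ and an infinite remainder $u'_\infty$.
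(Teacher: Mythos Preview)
Your proposal is correct and follows exactly the approach the paper uses: it applies Lemma~\ref{lem:EFdecomp} $p$ times to $w \sieq{\ell}{3} w'$, losing one unit of rank per application, and then invokes Fact~\ref{fct:refine} together with $\ell=\ell'+p$ to bring all resulting inequalities down to rank~$\ell'$. The paper merely asserts that this works without spelling out the induction; your explicit bookkeeping of the ranks $\ell-1-j$ and $\ell-t$ is the natural way to make the argument precise.
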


Since we chose $p=|S_+|+1$, we may now use the pigeonhole principle
to obtain $i < j \leq p-1$ such that $\alpha(u'_0\cdots
u'_{i}) = \alpha(u'_0 \cdots u'_i\cdot u'_{i+1}\cdots u'_{j})$, that is,
$\alpha(u'_0\cdots
u'_{i})$ is stable by right multiplication by $\alpha(u'_{i+1}\cdots u'_{j})$.
This gives us the following fact.
\begin{fact}\label{fct:loop2}
  We have $\alpha(u'_1 \cdots u'_{i}) = \alpha(u'_1\cdots u'_{i}(u'_{i+1}\cdots u'_{j})^\omega)$.
\end{fact}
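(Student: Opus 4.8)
The plan is to read off Fact~\ref{fct:loop2} as a one‑line consequence of the pigeonhole equality established immediately before it, by iterating that equality inside the finite semigroup $S_+$. Concretely, set $d=\alpha(u'_0\cdots u'_i)$ and $c=\alpha(u'_{i+1}\cdots u'_j)$. The pigeonhole step has just given $\alpha(u'_0\cdots u'_i)=\alpha(u'_0\cdots u'_j)$, i.e.\ $d=dc$. A straightforward induction on $n$ then yields $d=dc^{\,n}$ for every $n\geq 1$, and taking $n=\omega(S_+)$ gives $d=dc^{\omega}$, where $c^{\omega}$ is idempotent. Since $\alpha$ is a semigroup morphism, $c^{\omega}=\alpha((u'_{i+1}\cdots u'_j)^{\omega})$, so $dc^{\omega}=\alpha(u'_0\cdots u'_i\,(u'_{i+1}\cdots u'_j)^{\omega})$, which is exactly the asserted identity.

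One point deserves care: the identity must be anchored on the whole prefix $u'_0\cdots u'_i$, not on $u'_1\cdots u'_i$. All that the pigeonhole argument controls is the image of $u'_0\cdots u'_i$ (it is this element that is stable under right multiplication by $c$); the image of $u'_1\cdots u'_i$ need not be stable, since $\alpha(u'_0)$ could a priori act as a left zero, and one can check on small transformation semigroups that $ab=abc$ does not entail $b=bc^{\omega}$. This anchoring also matches the analogous step of the \sicd proof, where one derives $r_2=r_2(s_2)^{\omega}$ with $r_2=\alpha_+(u'_0\cdots u'_i)$; I therefore read the subscript in the statement as being $u'_0\cdots u'_i$.

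It is worth recording how this fact will serve the surrounding argument (the proof of Lemma~\ref{lem:sig3inter}). Combined with the pigeonhole equality $\alpha(u'_0\cdots u'_i)\alpha(u'_{i+1}\cdots u'_j)=\alpha(u'_0\cdots u'_i)$, it shows that replacing the single occurrence of the loop factor $u'_{i+1}\cdots u'_j$ by its idempotent power $(u'_{i+1}\cdots u'_j)^{\omega}$ does not change the image of $w'$ under $\alpha$. This is precisely what lets one write $q'$ in the ``lasso'' shape $\alpha(w'_0(w'_1)^{\omega}w'_\infty)$ required by Lemma~\ref{lem:sig3inter}, with $w'_0=u'_0\cdots u'_i$, $w'_1=u'_{i+1}\cdots u'_j$, and $w'_\infty$ the remaining tail $u'_{j+1}\cdots u'_{p-1}u'_\infty$.

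There is essentially no obstacle: once the pigeonhole equality is in hand, the whole fact is a finite‑semigroup computation using only associativity of the product and the definition of $(-)^{\omega}$. The only thing to watch is the index bookkeeping discussed above, together with keeping track of which factors have image $e$ versus the generic $u'_0$ factor, so that the resulting $w'_0,w'_1,w'_\infty$ really reconstruct $w'$ up to the value $q'=\alpha(w')$.
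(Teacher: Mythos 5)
Your proof is correct and matches the paper's own (implicit) argument: the fact is obtained exactly by iterating the pigeonhole equality $\alpha(u'_0\cdots u'_{i})=\alpha(u'_0\cdots u'_{i})\,\alpha(u'_{i+1}\cdots u'_{j})$ up to the idempotent power $\omega(S_+)$, just as in the analogous step $r_2=r_2(s_2)^\omega$ of the \sicd proof. Your reading of the subscript as $u'_0\cdots u'_{i}$ (rather than the $u'_1\cdots u'_{i}$ printed in the statement) is the intended one: it is the version the pigeonhole argument actually yields, and the one used later through $w'_0=u'_0\cdots u'_{i}$.
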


We may now define the words $w_0,w_1,w'_0,w'_1 \in A^+$ and the \iword
$w'_\infty \in A^\infty$ in the lemma. We set,
\[
  \begin{array}{lll}
    w_0  & = & u_0 \cdots u_i \\
    w'_0 & = & u'_0 \cdots u'_i
  \end{array} \qquad
  \begin{array}{lll}
    w_1  & = & u_{i+1} \cdots u_{j} \\
    w'_1 & = & u'_{i+1} \cdots u'_{j} \\
  \end{array} \qquad
  \begin{array}{lll}
    w'_\infty & = & u'_{i+1} \cdots u'_{p-1}u'_\infty
  \end{array}
\]
It now remains to verify that these choices satisfies the conditions of
the lemma. That $w_0 \sieq{\ell'}{3} w'_0$ and $w_1
\sieq{\ell'}{3} w'_1$ is immediate from Fact~\ref{fct:ineq2} and
Lemma~\ref{lem:EFmult}.

We now prove that $w'_\infty \sieq{\ell'}{2} (w_1)^\infty$. We know from
Fact~\ref{fct:ineq2} and Lemma~\ref{lem:EFmult} that
$u_{i+1}u_{i+2} \dots \sieq{\ell'}{3} w'_\infty$. Moreover, we know by
construction and the choice of the morphism $\gamma$ that
$w_1 = u_{i+1} \cdots u_{j} \bceq{\ell'}{3} u_{h}$ for all $h \geq 1$. In
particular, this means that $w_1 \sieq{\ell'}{3} u_h$ for all $h \geq 1$.
Using Lemma~\ref{lem:EFmult} again, we obtain that
$(w_1)^\infty \sieq{\ell'}{3} u_{i+1}u_{i+2}\cdots$ and by transitivity that
$(w_1)^\infty \sieq{\ell'}{3} w'_\infty$. Finally, we obtain that
$w'_\infty \sieq{\ell'}{2} (w_1)^\infty$ using the last item in
Fact~\ref{fct:refine}.

It remains to prove that  $q  = \alpha(w_0(w_1)^\infty)$ and $q' =
\alpha(w'_0(w'_1)^\omega w'_\infty)$. By definition,
$\alpha(w_0(w_1)^\infty) = \alpha(w_0) e^\infty = \alpha(u_0u_1u_2
\cdots) = \alpha(w) = q$. Moreover, $w'_0w'_\infty = w'$ by
definition. Therefore, $\alpha(w'_0w'_\infty) = \alpha(w') =
q'$. Finally, since $\alpha(w'_0) = \alpha(w'_0(w'_1)^\omega)$
(this is Fact~\ref{fct:loop2}), we obtain that
$\alpha(w'_0(w'_1)^\omega w'_\infty) = q'$ which terminates the proof
of Lemma~\ref{lem:sig3inter}.

This finishes the first step in the proof of
Lemma~\ref{lem:sig3witness}. We present the second and last step in
the next subsection.

\subsection{Second Step in the Proof of Lemma~\ref{lem:sig3witness}}

We finish the proof of Lemma~\ref{lem:sig3witness} by using the words
and \iwords obtained from Lemma~\ref{lem:sig3inter} to construct the
desired words $u_2$, $u_3$, $x_2$, $x_3$, $y_1$, $y_2$, $y_3$, $z_1$, $z_2$ and $z_3$.

Let $w_0,w_1,w'_0,w'_1 \in A^+$ and $w'_\infty \in A^\infty$ that
satisfy the conditions of Lemma~\ref{lem:sig3inter}. We begin by using
the hypothesis $w'_\infty \sieq{\ell'}{2} (w_1)^\infty$ to decompose
$w'_\infty$ and $(w_1)^\infty$. We apply Lemma~\ref{lem:ramsey} to
$w'_\infty$ (with $\alpha_+$ as the morphism) to construct a
decomposition $w'_\infty = v'_0v'_1v'_2 \cdots$ such that
$\alpha(v'_1) = \alpha(v'_2) = \cdots$ is an idempotent $f$ of
$S_+$.

\begin{fact} \label{fct:link3}
  For any $h \geq 0$, $\alpha(v'_0 \cdots v'_h)f^\infty = \alpha(w'_\infty)$.
\end{fact}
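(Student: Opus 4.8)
The plan is to unfold the definition of the decomposition $w'_\infty = v'_0 v'_1 v'_2 \cdots$ and use associativity of the products in the \isemi together with the idempotency of $f$. Recall that this decomposition was obtained by applying Lemma~\ref{lem:ramsey} to $w'_\infty$ with $\alpha_+$ as the morphism, so that $\alpha(v'_j) = f$ for all $j \geq 1$, with $f$ idempotent in $S_+$. Since $\alpha$ commutes with finite and infinite concatenation, I would first observe
\[
  \alpha(w'_\infty) = \alpha(v'_0)\,\alpha(v'_1)\,\alpha(v'_2)\cdots = \alpha(v'_0)\,f^\infty .
\]

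Next I would record the two elementary facts that do all the work. First, $f f^\infty = f^\infty$: this is immediate from associativity and $f^2 = f$ (writing $f^\infty = fff\cdots$). Consequently $f^n f^\infty = f^\infty$ for every $n \geq 0$. Second, for $h \geq 1$ the factor $v'_1 \cdots v'_h$ has image $f^h = f$ under $\alpha$, so $\alpha(v'_0 \cdots v'_h) = \alpha(v'_0)\,f$; and for $h = 0$ we trivially have $\alpha(v'_0 \cdots v'_h) = \alpha(v'_0)$.

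Combining these, for every $h \geq 0$ we get
\[
  \alpha(v'_0 \cdots v'_h)\,f^\infty = \alpha(v'_0)\,f^\infty = \alpha(w'_\infty),
\]
which is exactly the claim. There is no genuine obstacle here; the only point needing a little care is to invoke the correct associativity law of the \isemi — the one allowing a finite prefix to be absorbed into an infinite product — and to note the degenerate case $h = 0$ separately.
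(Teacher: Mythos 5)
Your proof is correct and is exactly the argument the paper leaves implicit (the fact is stated without proof as an immediate consequence of the Ramsey decomposition): unfold $\alpha(w'_\infty)=\alpha(v'_0)f^\infty$ via the morphism and associativity properties of the \isemi, and absorb the finite block $\alpha(v'_1\cdots v'_h)=f$ into $f^\infty$. The only minor remark is that $f\,f^\infty=f^\infty$ already follows from mixed/infinite associativity alone; idempotency of $f$ is needed only to get $\alpha(v'_1\cdots v'_h)=f$, so your invocation of $f^2=f$ in the first step is harmless but redundant.
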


An other property that we will use is that since $\alpha$ is alphabet
compatible, for all $h \geq 1$ the $v'_h$ have the same alphabet. We
call $B$ this alphabet.

We now use this decomposition of $w'_\infty$ together with the fact
that $w'_\infty \sieq{\ell'}{2} (w_1)^\infty$ to decompose
$(w_1)^\infty$ as well. We apply Lemma~\ref{lem:EFdecomp} $p^2$ times
to the \iwords $w'_\infty$ and $(w_1)^\infty$. This yields a decomposition
$(w_1)^\infty = v_0 \cdots v_{p^2-1}v_\infty$ of $(w_1)^\infty$
($v_0,v_1,\dots,v_{p^2-1} \in A^+$ and $v_\infty \in A^\infty$) which
satisfies the following fact (recall that $\ell' = \ell_+ + p^2$).

\begin{fact} \label{fct:ineq3}
  For all $h \leq p^2-1$, $v'_h \sieq{\ell_+}{2} v_h$ and
  $v'_{p^2}v'_{p^2+1} \cdots \sieq{\ell_+}{2} v_\infty$.
\end{fact}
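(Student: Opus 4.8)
\textbf{Proof strategy for Fact~\ref{fct:ineq3}.} The plan is to obtain the fact by iterating the Decomposition Lemma (Lemma~\ref{lem:EFdecomp}), exactly as Facts~\ref{fct:ineq1} and~\ref{fct:ineq2} were obtained earlier. The data we have is the relation $w'_\infty \sieq{\ell'}{2} (w_1)^\infty$ provided by Lemma~\ref{lem:sig3inter} (recall $\ell' = \ell_+ + p^2$) together with the Ramsey factorization $w'_\infty = v'_0 v'_1 v'_2 \cdots$ fixed above.

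First I would peel the factors $v'_0, v'_1, \dots$ off the left-hand \iword one at a time. Viewing $w'_\infty = v'_0 \cdot (v'_1 v'_2 \cdots)$ as a decomposition into a finite word and an \iword, Lemma~\ref{lem:EFdecomp} applied to $w'_\infty \sieq{\ell'}{2} (w_1)^\infty$ yields a matching decomposition $(w_1)^\infty = v_0 \cdot r_1$ with $v_0 \in A^+$, $r_1 \in A^\infty$, $v'_0 \sieq{\ell'-1}{2} v_0$ and $v'_1 v'_2 \cdots \sieq{\ell'-1}{2} r_1$ (the first factor of a decomposition of an \iword is automatically a word in $A^+$, so no separate argument is needed for $v_0$). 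Applying the same step to $v'_1 v'_2 \cdots \sieq{\ell'-1}{2} r_1$, then to the resulting tail, and so on --- $p^2$ times in total --- produces $v_0, v_1, \dots, v_{p^2-1} \in A^+$ and $v_\infty \in A^\infty$ with $(w_1)^\infty = v_0 v_1 \cdots v_{p^2-1} v_\infty$ and such that $v'_h \sieq{\ell'-(h+1)}{2} v_h$ for every $h \le p^2 - 1$ and $v'_{p^2} v'_{p^2+1} \cdots \sieq{\ell'-p^2}{2} v_\infty$. Since $\ell' - p^2 = \ell_+$ and the preorders coarsen as the rank decreases (Fact~\ref{fct:refine}), all of these relations hold at rank $\ell_+$, which is precisely the statement of Fact~\ref{fct:ineq3}.

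I do not expect any genuine obstacle here: the only point requiring attention is the bookkeeping of quantifier ranks --- each of the $p^2$ applications of Lemma~\ref{lem:EFdecomp} costs one unit of rank, so one needs $\ell' \ge \ell_+ + p^2$, which is exactly why $\ell'$ was defined to be $\ell_+ + p^2$. In the final text this step can be dispatched in a single sentence, mirroring word for word the derivations of Facts~\ref{fct:ineq1} and~\ref{fct:ineq2}, with $p^2$ iterations in place of $p$.
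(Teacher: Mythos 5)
Your proposal is correct and is exactly the paper's argument: the paper obtains Fact~\ref{fct:ineq3} by applying Lemma~\ref{lem:EFdecomp} $p^2$ times to $w'_\infty \sieq{\ell'}{2} (w_1)^\infty$, peeling off the Ramsey factors $v'_0,\dots,v'_{p^2-1}$ of $w'_\infty$ and losing one unit of rank per application, so that $\ell' = \ell_+ + p^2$ leaves exactly rank $\ell_+$ (with Fact~\ref{fct:refine} coarsening the earlier factors down to that rank). Your rank bookkeeping and the direction of the decomposition (decomposing the left-hand \iword and inducing a matching decomposition of $(w_1)^\infty$) both match the paper.
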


Observe that since $v_0 \cdots v_{p^2-1}$ is a finite prefix of
$(w_1)^\infty$, there exists a number $n \geq 1$ such that it is a
prefix of $(w_1)^n$. Therefore, there exists $v_{p^2} \in A^+$ such
that $v_0 \cdots v_{p^2-1}v_{p^2} = (w_1)^n$. Furthermore, we know
from the second item in Lemma~\ref{lem:sig3inter} and
Lemma~\ref{lem:EFmult} that,
\[
  v_0 \cdots v_{p^2-1}v_{p^2} = (w_1)^n \sieq{\ell'}{3} (w'_1)^n
\]
Therefore, we may apply Lemma~\ref{lem:EFdecomp} $p^2$ times to
obtain a decomposition $(w'_1)^n = v''_0 \cdots v''_{p^2-1}v''_{p^2}$
of $(w'_1)^n$ that satisfies the following fact (recall that $\ell' =
\ell_+ + p^2$).
\begin{fact} \label{fct:ineq4}
  For all $h \leq p^2$, $v_h \sieq{\ell_+}{3} v''_h$.
\end{fact}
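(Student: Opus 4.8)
The plan is to obtain Fact~\ref{fct:ineq4} by a straightforward iteration of the Decomposition Lemma (Lemma~\ref{lem:EFdecomp}), following the same pattern already used to derive Facts~\ref{fct:ineq1},~\ref{fct:ineq2} and~\ref{fct:ineq3}. The starting point is the relation $(w_1)^n = v_0v_1\cdots v_{p^2-1}v_{p^2} \sieq{\ell'}{3} (w'_1)^n$ established just above, together with the distinguished factorisation of $(w_1)^n$ into the $p^2+1$ pieces $v_0,\dots,v_{p^2}$.

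First I would peel off one factor at a time. Applying Lemma~\ref{lem:EFdecomp} to the coarse splitting $(w_1)^n = v_0\cdot(v_1\cdots v_{p^2})$ yields a splitting $(w'_1)^n = v''_0\cdot r_1$ with $v_0 \sieq{\ell'-1}{3} v''_0$ and $v_1\cdots v_{p^2} \sieq{\ell'-1}{3} r_1$. Iterating on the tail---split $v_1\cdots v_{p^2} = v_1\cdot(v_2\cdots v_{p^2})$ to get $r_1 = v''_1\cdot r_2$ with $v_1 \sieq{\ell'-2}{3} v''_1$, and so on---after the $p^2$ cuts of the factorisation $v_0\mid v_1\mid\cdots\mid v_{p^2}$ one reaches a factorisation $(w'_1)^n = v''_0v''_1\cdots v''_{p^2}$ in which $v_h \sieq{\ell'-h-1}{3} v''_h$ for $h < p^2$ and $v_{p^2} \sieq{\ell'-p^2}{3} v''_{p^2}$. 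Since the preorders $\sieq{k}{3}$ become finer as $k$ grows (Fact~\ref{fct:refine}(1)), each of these relations in particular holds at the uniform rank $\ell'-p^2$; and as $\ell' = \ell_+ + p^2$ by the choice of constants, this is exactly $v_h \sieq{\ell_+}{3} v''_h$ for every $h \leq p^2$, which is the claim.

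There is no real obstacle here: the entire argument is bookkeeping of quantifier ranks. The single point that needs care is to match the number of invocations of Lemma~\ref{lem:EFdecomp}---namely $p^2$, one per cut in the factorisation of $(w_1)^n$---with the rank gap $\ell' - \ell_+ = p^2$ between the rank at which the hypothesis is available and the rank required in the conclusion; this is precisely why the constant $\ell'$ was set to $\ell_+ + p^2$ when the constants were chosen. Note also that, although Lemma~\ref{lem:EFdecomp} is stated for two words or two \iwords, it is invoked at this step only on pairs of finite words ($(w_1)^n$ and $(w'_1)^n$ are finite), so nothing specific to \iwords enters.
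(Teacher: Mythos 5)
Your proof is correct and follows exactly the route the paper takes (the paper simply states that $p^2$ applications of Lemma~\ref{lem:EFdecomp} to $(w_1)^n \sieq{\ell'}{3} (w'_1)^n$ yield the fact, and you have filled in the rank bookkeeping correctly, with $\ell'-p^2=\ell_+$ recovered via Fact~\ref{fct:refine}). Nothing further is needed.
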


Finally, since $p = |S_+|+1$, we have $p^2 \geq |S_+|^2+1$. Therefore,
we may apply the pigeonhole principle to obtain $i,j$ such
that $0 \leq i < j \leq p^2 -1$, $\alpha(v_0\cdots v_{i}) =
\alpha(v_0 \cdots v_j)$ and $\alpha(v''_0\cdots v''_{i}) =
\alpha(v''_0 \cdots v''_j)$. In particular, we obtain the following
fact.
\begin{fact}\label{fct:loop3}
  We have
  \[
    \begin{array}{lll}
      \alpha(v_0 \cdots v_{i}) & = & \alpha(v_0\cdots v_{i}(v_{i+1}\cdots v_{j})^\omega)\\
      \alpha(v''_0 \cdots v''_{i}) & = & \alpha(v''_0\cdots v''_{i}(v''_{i+1}\cdots v''_{j})^\omega)
    \end{array}
  \]
\end{fact}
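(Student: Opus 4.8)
The plan is to reuse the standard stabilization argument that already appears in the proof of Fact~\ref{fct:loop2} and in the completeness proof of Proposition~\ref{prop:sig2main}. Since $i < j$, we may factor $v_0 \cdots v_j = (v_0 \cdots v_i)(v_{i+1} \cdots v_j)$, so the pigeonhole equality $\alpha(v_0 \cdots v_i) = \alpha(v_0 \cdots v_j)$ rewrites as
\[
  \alpha(v_0 \cdots v_i) = \alpha(v_0 \cdots v_i)\,\alpha(v_{i+1} \cdots v_j),
\]
that is, $\alpha(v_0 \cdots v_i)$ is fixed by right multiplication by $\alpha(v_{i+1} \cdots v_j)$. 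Iterating this equality $m$ times gives $\alpha(v_0 \cdots v_i) = \alpha(v_0 \cdots v_i)\,\alpha(v_{i+1} \cdots v_j)^m$ for every $m \geq 1$; specializing to $m = \omega$ (the idempotent power in the finite semigroup $S_+$) yields $\alpha(v_0 \cdots v_i) = \alpha\bigl(v_0 \cdots v_i\,(v_{i+1} \cdots v_j)^\omega\bigr)$, which is the first displayed equality of the fact.

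The second equality is obtained in exactly the same way, starting instead from the other pigeonhole equality $\alpha(v''_0 \cdots v''_i) = \alpha(v''_0 \cdots v''_j)$ and replacing $v$ by $v''$ throughout. There is no genuine obstacle here: the fact is a purely routine algebraic consequence of the two equalities produced by the pigeonhole principle in the preceding paragraph, using only associativity of the product in $S_+$ and the finiteness of $S_+$ (so that the $\omega$-power is well defined). The only thing worth stressing is that these "looping" identities are exactly the form needed in the next subsection to insert $\omega$-powers of the blocks $v_{i+1}\cdots v_j$ and $v''_{i+1}\cdots v''_j$ without changing the $\alpha$-images of $w_0(w_1)^\infty$ and $w'_0(w'_1)^\omega w'_\infty$, so it is convenient to record them separately as done here.
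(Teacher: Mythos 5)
Your argument is correct and matches the paper's intent exactly: the paper treats Fact~\ref{fct:loop3} as an immediate consequence of the two pigeonhole equalities established just before it, via the same ``stable under right multiplication, hence under the $\omega$-power'' iteration already used for Fact~\ref{fct:loop2}. Nothing is missing.
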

We are now ready to construct the words $u_1,u_2,x_2,x_3,y_1,y_2,y_3,z_1,z_2$
and $z_3$ in $A^+$ from Lemma~\ref{lem:sig3witness}. We set,
\[
  \begin{array}{lll}
    x_2 & = & w_0 \\
    x_3 & = & w'_0
  \end{array} \qquad
  \begin{array}{lll}
    y_1 & = & v'_0\cdots v'_i \\
    y_2 & = & v_0\cdots v_i \\
    y_3 & = & v''_{0} \cdots v''_{i}
  \end{array}
  \qquad
  \begin{array}{lll}
    z_1 & = & v'_{i+1}\cdots v'_{j} \\
    z_2 & = & v_{i+1}\cdots v_{j} \\
    z_3 & = & v''_{i+1}\cdots v''_{j}
  \end{array} \qquad
  \begin{array}{lll}
    u_2 & = & v_{i+1} \cdots v_{p^2} \\
    u_3 & = & v''_{i+1} \cdots v''_{p^2}
  \end{array}
\]
It remains to verify that these words satisfy the conditions of the
lemma. We begin with the inequalities. That $x_2 \sieq{\ell_+}{3}
x_3$ is immediate by choice of $w^{}_0,w'_0$ in
Lemma~\ref{lem:sig3inter}. The inequalities $y_1 \sieq{\ell_+}{2} y_2
\sieq{\ell_+}{3} y_3$, $z_1 \sieq{\ell_+}{2} z_2 \sieq{\ell_+}{3}
z_3$ and $u_2 \sieq{\ell_+}{3} u_3$ are immediate from
Fact~\ref{fct:ineq3}, Fact~\ref{fct:ineq4} and
Lemma~\ref{lem:EFmult}.

Let us now prove that  $\content{u_3y_1} = \content{z_1} = B$
(recall that $B$ is the shared alphabet of all $v'_h$ for $h \geq 1$).
Since $y_1$ is a product of $v'_h$ for $h \geq 1$, it is immediate
that $\content{y_1} = B$. Furthermore, using the same argument, it
is also immediate that $\content{z_1} = B$. Therefore, it suffices
to prove that $\content{u_3} \subseteq B$ to conclude that
$\content{u_3y_1} = B=\content{z_1}$. We know that $u_2 \sieq{\ell_+}{3} u_3$,
therefore $\content{u_3} = \content{u_2}$ ($\ell_+ \geq 2$ and the
alphabet may be tested with a \sict sentence of rank $2$). Moreover,
by definition, $u_2$ is a suffix of $(w_1)^p$. Therefore,
$\content{u_2} \subseteq \content{(w_1)^p} = \content{w_1}$. Finally,
we know from Fact~\ref{fct:ineq3} that $v_{p^2}v_{p^2+1} \cdots
\sieq{\ell_+}{2} v_\infty$. Since $v_\infty$ is a suffix of
$(w_1)^\infty$, we have $\content{v_\infty} = \content{w_1} = B$ and
$\content{u_3} = \content{u_2} \subseteq B$.

We finish with the proof of the last item in
Lemma~\ref{lem:sig3witness}: $q = \alpha(x_2(y_2(z_2)^\omega
u_2)^\infty)$ and $q' = \alpha(x_3(y_3(z_3)^\omega u_2)^\omega
y_1(z_1)^\infty)$. First, by definition, we have $x_2 = w_0$ and
$x_3 = w'_0$. Furthermore, we know from Fact~\ref{fct:loop3} that
\[
  \alpha(y_2(z_2)^\omega u_2) = \alpha(y_2u_2) = \alpha((w_1)^n) \quad \text{and}
  \quad \alpha(y_3(z_3)^\omega u_3) = \alpha(y_3u_3) = \alpha((w'_1)^n)
\]
Finally, we obtain from Fact~\ref{fct:link3} that
$\alpha(y_1(z_1)^\infty) = \alpha(w'_\infty)$. By combining all of
this, we obtain,
\[
  \begin{array}{lllll}
    \alpha(x_2(y_2(z_2)^\omega u_2)^\infty) & = &
                                                  \alpha(w_0((w_1)^n)^\infty) & = & \alpha(w_0(w_1)^\infty)\\
    \alpha(x_3(y_3(z_3)^\omega u_2)^\omega
    y_1(z_1)^\infty) & = & \alpha(w'_0((w'_1)^n)^\omega w'_\infty) &= &\alpha(w'_0(w'_1)^\omega w'_\infty)
  \end{array}
\]
We conclude that $q = \alpha(x_2(y_2(z_2)^\omega
u_2)^\infty)$ and $q' = \alpha(x_3(y_3(z_3)^\omega u_2)^\omega
y_1(z_1)^\infty)$ from the third item in Lemma~\ref{lem:sig3inter}.
This terminates the proof of Lemma~\ref{lem:sig3witness}.\qed

\section{Conclusion}
\label{sec:conc}
We proved that for \ilangs, the separation problem is decidable for
\sicd and \sict and the membership problem is decidable for \bscd.
Note that using a theorem of~\cite{pzsucc2}, these results may be
lifted to the variants of the same logics whose signature has been
enriched with a predicate ``$+1$'' that is interpreted as the
successor relation. This means that over \iwords, separation is
decidable for \siwsd and \siwst and membership is decidable for
\bswsd.

A gap remains between languages and \ilangs: we leave open the case
of \sic{4}-membership for \ilangs while it is known to be decidable
for languages~\cite{pseps3}.  The language algorithm was based on
two results: $1)$ the decidability of \sict-separation~\cite{pseps3}
and $2)$ an effective reduction of \sic{i+1}-membership to
\sici-separation~\cite{pzqalt} (which is generic for all $i \geq 1$).
In the \ilang setting, we are missing the second result and it is not
clear if a similar reduction exists.

\bibliographystyle{abbrv}

\end{document}